\newcommand{\plb}{\textcolor{black}}
\newcommand{\ot}{\otimes}
\newcommand{\diag}{\mathrm{diag}}
\newcommand\numberthis{\addtocounter{equation}{1}\tag{\theequation}}
\newtheorem{theorem}{Theorem}
\newtheorem{construction}{Construction}
\newtheorem{lemma}{Lemma}
\newtheorem{corollary}{Corollary}
\newtheorem{definition}{Definition}
\theoremstyle{definition}
\newtheorem{example}{Example}
\begin{document}
\title{Second law of thermodynamics for batteries with vacuum state}

\author{Patryk Lipka-Bartosik}
\email{patryk.lipka.bartosik@gmail.com}
\affiliation{
 H. H. Wills Physics Laboratory, University of Bristol, Tyndall Avenue, Bristol, BS8 1TL, United Kingdom
}
\affiliation{Institute of Theoretical Physics and Astrophysics,
National Quantum Information Centre, Faculty of Mathematics, Physics and Informatics,
University of Gdańsk, Wita Stwosza 57, 80-308 Gdańsk, Poland}
\author{Paweł Mazurek}
\affiliation{Institute of Theoretical Physics and Astrophysics,
National Quantum Information Centre, Faculty of Mathematics, Physics and Informatics,
University of Gdańsk, Wita Stwosza 57, 80-308 Gdańsk, Poland}
\affiliation{International  Centre  for  Theory  of  Quantum  Technologies,University  of  Gdansk,  Wita  Stwosza  63,  80-308  Gdansk,  Poland}
\author{Michał Horodecki}
\affiliation{International  Centre  for  Theory  of  Quantum  Technologies,University  of  Gdansk,  Wita  Stwosza  63,  80-308  Gdansk,  Poland}

\date{1 February 2021}

\begin{abstract}
In stochastic thermodynamics work is a random variable whose average is bounded by the change in the free energy of the system. In most treatments, however, the work reservoir that absorbs this change is either tacitly assumed or modelled using unphysical systems with unbounded Hamiltonians (i.e. the ideal weight). In this work we describe the consequences of introducing the ground state of the battery and hence --- of breaking its translational symmetry. The most striking consequence of this shift is the fact that the Jarzynski identity is replaced by a family of inequalities. Using these inequalities we obtain corrections to the second law of thermodynamics which vanish exponentially with the distance of the initial state of the battery to the bottom of its spectrum. Finally, we study an exemplary thermal operation which realizes the approximate Landauer erasure and demonstrate the consequences which arise when the ground state of the battery is explicitly introduced. In particular, we show that occupation of the vacuum state of any physical battery sets a lower bound on fluctuations of work, while batteries without vacuum state allow for fluctuation-free erasure.
\end{abstract}

\maketitle
\section{Introduction} 
\label{intro}
The second law of thermodynamics sets limits for all physical processes. It determines which state transformations are possible, regardless of the microscopic details of the process. From a practical point of view it imposes fundamental restrictions on the amount of average work $\langle w \rangle $ performed by the system which evolves from the state $\rho$ towards $\rho'$ and interacts with a thermal reservoir at a fixed temperature, i.e.:
\begin{align}
 \label{eq:mt1}
\langle w \rangle \leq F(\rho) - F(\rho') = -\Delta F,
\end{align}
where $F(\rho)$ is the system's free energy which will be defined later. The second law of thermodynamics is a statistical law and as such, it governs how thermodynamic systems behave when averaged over many realizations of the experiment. This information is relevant for a macroscopic observer, however, it does not provide much information about the microscopic details of the occuring process.

Recent developments in experimental techniques allow for manipulating and measuring systems at the nanoscale level \cite{Koski2014,Koski2015,Chida2017,Camati2016,Peterson2016,Zanin2019}. In order to take full advantage of these techniques it is crucial to understand how thermodynamic laws translate into the non-equilibrium domain, where fluctuations of thermodynamic quantities begin to play a significant role and averaged quantities are no longer enough to characterize their thermodynamic behaviour. This motivates extending thermodynamic framework to systems driven out of equilibrium, a setting which has been extensively studied in the recent literature \cite{Jarzynski1997,Jarzynski1997a,Alhambra2016,Richens2016,Ito2018,Debarba2019,Rao2018,lobejko2020tight,purves2020channels}.

Fortunately, there exist much stronger constraints on the possible distributions of thermodynamic work than the second law of thermodynamics (\ref{eq:mt1}). These constraints are often referred to as ``fluctuation relations''. Arguably the best-known relation from this family is the \emph{Jarzynski equality} which implies that the probability of extracting work per particle larger than the free energy difference (\ref{eq:mt1}) is exponentially suppressed with the number of particles in the system \cite{Jarzynski1997}.The Jarzynski equality has been thoroughly studied both theoretically \cite{Cohen2004,Palmieri2007} and experimentally \cite{Douarche2005,An2015,Hoang2018} and extended to the quantum case \cite{kurchan2000quantum,tasaki2000jarzynski,Alhambra2016}.

A standard and often implicitly accepted assumption required for the Jarzynski equality to hold is the existence of a perfect work receiver. This allows to  treat the whole free energy difference as thermodynamic work. However, in some cases the existence of such a perfect receiver is difficult to motivate --- especially in the quantum regime, where the device  has to be modelled itself as a quantum system. The common approach is to consider the explicit work receiver (or the ``battery'') as an ancillary system which interacts with the working body and the heat bath during the transformation. In such situations the battery device is often modelled using either a classical Hamiltonian (e.g. \cite{PhysRevX.3.041003}) or a Hamiltonian unbounded from below (e.g. \cite{Alhambra2016}). Both of these approaches, although perfectly valid in the classical regime, cannot be justified when battery's energy is close to its ground state. Therefore, in order to understand the effects arising in small-scale thermodynamics, it is important to understand how the presence of the battery ground state influences thermodynamic protocols. 

This approach is fundamental to the modern (resource-theoretic) program of thermodynamics which attempts to systematically account for all possible resources involved in the process. One then studies their conversions under a restricted set of physically motivated interactions between the system, the heat bath and the battery device. In the most general case this interaction is modelled using an energy-preserving unitary acting on these three subsystems. Thermodynamic work is then defined by specifying an ``explicit battery model'', which amounts to ($i$) specifying the Hamiltonian of the ancillary battery system, ($ii$) the work quantifier (e.g. average energy) and ($iii$) the allowed interactions between the battery, the system and the bath (or restrictions on the allowable battery states). Two most widely used models in the literature involve a qubit battery (wit) where the battery is in a pure state at all times and work is defined using average energy, and an ideal weight battery. In this model work is also defined using average energy of the battery but the global unitary is now assumed to commute with the translations on the battery, leading to the notion of \emph{translational invariance} (TI). This is, however, only possible when the energy spectrum of the battery ranges over the whole real line, i.e. the battery is unbounded from both above and below. The ideal weight model is a common way of defining work in the quantum regime which was considered for the first time in \cite{Skrzypczyk2014} and then utilized to prove several fundamental results in the field of quantum thermodynamics \cite{Alhambra2016,Aberg2014,Aberg2018,Masanes2017,Richens2016}. Importantly, this additional assumption has powerful physical implications: it assures that work satisfies the second law of thermodynamics \cite{Masanes2017} and furthermore leads to the Jarzynski equality \cite{Alhambra2016}.

Naturally, one can argue that the ideal weight is not a realistic model of a battery as it does not have a ground state energy. On the other hand, one can also argue that if the transformation is performed sufficiently far from the vacuum, the evolution assisted with a battery with a ground state (which we will refer to as the \emph{physical battery}) should be equivalent to the evolution assisted with the ideal weight \cite{Aberg2014,Faist2016} and should reproduce the same quantitative results (i.e. work distributions) as a physical battery. However, the Nature we observe often does not follow this scheme; the existence of the ground state can be very often perceived, no matter how far we are from it. A basic question then appears: does the existence of the vacuum state of the battery have any implications for thermodynamic processes? 

Here we study this question in detail. We examine two different regimes of battery operation: the \emph{high-energy} regime in which the average energy of the battery is far from the ground state energy (i.e. population of the ground state is small) and \emph{low-energy} regime in which the occupation of the ground state cannot be ignored. While we corroborate the intuition that the battery with vacuum essentially behaves like the ideal weight in the high energy regime, we investigate quantitatively different predictions to which it leads in the low-energy regime, in particular with respect to Jarzynski equality, second law of thermodynamics and fluctuations of work.

In the first part of the paper we introduce a generalization of the translational invariance property which we term \emph{effective translational invariance} (ETI). This allows us to consider thermodynamic processes with broken translational symmetry and hence provides a convenient tool for studying quantum thermodynamics using arbitrary physical batteries.

We further show that ETI in the framework of thermal operations (TO) implies a generalization of the standard Jarzynski equality, i.e. a family of inequalities which impose looser constraints on the allowable work distributions than the standard Jarzynski equality. This is our first main result.

Using these new inequalities we show that one can still recover the second law of thermodynamics (\ref{eq:mt1}) in an approximate form, i.e. with correction terms decaying exponentially fast with the distance to the ground state. This allows to identify heat contributions to the average energy change when the battery is bounded from below and shows that global translational symmetry is not necessary to satisfy the second law of thermodynamics. Importantly, these deviations are related to the average work rather than just work and hence they are of a completely different character than the deviations reported by standard fluctuation theorems. This is the second main contribution of our paper. 

In the next part of the paper we show that our model correctly reproduces the single-shot results on the work of formation originally derived using qubit as the battery system \cite{Horodecki2013}. This answers an open problem from the field of quantum thermodynamics by showing that the notions of \emph{single-shot} and \emph{fluctuating} work can be both properly defined and studied for a battery with a ground state. This is the third main result of the paper.

In the last part of the paper we study a paradigmatic example of Landauer erasure and compare how the two battery models (the ideal weight battery and the harmonic oscillator with vacuum) behave when assisting this type of transformation. Notice that the second case contrasts with the classical treatment of work as the battery can no longer supply arbitrary values of energy when it operates close to its vacuum state. Because of this both batteries experience quantitative differences in their behavior in the low-energy regime. We show that for any battery model the very existence of the vacuum state implies a lower bound on the size of work fluctuations whenever energy is taken from the battery. In this way the occupation of the vacuum state is the fundamental factor which forbids performing Landauer erasure with deterministic (or arbitrarily concentrated) work distribution. We further show that the ideal weight violates this bound and hence allows for transformations which cannot be achieved in the low-energy regime.

We finish the paper with a short summary and present several related open problems which we believe to be relevant for the field of quantum thermodynamics.

\section{Framework}
\label{framework}
Throughout we adapt a resource-theoretic approach to quantum thermodynamics called \emph{thermal operations} \cite{Ruch1976,Horodecki2013,Richens2017,Brandao2013,Lostaglio2016,Perry2018,Faist2015minimal,Masanes2017,Goold2015,Masanes2017,Faist2015gibbs,Lostaglio2015,Korzekwa2016,Horodecki2014,Kwon2018clock,Mueller2018,obejko2020thermodynamics,aguilar2020thermal,boes2020variance}. This is a well-established framework for studying thermodynamic processes in the quantum regime which gives the experimenter the most freedom in manipulating systems without access to external resources like coherence or asymmetry \cite{Aberg2014,Lostaglio2018,Marvian2018no,Marvian2016quantum,Gour2018quantum}, entanglement \cite{Chitambar2016,Contreras2018,Bera2017laws} or conserved quantities \cite{Lostaglio2017,Mingo2018,Guryanova2016}. It is thus a convenient class of operations for deriving fundamental thermodynamic limitations. Moreover, these operations are also experimentally achievable as they can be realized with a very coarse-grained control \cite{Perry2018}. A very readable and comprehensive introductions to this field of quantum thermodynamics can be found in \cite{Lostaglio2019} and \cite{binder2019thermodynamics}. See also \cite{Guarnieri_2019} for an attempt to reconcile this framework with a more standard approach of stochastic thermodynamics.   

The setting consists of a system $\rm S$ with Hamiltonian $H_{\rm S}$ and an infinite heat bath $\rm{B}$ with Hamiltonian $H_{\rm{B}}$ satisfying a few reasonable assumptions (see \cite{Horodecki2013} for the details) initially in a Gibbs state $\tau_{\rm{B}} = e^{-\beta H_{\rm{B}}}/Z_{\rm{B}}$, where $Z_{\rm{B}} = \tr e^{-\beta H_{\rm{B}}}$ is the partition function. The interaction of the system $\rm{S}$ with the heat bath $\rm{B}$ is modelled using a unitary transformation $U$ and hence the effective map on $\rm S$ can be represented as a channel:
\begin{align}
\label{eq:mt2}
\Gamma_{\rm{S}} \left[ \rho_{\rm S} \right] = \tr_{\rm B} \left[ U \left( \rho_{\rm S} \otimes \tau_{\rm B} \right) U^{\dagger} \right],
\end{align}
with $\rho_{\rm S}$ being the initial state of the system. To conserve the total energy of the system for every possible initial state, the unitary $U$ must commute with the total Hamiltonian, i.e. $[U, H_{\rm S} + H_{\rm B}] = 0$. This can be also thought of as a microscopic statement of the first law of thermodynamics. We will refer to the effective map (\ref{eq:mt2}) as a \emph{thermal operation} (TO). Thermal operations are general enough to account for the situations in which the Hamiltonian of the system changes from $H_{\rm S}$ to $H_{\rm S'}$. This can be done by adding an ancillary qubit system which acts as a switch as was described in \cite{Horodecki2013}. For further details see also the Appendix.

One of the most fundamental problems studied by quantum thermodynamics is determining conditions on states $\rho_{\rm S}$ and $\sigma_{\rm S}$ such that there exists a feasible unitary in (\ref{eq:mt2}) which realizes the transformation, which we write as $\rho_{\rm S}\small{ \xrightarrow{\rm TO}} \sigma_{\rm S}$. In the most general case of arbitrary quantum states the necessary and sufficient set of conditions for the existence of a TO realizing $\rho_{\rm S} \small{ \xrightarrow{\rm TO}}  \sigma_{\rm S}$ has not been found yet, though there are a few important partial results \cite{Cwiklinski2015,Lostaglio2015}. The situation greatly simplifies when states $\rho_{\rm S}$ and $\sigma_{\rm S}$ are incoherent in the energy eigenbasis. In particular, when the states are incoherent and the Hamiltonian is fully degenerate, i.e. $H_{\rm S} = 0$, the convertibility under thermal operations is fully governed by majorization \cite{hardy1952}. Namely, for two states $\rho_{\rm S}$ and $\sigma_{\rm S}$, $\rho_{\rm S}\small{ \xrightarrow{\rm TO}}  \sigma_{\rm S}$ if and only if the vector $\bm{q} = \diag[\sigma_{\rm S}]$ is majorized by the vector $\bm{p} = \diag[\rho_{\rm S}]$. This is equivalent to the existence of a doubly-stochastic matrix $R$ which transforms $\bm{p}$ into $\bm{q}$, i.e. $R\, \bm{p} = \bm{q}$. Once the doubly-stochastic matrix $R$ is found, the unitary from (\ref{eq:mt2}) can be easily deduced as described e.g. in \cite{Korzekwa2016coherence}.

When the energy spectrum of the Hamiltonian is non-trivial, TOs preserve the associated Gibbs state rather than the uniform state. In this case majorization is replaced by thermo-majorization \cite{Janzing2000,Horodecki2013} which is equivalent to the existence of a stochastic matrix which preserves the vector of Gibbs weights $\mathbf{g} = \diag[\tau_{\rm S}]$ and transforms $\mathbf{p}$ into $\mathbf{q}$. This leads to a whole new set of conditions which determine which state transformations are possible under the laws of thermodynamics and can be thought of as a single-shot refinement of the second law of thermodynamics. If coherences between different energy eigenstates of the system are present, these conditions are necessary, but not sufficient.

It is natural to ask how these conditions modify when we allow for performing thermodynamic work. Arguably the most studied modification of the framework is to add an ancillary battery system $\rm W$ and study transformation of the form (\ref{eq:mt2}) where system $\rm S$ is now replaced by the joint system $\rm SW$. One then identifies thermodynamic work with the energy change on this ancillary system. More precisely, before and after applying the global unitary $U$ the energy of the battery system is measured using projective measurements, obtaining outcomes $\epsilon$ and $\epsilon +w$ respectively. Work $w$ is then a random variable with probability distribution given by:
\begin{align}
\label{eq:mt_work_def}
p(w) = \int_{\epsilon} \text{d} \epsilon \, \tr \Big[&\left( \mathbb{1}_{\rm S} \otimes \Pi_{\epsilon+w} \otimes \mathbb{1}_{\rm B}  \right) \, \!\times\! \nonumber \\ &U ( \rho_{\rm S} \otimes  \Pi_{\epsilon} \, \rho_{\rm W}\, \Pi_{\epsilon} \otimes \tau_{\rm B} )  U^{\dagger} \Big]
\end{align} 
where $\Pi_{x} = \dyad{x}_{\rm W}$ is a projector onto one of the energy eigenstates of $\rm W$.

This way of defining work leads to two fundamental questions: ($i$) when can we treat work stored in the battery as legitimate thermodynamic work and ($ii$) what are the necessary and sufficient conditions for the \emph{work-assisted} state transformations $\rho_{\rm S} \small{ \xrightarrow{\rm TO}} \sigma_{\rm S}$? It is perhaps surprising that answering those questions relies on which model of the battery we use. In the next sections we will describe and discuss properties of the main battery models found in the literature (see also \cite{PhysRevB.99.035421,PhysRevLett.122.047702,PhysRevLett.120.117702,binder2015quantacell,julia2020bounds}).

\subsection{Thermal operations with qubit battery}
The simplest approach proposed in \cite{Horodecki2013} is to use as a battery a two-level qubit (wit) with a tunable energy gap $\delta$ and a Hamiltonian:
\begin{align}
H_{\rm W} = \delta \dyad{1}_{\rm W}    
\end{align}
Furthermore, one then assumes that the battery always starts either in the ground or the excited state. This simple construction allows to introduce a notion of work valid in the single-shot regime. The arising work is often referred to as ``deterministic work'' and is defined as the optimal $\delta$ for which there exists a TO realizing:
\begin{align}
\label{eq:TOwit}
\rho_{\rm S} \otimes \dyad{i}_{\rm W} \xrightarrow{\rm TO} \sigma_{\rm S} \otimes \dyad{j}_{\rm W}.
\end{align}
where $(i, j) = (0, 1)$ when the work is stored in the battery (distillation) and $(i, j) = (1, 0)$ when it is consumed (formation). Since the battery starts and ends up in a pure energy eigenstate, its energy difference can be fully associated with the work that needs to be performed (or can be extracted) during the transformation. To determine whether a given state $\rho_{\rm S}$ can be converted into $\sigma_{\rm S}$ using $\delta$ of work one has to check the thermo-majorization criteria between the joint states of $\rm SW$ (see Appendix A for the details).

Naturally, transformations in which the wit ends up in a pure state are practically impossible to achieve. That is why for a theory to be applicable to realistic protocols one should consider transformations of the form (\ref{eq:TOwit}), where now the battery is allowed to finish in a slightly mixed state. However, at this point it is not clear if we can interpret the average energy change of the wit as valid thermodynamic work. Since its entropy has changed, it is impossible to differentiate it between work and heat. To see this more explicitly, consider the following example.

\begin{example}{(\emph{Thermalization of a wit})}
\label{ex1}
Consider a process which outputs a joint Gibbs state irrespective of the input, i.e.:
\begin{align}
    \forall\, \rho_{\rm S}, \rho_{\rm W} \qquad \rho_{\rm S} \otimes \rho_{\rm W} \xrightarrow{\rm TO} \tau_{\rm S} \otimes \tau_{\rm W}.
\end{align} 
Notice that this is a valid thermal operation as it always preserves the total Gibbs state. Consider applying this map to the state $\rho_{\rm S} \ot \rho_{\rm W} = \tau_{\rm S} \ot \dyad{0}_{\rm W}$. The average work associated with this transformation is given by:
\begin{align}
 \nonumber
\langle w \rangle &= \tr \left[ H_{\rm W} \left(\tau_{\rm W} -  \dyad{0}_{\rm W} \right)\right] \\  &= \frac{\delta}{1+e^{\beta \delta}} \geq 0.
\end{align}
Hence, thermalization of a wit yields a positive amount of work on average, in contradiction to the second law of thermodynamics (\ref{eq:mt1}). Notice that the above example does not show that wit is a deficent battery model, but rather that it does not behave well in a non-ideal scenario. In particular, allowing for changing its entropy (however slightly) poses certain difficulties in interpreting its average energy change as thermodynamic work.
\end{example}

\subsection{Thermal operations with ideal weight battery}
To resolve the problem we observed in the previous section we can use a different battery model, i.e. an ideal weight with Hamiltonian:
\begin{align}
H_{\rm W} = \int_{-\infty}^{\infty}  x \dyad{x}_{\rm W} \text{d} x  
\end{align}
where the basis $\left\{ \ket{x}_{\rm W} |\, x\in \mathbb{R} \right\}$ is formed from continuous orthonormal states representing the position of the weight. Notice that assuming this particular Hamiltonian is in some sense an arbitrary choice. We can also choose to store work in the kinetic energy of a moving particle or angular momentum of a rotating wheel. What is important is that the model is doubly-infinite, i.e. it has the capacity to store and provide arbitrary amounts of work for any possible transformation. This is of course not a realistic assumption as every physical implementation of the ideal weight must have an energetic minimum, as well as moving particles and rotating wheels eventually halt. 

The unitary $U$ from (\ref{eq:mt2}) is assumed to commute with the shift operator on the weight defined as $\Delta_y := \int_{-\infty}^{\infty} \dyad{x+y}{x}_{\rm W} \text{d} x$, i.e.:
\begin{align}
    \label{eq:com_rel}
    \forall\,{y} \qquad [U_{\rm SBW}, \, \,\text{id}_{\rm SB} \ot \Delta_y^{}] = 0,
\end{align}
This additional assumption is often referred to as \emph{translational invariance} (TI) and its importance was highlighted in \cite{Masanes2017} where it was shown that this combined with the energy conservation assures that dumping entropy into the joint state of the system and the battery (as we saw in Example \ref{ex1}) is impossible. TI is also enough to recover not only the second law of thermodynamics, but also quantum versions of the Jarzynski and Crooks fluctuation theorems \cite{Alhambra2016}. These results provide solid grounds to interpret shifts on the ideal weight as legitimate thermodynamic work.  Moreover, the authors of \cite{Alhambra2016} also proved that the necessary and sufficient conditions for a work-assisted transformation between incoherent states in this case can be expressed by a generalization of thermo-majorization which they termed \emph{Gibss-stochasticity}, i.e.:
\begin{align}
\label{mt:7}
\forall\, {s'}\qquad \sum_{s,w} p(s',w|s) e^{\beta (\widetilde{E}_{s'}-E_s+w)} = 1,
\end{align}
where $p(s', w|s)$ is the conditional probability of the final state of the system having energy levels $\widetilde{E}_{s'}$ (energy of the final Hamiltonian $H_{\rm S'}$) and work $w$ being done by the system, given that the initial state of the system had energy level $E_s$ (energies of the initial Hamiltonian $ H_{\rm S}$). Interestingly, by setting $w = 0$ one can recover from (\ref{mt:7}) the thermo-majorization criteria, as described in \cite{Alhambra2016}. 

However, even though this approach is convenient mathematically and recovers the standard thermodynamic results, this proposal leads to new problems. First, Nature does not allow for Hamiltonians which are unbounded from below and thus it is not clear how well the ideal weight model describes any physical battery. Secondly, the conditions (\ref{mt:7}) lead to certain limitations when one tries to study deterministic work, as described by the following example: 

\begin{example}{(\emph{No deterministic work})}
\label{dwexample}
Consider the following process realized using the ideal weight and a qubit system $\rm S$ with a fully-degenerate Hamiltonian $H_{\rm S} = 0$, i.e.:
\begin{align}
    \label{eq:ex2_trans}
    \forall \rho_{\rm S} \qquad \rho_{\rm S} \xrightarrow{\rm TO} a \dyad{0}_{\rm S} + b \dyad{1}_{\rm S},
\end{align}
where $a$, $b$ are positive such that $a+b = 1$. When explicitly including the ideal weight we have that for all $\rho_{\rm S}$ the input state $\rho_{\rm S} \otimes \dyad{0}_{\rm W}$ is transformed into:
\begin{align}
\sum_{s,s',w} p_{\rm S}(s)\, p(s', w|s) \dyad{s'}_{\rm S} \otimes \dyad{w}_{\rm W},
\end{align}
where $\rho_{\rm S} = \sum_s p_{\rm S}(s) \dyad{s}_{\rm S}$ and we used the fact that due to TI we can start in an arbitrary energy eigenstate of the battery. Now the probability distribution $p(s', w|s)$ reads: 
\begin{align}
    p(s', w|s) = p(s'|s)\, p(w|s,s').
\end{align}
In our particular example we have $p(s' = 0|s) = a$ and $p(s' = 1|s) = b$ for all $s \in \{0, 1\}$. However, notice that if we now demand a deterministic work cost, i.e. $p(w) = \delta_{w, w^*}$ for some real number $w^*$, then conditions (\ref{mt:7}) imply: 
\begin{align}
    s' = 0:& \qquad w_{^*} = - k_{\text{B}} T \log{2} - k_{\text{B}} T \log a, \\
    s' = 1:& \qquad w_{^*} = - k_{\text{B}} T \log{2} - k_{\text{B}} T \log b.
\end{align}
This can only be satisfied if $a = b = 1/2$ and hence shows that it is impossible to obtain a fluctuation-free work when performing (\ref{eq:ex2_trans}) using an ideal weight battery.
\end{example}

It is tempting to think that this behavior is a consequence of the third law of thermodynamics. However, note that in this case we have access to two infinitely big systems: an infinite heat bath and an (infinite) ideal weight and we put no constraints on how many degrees of freedom we may access. Notice also that if we lift the TI constraint and choose two distinct energy levels of the battery separated by an energy difference $\delta$ then we are able to exactly recover any deterministic transformation of the form (\ref{eq:TOwit}) (see e.g. \cite{Faist2016} for a proof of this statement). Hence the problem must be somehow related to the TI property. Surprisingly, we will see that by assuming a less demanding notion of translational invariance we can recover both a proper behavior in the macroscopic limit (second law) and the desired behavior in the microscopic limit (deterministic work).

\subsection{Thermal operations with harmonic oscillator battery}
Motivated by these realizations we now consider a battery model which has an energy spectrum bounded from below. Arguably the simplest model which satisfies this property is a harmonic oscillator:
\begin{align}
    H_{\rm W} = \sum_{k=0}^{N} \epsilon_k \dyad{\epsilon_k}_{\rm W},
\end{align}
with $\epsilon_k := k\, \delta $ and $N$ is the number of energy levels of the oscillator. For diagonal states any thermal operation $\Gamma_{\rm SW}$ acting on a harmonic oscillator battery can be fully characterized by a set of transition probabilities $\{r(s'k'|sk)\}$ which describe the probability that the state $\ket{s}_{\rm S} \otimes \ket{\epsilon_k}_{\rm W}$ gets mapped to another state $\ket{s'}_{\rm} \otimes \ket{\epsilon_{k'}}_{\rm W}$. They can be extracted from $\Gamma_{\rm SW}$ via: 
\begin{align}
    \label{mt:rcomp}
    r(s'k'|sk) := \tr\Big(\dyad{s'} &\otimes \dyad{\epsilon_{k'}} \times \\ &\Gamma_{\rm SW} \big[ \dyad{s} \otimes \dyad{\epsilon_k} \big]\Big). \nonumber 
\end{align} 
For $\Gamma_{\rm SW}$ to be a valid thermal operation the associated transition probabilities must necessarily satisfy:
\begin{align}
\label{eq:mt_stoch}
\forall s, k \hspace{5pt} &\sum_{s', k'} r(s'k'|sk) = 1, \\
\label{eq:mt_gpcond}
\forall s', k' \hspace{5pt}  &\sum_{s, k} r(s'k'|sk)e^{\beta(\widetilde{E}_{s'} - E_s + \epsilon_{k'} - \epsilon_k)} = 1.
\end{align}
The first line of these conditions means that $\Gamma_{\rm SW}$ is a trace-preserving map whereas the second line assures that the channel preserves the Gibbs state. Note that for a battery with the bottom the commutation relation (\ref{eq:com_rel}) cannot be satisfied for all $y$ (unless the map is the identity map), and therefore a more general notion of translational invariance is needed. In many physically relevant situations it is impossible to have precise control over the the unitary from (\ref{eq:mt2}) and hence it may be very difficult to impose the TI condition in practice. Here, instead of constraining the global unitary, we will put constraints on the effective map on the system and the battery arising from this unitary. This is a much looser constraint than (\ref{eq:com_rel}) and is potentially easier to implement in practice. 

Let us consider a discrete energy translation operator acting on the harmonic oscillator battery $\Delta_n = \sum_{k \geq n} \dyad{k-n}{k}_{\rm W}$. We are going to assume that the thermal operation $\Gamma_{\rm SW}$ is invariant with respect to translations of the battery only above a certain threshold energy $\epsilon_{\text{min}} := \delta \, k_{\text{min}}$ for some $0 \leq k_{\text{min}} \leq N$. We will refer to this notion as \emph{effective translational invariance} (ETI). Formally this means that the channel commutation relation:
\begin{align}
     \label{mt:eti}
     \qquad \Big[\Gamma_{\rm SW}, \text{id}_{\rm S} \ot \Delta_n [\cdot]\Delta_n^{\dagger}\Big] \left[\rho_{\text{SW}}\right] = 0,
\end{align}
holds for all states of the battery above the threshold energy $\epsilon_{\text{min}}$, i.e. $\forall \, \rho_{\rm SW}$ s.t. $\tr[(\mathbb{1}_{\rm S} \ot \Pi_{\epsilon}) \rho_{\rm SW}] = 0$ for $\epsilon < \epsilon_{\text{min}}$. Notice that this condition is less stringent than the commutation relation (\ref{eq:com_rel}), i.e. TI implies ETI but the converse is not true in general. In terms of transition probabilities $\{r(s'k'|sk)\}$ this condition can be restated as (see Appendix \ref{appB}): 
\begin{align}
\label{eq:mt5}
r(s' k'|s k) = r(s', k'+n|s, k+n),
\end{align} 
for all integer $n$ such that $0 \leq k'+n \leq N$ and $k_{\text{min}} \leq k+n \leq N$. We will refer to the set of energy levels below and above $\epsilon_{\text{min}}$ as the \emph{vacuum} and \emph{invariant regime} respectively. Intuitively, the ETI assumption means that whenever a given process has a non-zero probability of taking the battery from $\ket{\epsilon_k}_{\rm W}$ to $\ket{\epsilon_{k'}}_{\rm W}$, then all transitions related to the associated work cost $w = \epsilon_{k'} - \epsilon_k$ are equally probable. For a graphical explanation see Fig. \ref{invariance}.

\begin{figure}[h!]
\includegraphics[width=\linewidth]{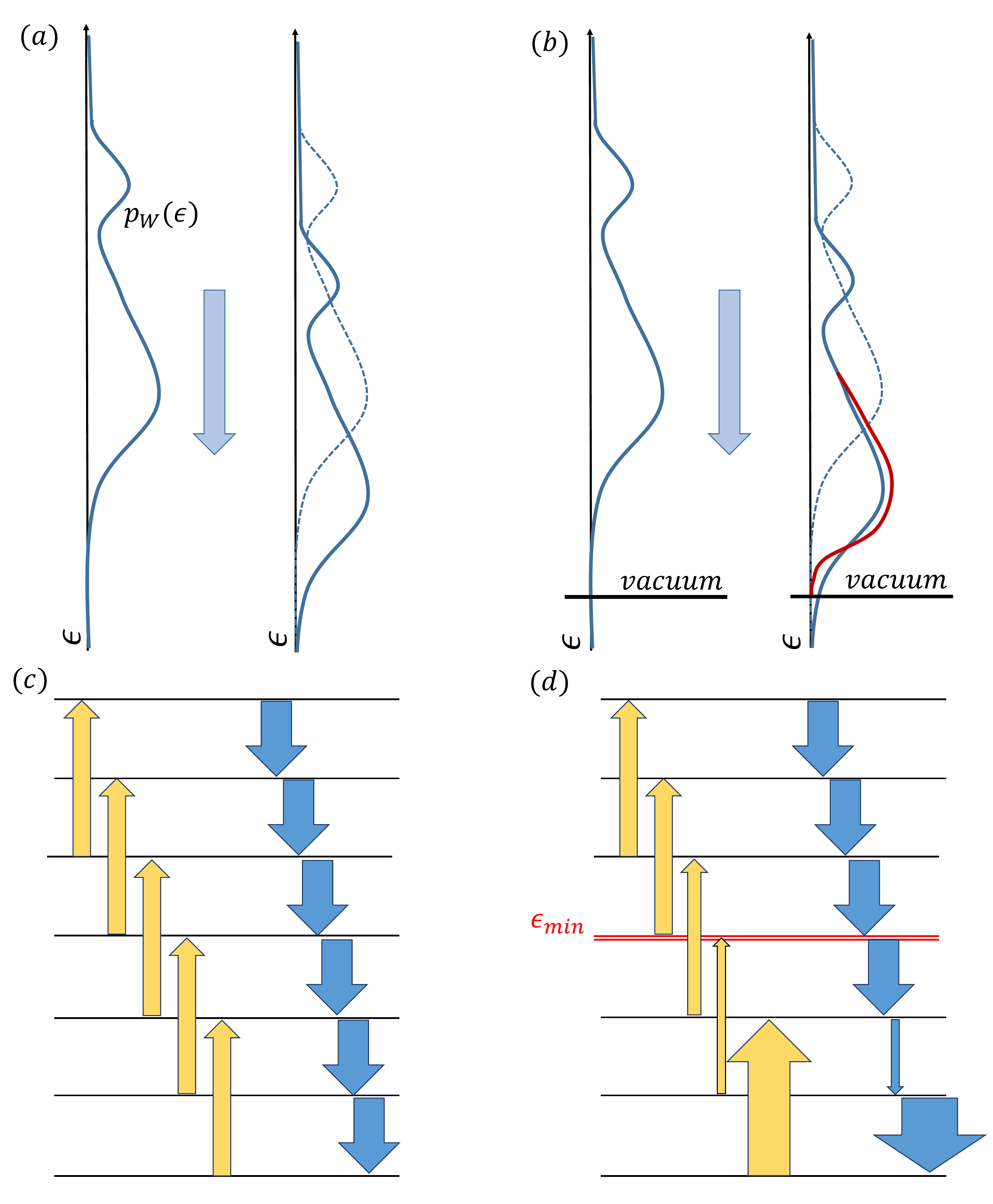}
\caption{A battery with unbounded spectrum (a), attached to the system and environment, works as a tool to define work by changes of its average energy $\langle w\rangle$ during operations described by unitaries applied to the system, battery and environment. For a given transition between system states, $\ket{s}_{\rm S} \rightarrow \ket{s'}_{\rm S}$, transitions between selected energy levels of the battery are represented by arrows (c), (d) -- the width of an arrow is proportional to the probability value. A common assumption that the unitaries commute with the shift operator on the unbounded battery (a) leads to the conclusion that probabilities of transitions on the battery are the same for the same energy gain on the battery, disregarding what is its initial state (c). For a physical model of a battery bounded from below, unitaries cannot commute with the shift operator, as the presence of the vacuum affects the final distribution of battery populations (b). We model this by allowing transitions emerging from levels below $\epsilon_{\text{min}}$ to break translational invariance (d). This introduces corrections to the standard second law inequality for $\langle w\rangle $, which shows that the average change of energy of the battery ceases to serve as a good measure of work. While the corrections vanish exponentially with the distance between $\epsilon_{\text{min}}$ and the bulk of $p_{\rm W}(\epsilon)$, the presence of the vacuum can be exploited to reduce work fluctuations, allowing for (almost) perfect Landauer erasure on physical batteries.} 
\label{invariance}
\end{figure}

Let us note that if the initial state of the system and the battery is not diagonal in the energy basis, then $\Gamma_{SW}$ is not fully characterized by  (\ref{mt:rcomp}). However, (\ref{eq:mt_stoch}) and (\ref{eq:mt_gpcond}) still have to be satisfied, irrespectively of the evolution which $\Gamma_{SW}$ sets for off-diagonal elements of the density matrix. As work is determined only by changes of occupations of the battery energy levels, in the following sections we 
exploit  (\ref{eq:mt_stoch}) and (\ref{eq:mt_gpcond}) to derive a fluctuation relation for work and a formulation of the second law of thermodynamics valid in the ETI model (\ref{eq:mt5}). The only alternation that needs to be taken into account is the fact that functions $f_s$ and $F$ which we use in the proofs are related to the free energy only for the case when no coherences are present in the system. In particular, (\ref{eq:mt_stoch}) and (\ref{eq:mt_gpcond}) hold also for the protocols described in \cite{Aberg2014, Vaccaro2018} which involve creating coherence between energy states of the system using initial coherence of the battery state. These protocols exploit full translational invariance of the battery: the unitary (which conserves the total energy of the system and battery, and acts trivially on the environment) allows for transitions only within the nearest energy levels of the battery. Even if the battery is bounded from below, by utilising the so called 'regenerative catalytic cycles', the protocol used by \cite{Aberg2014} quarantees that the ground state of the battery is never occupied. Therefore, it satisfies the condition (\ref{eq:mt5}) with  $k_{\rm min}=0$.     

\section{General results}

\subsection{Jarzynski equality for physical batteries}
The standard Jarzynski equality can be written as \cite{Jarzynski1997,Jarzynski1997a}:
\begin{align}
    \label{eq:mt9}
    \langle e^{\beta w} \rangle = \frac{Z_{\rm S'}}{Z_{\rm S}},
\end{align}
where $Z_{\rm S'}$ and $Z_{\rm S}$ are partition function associated with the final and initial Hamiltonian on the system $\rm S$. Recently, Alhambra \emph{et. al} in \cite{Alhambra2016} using the framework of thermal operations with the ideal weight derived a quantum analog of (\ref{eq:mt9}) relating fluctuations of work in an arbitrary thermodynamic process. In particular they showed that for states block-diagonal in the energy eigenbasis the following identity holds:
 \begin{align}
 \label{eq:mtje}
 \langle e^{\beta\, (w - f_s)} \rangle = Z_{\rm S'}, 
\end{align}
where the fine-grained free energy $f_i := E_i + \frac{1}{\beta} \log p_{\rm S}(i)$ and $w$ are random variables, $E_i$ is the energy of $i$-th level the system $\rm S$ and averaging is over $p(s,w)$, that is the probability that system $\rm S$ starts in level $\ket{s}_{\rm S}$  and performs work $w$. The above equality can be thought of as an extension of (\ref{eq:mt9}) to the case when the initial state of the system is out of equilibrium. When the initial state is thermal then (\ref{eq:mtje}) reduces to the standard Jarzynski equality (\ref{eq:mt9}).

The proof of (\ref{eq:mtje}) presented in \cite{Alhambra2016} relies crucially on the TI assumption imposed on thermal operations. In the following theorem we show that when translational symmetry is broken the equality (\ref{eq:mtje}) turns into a family of Jarzynski-like inequalities. The crucial difference here is that in the regime of broken translational symmetry (vacuum regime) there are no constraints of this form. At  first glance this may lead to the conclusion that Jarzynski equality in the form (\ref{eq:mtje}) can be violated arbitrarily well. However, in the following theorem we show that by looking at an analog of (\ref{eq:mtje}) conditioned on the battery level $k$, we can deduce nontrivial bounds which hold in the translationally invariant regime of our model. Moreover, in the next section we will show that these relations imply that the work fluctuations are constrained in such a way that the second law of thermodynamics still holds. 

Before we present the theorem let us first rewrite (\ref{eq:mtje}) in a form which explicitly demonstrates its dependence on the  initial battery state $\rho_{\rm W} = \sum_{k=0}^N p_{\rm W}(k) \dyad{\epsilon_k}_{\rm W}$, that is:

\begin{align}
\label{mt:exp_fluct}
     \langle e^{\beta(w - f_s)} \rangle\! =\! \sum_{k,k'} p_{\rm W}(k)   \langle e^{\beta(w_{kk'} - f_s)} \rangle_k,
\end{align}
where we denoted $w_{kk'} = \epsilon_{k'} - \epsilon_k$ and introduced:
\begin{align}
\langle e^{\beta( w_{kk'}- f_s)} \rangle_k := \\ \sum_{s, s', k'} &p_{\rm S}(s)\, r(s'k'|sk) e^{\beta(w_{kk'} - f_s)}. \nonumber
\end{align}
With this in mind we can present our first main theorem:
\begin{theorem}
\label{theorem1}
Let $\Gamma_{\rm SW}$ be a thermal operation acting on a harmonic oscillator battery and satisfying (\ref{mt:eti}). Then for all $k \geq k_{\emph{min}}$,
\begin{align}
    \label{mt:expkb}
     \langle e^{\beta(w_{kk'} - f_s)} \rangle_k \leq Z_{\rm S'}\left(1 + e^{ - \beta \delta_k} \right)\!,\!
\end{align}
where $\delta_k := \epsilon_k - \epsilon_{{\emph{min}}} + \delta$ is the energy difference between the state $\ket{\epsilon_k}_{\rm W}$ and the top of the vacuum regime $\ket{\epsilon_{\emph{min}} - \delta}_{\rm W}$. 
\end{theorem}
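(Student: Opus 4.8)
The plan is to reduce the claimed inequality to a statement purely about the battery marginals and then to exploit the Gibbs-preservation condition (\ref{eq:mt_gpcond}) one final battery level at a time. First I would unpack the left-hand side. Since $f_s = E_s + \frac1\beta\log p_{\rm S}(s)$, the prefactor $p_{\rm S}(s)$ cancels the free-energy term, giving $p_{\rm S}(s)\,e^{\beta(w_{kk'}-f_s)} = e^{-\beta E_s}\,e^{\beta(\epsilon_{k'}-\epsilon_k)}$, so that
\begin{align}
\langle e^{\beta(w_{kk'}-f_s)}\rangle_k = \sum_{s,s',k'} e^{-\beta(E_s+\epsilon_k)}\,r(s'k'|sk)\,e^{\beta\epsilon_{k'}}.
\end{align}
I would then collapse the system indices by introducing the battery marginal $P(k'|k) := \sum_{s,s'} e^{-\beta E_s}\, r(s'k'|sk)$, reducing the target to $T_k := e^{-\beta\epsilon_k}\sum_{k'} e^{\beta\epsilon_{k'}} P(k'|k)$.

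Next I would extract the two constraints that $P$ inherits from (\ref{eq:mt_stoch})--(\ref{eq:mt_gpcond}). Summing trace preservation over final levels gives the \emph{unweighted} normalization $\sum_{k'} P(k'|k) = Z_{\rm S}$, while summing the Gibbs condition over the final system level $s'$ yields the ``dual'' identity $\sum_k e^{-\beta\epsilon_k} P(k'|k) = Z_{\rm S'}\,e^{-\beta\epsilon_{k'}}$ for every $k'$; combining the two also forces $Z_{\rm S}=Z_{\rm S'}$. The effective translational invariance (\ref{eq:mt5}) then enters: in the invariant regime it makes $P(k'|k)$ depend only on the displacement $d=k'-k$, so I can write $P(k'|k)=Q(d)$ for $k\ge k_{\rm min}$. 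Crucially, ETI holds only for initial levels above $\epsilon_{\rm min}$, so when I rewrite the dual identity in terms of $Q$ I must discard the vacuum-initial contributions; since these are non-negative, the identity degrades into the family of window inequalities $\sum_{d=k'-N}^{\,k'-k_{\rm min}} e^{\beta\delta d}\,Q(d)\le Z_{\rm S'}$, one for each $k'$. This is precisely the quantitative footprint of the broken translational symmetry.

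Finally I would assemble the bound. Changing variables to $d=k'-k$ turns the target into $T_k=\sum_{d=-k}^{N-k} e^{\beta\delta d}Q(d)$, and I would split this sum at $d=k_{\rm min}-k$. The upper block $d\in[k_{\rm min}-k,\,N-k]$ is exactly one of the windows above (the one with $k'=N+k_{\rm min}-k$), hence bounded by $Z_{\rm S'}$. For the lower block $d\in[-k,\,k_{\rm min}-k-1]$ every weight obeys $e^{\beta\delta d}\le e^{\beta\delta(k_{\rm min}-k-1)}=e^{-\beta\delta_k}$, so I can pull out this maximal weight and bound the remainder by the unweighted normalization $\sum_d Q(d)=Z_{\rm S'}$. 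This yields $T_k\le Z_{\rm S'}+e^{-\beta\delta_k}Z_{\rm S'}$, which is the claim.

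The main obstacle is obtaining the \emph{sharp} exponential correction rather than a crude constant. Naively covering the length-$N$ range of $T_k$ by the length-$(N-k_{\rm min})$ windows requires two windows and only gives $2Z_{\rm S'}$. The decisive idea is to split at exactly $d=k_{\rm min}-k$: this choice makes the leftover ``vacuum-reaching'' terms carry weights no larger than $e^{-\beta\delta_k}$, and the suppression is then harvested by pairing that maximal weight with the \emph{unweighted} marginal normalization $\sum_d Q(d)=Z_{\rm S'}$ --- a relation originating from trace preservation rather than from Gibbs-stochasticity. Recognizing that the correction is governed by the unweighted marginal, and locating the split point precisely at the edge of the invariant regime, is the crux of the argument.
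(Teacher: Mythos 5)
Your proof is correct, but it follows a genuinely different route from the paper's. You marginalize over both system indices at the outset, reducing everything to the battery-level kernel $P(k'|k)=\sum_{s,s'}e^{-\beta E_s}r(s'k'|sk)$ and its translation-invariant form $Q(k'-k)$; the two constraints you extract (the unweighted normalization $\sum_{k'}P(k'|k)=Z_{\rm S}$ from trace preservation, and the dual identity $\sum_k e^{-\beta\epsilon_k}P(k'|k)=Z_{\rm S'}e^{-\beta\epsilon_{k'}}$ from summing the Gibbs condition (\ref{eq:mt_gpcond}) over $s'$) are exactly right, as is the observation that discarding the non-negative vacuum-initial terms degrades the dual identity into the family of window inequalities $\sum_{d=k'-N}^{k'-k_{\rm min}}e^{\beta\delta d}Q(d)\le Z_{\rm S'}$. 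Splitting the target at $d=k_{\rm min}-k$, covering the upper block by the single window with $k'=N+k_{\rm min}-k$, and paying $e^{-\beta\delta_k}$ times the unweighted normalization for the vacuum-reaching block then cleanly yields $Z_{\rm S'}(1+e^{-\beta\delta_k})$; your explicit derivation of $Z_{\rm S}=Z_{\rm S'}$ from the two conditions is actually needed here and is more careful than the paper, which silently identifies the two partition functions in its last line. The paper instead keeps the final system index $s'$ explicit throughout, working with $h(s',k)$ and $g_{s'}(k'|k)$, and reaches the bound via an anti-diagonal reflection $g_{s'}(k'|k)=g_{s'}(N-k|N-k')$ supplied by ETI followed by a telescoping recursion on $A_{s'}(N-k)$ iterated $k_{\rm min}$ times, closing with a worst-case concentration argument on level $k_{\rm min}-1$. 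What your approach buys is brevity and transparency — no recursion, no reflection, and a clear separation of which constraint (Gibbs-stochasticity versus trace preservation) controls which block. What the paper's approach buys is the per-$s'$ refinement $h(s',k)\le 1+e^{-\beta\delta_k}e^{\beta\widetilde{E}_{s'}}\sum_s e^{-\beta E_s}q(s'|s)$, which is reused with different weights to obtain the inequality (\ref{eq:exp_form}) needed in the proof of Theorem \ref{mt:theorem2}, together with an explicit extremal channel showing the bound is saturable; if you wanted your argument to feed into Theorem \ref{mt:theorem2} you would need to redo it keeping $s'$ unmarginalized, which it readily admits.
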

The theorem does not constrain fluctuations of work when the battery starts in the energy subspace without translational symmetry ($k < k_{\text{min}}$). In other words, if the symmetry (\ref{mt:eti}) is violated for some energy subspace and the battery starts in that subspace, the r.h.s. of (\ref{eq:mtje}) can be made arbitrarily large (and in fact depends on the dimension of the battery as can be seen from the proof of the theorem). This phenomenon did not occur for the ideal weight because in that case (\ref{mt:eti}) was satisfied for all energy levels (meaning that the battery had effectively no vacuum regime). When the initial state of the system is the equilibrium state then (\ref{mt:expkb}) leads to the following family of inequalities:
\begin{align}
    \forall \, k \geq k_{\text{min}}\quad\langle e^{\beta w_{kk'}} \rangle_k \leq \frac{Z_{\rm S'}}{Z_{\rm S}}\left(1 + e^{ - \beta \delta_k} \right),
\end{align}
Finally, the following example explicitly demonstrates that work can fluctuate arbitrarily for a valid thermal operation when the battery starts in the vacuum regime. 
\begin{example}[Deviation from the Jarzynski expression] 
\label{ex:2}
Consider a qubit system $\rm S$ and a thermal operation $\Gamma_{\rm SW}$ with a constant trivial Hamiltonian $H_{\rm S} = H_{\rm S'} = 0$ and described by its action on the basis states:
\begin{align}
    &\dyad{s}_{\rm S}\! \ot\!  \dyad{\epsilon_k}_{\rm W}\! \xrightarrow{}\! \dyad{0}_{\rm S} \!\ot\! \dyad{\epsilon_{k-1}}_{\rm W}\!, \!\\ 
     &\dyad{s}_{\rm S}\! \ot\!  \dyad{\epsilon_0}_{\rm W}\! \xrightarrow{}\! \dyad{1}_{\rm S} \!\ot\gamma_{\rm W},
\end{align}
where $\gamma_{\rm W}:= \sum_{k'=0}^{N}\! 2^{-k'-1}\dyad{\epsilon_{k'}}_{\rm W}$ and $k > 0$. Below, we will take the limit $N\rightarrow \infty$, which assures proper normalisation,  and with the energy gap $\delta = {\beta}^{-1} \log 2$ implies that the map is a valid thermal operation (its construction is discussed in more detail in Section \ref{Sec:Fluctuations}, subsection Physical Battery)). The process can be fully characterized by the transition probabilities:
\begin{align}
&k > 0: \qquad \quad r(0, \, k-1|s, \, k) = 1, \\
 &k = 0: \qquad \quad r(1, \, k'|s, \, 0) = 2^{-k'-1},
\end{align}
\noindent where all other transition probabilities are equal to zero. Notice that the process satisfies (\ref{eq:mt5}) for all $k > 0$, i.e. the vacuum regime is spanned just by one energy level $\dyad{\epsilon_0}_{\rm W}$. Let us now compute the averaged term from the Jarzynski equality (\ref{mt:exp_fluct}). 
We have:
\begin{align}
     \langle e^{\beta(w_{kk'} - f_s)} \rangle_k \! =\! & 
    \begin{cases}
    Z_{\rm S'} & \text{if } k > 0, \\ 
    Z_{\rm S'}(N\!+\!1) & \text{if } k = 0.
    \end{cases}
\end{align}
This means that the average (\ref{mt:exp_fluct}) goes to infinity (in the limit $N\rightarrow \infty$) whenever the initial occupation of the battery ground state is non-zero. 
\end{example}

We described a thermal operation for which Jarzynski equality does not provide a meaningful bound on work fluctuations when the battery initially occupies an energy eigenstate for which ETI does not hold. According to Theorem \ref{theorem1}, when the battery is initiated in such an eigenstate, these fluctuations may be higher than in the case of the ideal weight (for which standard Jarzynski equality holds ''everywhere''). Surprisingly, in such cases it is still possible to derive a form of the second law of thermodynamics with corrections depending on these occupations and the distance of the battery state to the vacuum regime. These correction terms quantify which part of the average energy change on the battery must be associated with heat rather than work. In the next section we present a general theorem which allows to quantitatively determine which part of the average energy change on the battery can be considered as a heat and which as a useful work.

\subsection{Second law of thermodynamics for physical batteries}
Thermodynamic work can be largely influenced by energy fluctuations in the system. Although both of them contribute to the change in the system's average energy, the work should be stored in an ordered form so that it can be later used for another transformation. At the same time heat is irreversibly dissipated and lost. In this section we describe the second main contribution of this paper which is a modified version of the second law of thermodynamics valid for batteries bounded from below. This allows to estimate how much of the average energy change on the battery can be associated with thermodynamic work when its spectrum is bounded from below.
\begin{theorem}
\label{mt:theorem2}
Let $\Gamma_{\rm SW}$ be a thermal operation satisfying (\ref{mt:eti}). Then for all battery states of the form:
\begin{align}
\rho_{\rm W} = \sum_{k=0}^{N} p_{\rm W}(k) \dyad{\epsilon_k}_{\rm W}
\end{align}
the average work $\langle w \rangle$ satisfies:
\begin{align}
    \label{mt:2nd_law_corr}
     \langle w \rangle \leq - \Delta F_{\rm S} + A_{\beta}(\rho_{\rm W}, \rho_{\rm S}) + B_{\beta}(\rho_{\rm W})
\end{align}
where: 
\begin{align}
    A_{\beta}(\rho_{\rm W}, \rho_{\rm S}) &:= \!  \sum_{\mathclap{k < k_{\emph{min}}}}\, p_{\rm W}(k)\!\left[E_{\rm S'}^{\emph{max}}\! -\! F(\rho_{\rm S})\! -\!  \eta_{\rm S}\frac{\partial \eta_k}{\partial \beta} \right] \\
    B_{\beta}(\rho_{\rm W}) &\!:=\! \frac{1}{\beta} \log\! \left[1\!+\!\!\! \sum_{k \geq k_{\emph{min}}}^N \!\!\! p_{\rm W}(k)\, e^{-\beta \delta_k} \right]
\end{align}
where $F(\rho_{\rm S}) = \tr[\rho_{\rm S} H_{\rm S}] - S(\rho_{\rm S})$ is the free energy, $S(\rho_{\rm S}) = - \tr \rho_{\rm S} \log \rho_{\rm S}$ is the von-Neumann entropy, $\Delta F_{\rm S} = F(\rho'_{\rm S'}) - F(\rho_{\rm S})$ is the change in the free energy of the system, $\eta_k := Z_{\rm W}\, e^{\beta \epsilon_k}$, $\eta_{\rm S} := Z_{\rm S} \, e^{\beta E_{\rm S}^{\emph{max}}}$, $E_{\rm S'}^{\emph{max}} = \max_s \widetilde{E}_{s}$ is the largest energy of the system $\rm S$ and $\delta_k := \epsilon_k - \epsilon_{\emph{min}} + \delta$.
\end{theorem}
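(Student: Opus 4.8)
The plan is to split the average work according to where the battery starts,
\begin{align}
\langle w\rangle = \sum_{k<k_{\rm min}} p_{\rm W}(k)\,\langle w\rangle_k + \sum_{k\ge k_{\rm min}} p_{\rm W}(k)\,\langle w\rangle_k ,
\end{align}
with $\langle w\rangle_k := \sum_{s,s',k'} p_{\rm S}(s)\,r(s'k'|sk)\,(\epsilon_{k'}-\epsilon_k)$, and to bound the two sums so that they furnish $B_\beta$ and $A_\beta$ respectively. The invariant part ($k\ge k_{\rm min}$) is where Theorem \ref{theorem1} does the work; the vacuum part ($k<k_{\rm min}$), where that theorem is silent, has to be controlled by a direct energy estimate built from the Gibbs-preservation condition (\ref{eq:mt_gpcond}).

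For the invariant regime I would first note that the stochasticity condition (\ref{eq:mt_stoch}) makes the conditional weights $p_{\rm S}(s)\,r(s'k'|sk)$ normalised, so that $\langle f_s\rangle_k = \sum_s p_{\rm S}(s) f_s = F(\rho_{\rm S})$ for every $k$. Applying Jensen's inequality to the convex exponential and then inserting the bound (\ref{mt:expkb}) of Theorem \ref{theorem1} yields, for each $k\ge k_{\rm min}$,
\begin{align}
\langle w\rangle_k \le F(\rho_{\rm S}) + \frac{1}{\beta}\log Z_{\rm S'} + \frac{1}{\beta}\log\!\left(1+e^{-\beta\delta_k}\right).
\end{align}
The last term is the genuine correction. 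Averaging it over $k$ and using the weighted arithmetic--geometric inequality $\prod_k (1+x_k)^{p_{\rm W}(k)} \le 1 + \sum_k p_{\rm W}(k)\,x_k$ (valid because $\sum_{k\ge k_{\rm min}} p_{\rm W}(k)\le 1$) with $x_k = e^{-\beta\delta_k}$ collapses the sum of logarithms into the single term $B_\beta(\rho_{\rm W})$. This is precisely the step that turns the family of Jarzynski-type inequalities into a compact second-law correction.

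For the vacuum regime I would exploit that the Gibbs-preservation condition (\ref{eq:mt_gpcond}), being a sum of non-negative terms equal to one, forces the term-wise bound $r(s'k'|sk)\le e^{-\beta(\widetilde E_{s'}-E_s+\epsilon_{k'}-\epsilon_k)}$. Substituting this into $\langle w\rangle_k=\langle \epsilon_{k'}\rangle_k-\epsilon_k$ and resumming, the sum over $s'$ produces an energy factor bounded by $E_{\rm S'}^{\rm max}$, the sum over $s$ is controlled by $\eta_{\rm S}=Z_{\rm S}e^{\beta E_{\rm S}^{\rm max}}$, and the sum over $k'$ of $\epsilon_{k'}e^{-\beta\epsilon_{k'}}=-\partial_\beta Z_{\rm W}$ together with the Gibbs normalisation $\eta_k=Z_{\rm W}e^{\beta\epsilon_k}$ assembles the derivative $\partial_\beta\eta_k$. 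Collecting these and subtracting the baseline $F(\rho_{\rm S})$ should reproduce the bracket $E_{\rm S'}^{\rm max}-F(\rho_{\rm S})-\eta_{\rm S}\,\partial_\beta\eta_k$, and summing over $k<k_{\rm min}$ with weights $p_{\rm W}(k)$ gives $A_\beta(\rho_{\rm W},\rho_{\rm S})$.

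The hard part, I expect, is the free-energy reference. Jensen applied to Theorem \ref{theorem1} only delivers the \emph{equilibrium} free energy $\frac1\beta\log Z_{\rm S'}$ of the final Hamiltonian, whereas the statement asks for the \emph{nonequilibrium} free energy of the actual output $\rho_{\rm S}'$ through $\Delta F_{\rm S}=F(\rho_{\rm S}')-F(\rho_{\rm S})$; since $F(\rho_{\rm S}')\ge -\frac1\beta\log Z_{\rm S'}$, the target is strictly sharper than the fluctuation inequality alone can give. To close this gap I would bring in the contractivity of the nonequilibrium free energy under the Gibbs-preserving channel $\Gamma_{\rm SW}$: data processing for the relative entropy together with subadditivity of the von Neumann entropy yields $F(\rho_{\rm W}')-F(\rho_{\rm W})\le -\Delta F_{\rm S}$ directly with the true final state, i.e. $\langle w\rangle + \frac1\beta[S(\rho_{\rm W})-S(\rho_{\rm W}')] \le -\Delta F_{\rm S}$, after which the regime split is used to bound the residual battery entropy production $\frac1\beta[S(\rho_{\rm W}')-S(\rho_{\rm W})]$ by $A_\beta+B_\beta$. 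Reconciling the two lines of argument into one clean inequality — the contractive bound supplying the sharp $-\Delta F_{\rm S}$ while Theorem \ref{theorem1} and the energy estimate supply the corrections — and verifying that the crude term-wise bound of the vacuum sector still gives a finite, correctly-signed $A_\beta$ despite the divergence of the exponential average in Example \ref{ex:2}, is where I expect the real difficulty to sit.
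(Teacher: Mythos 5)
Your overall architecture is the paper's: the same split $\langle w\rangle=\langle w\rangle_{\rm vac}+\langle w\rangle_{\rm inv}$, the same term-wise Gibbs bound $r(s'k'|sk)\leq e^{-\beta(\widetilde E_{s'}-E_s+w_{kk'})}$ assembling $-\eta_{\rm S}\,\partial_\beta\eta_k$ in the vacuum sector, and Theorem \ref{theorem1} plus Jensen plus concavity of the logarithm producing $B_\beta$ in the invariant sector. Two points, however, do not survive scrutiny. First, a bookkeeping error: the $E_{\rm S'}^{\text{max}}-F(\rho_{\rm S})$ piece of $A_\beta$ does not come from the vacuum energy estimate as you claim. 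In the paper it appears because the Jensen step only controls $\sum_{k\geq k_{\text{min}}}$ of $(f_{s'}-f_s)$, so one must complete this restricted sum to the full $\Delta F_{\rm S}$ and then bound the leftover $k<k_{\text{min}}$ contribution via $f_{s'}\leq E_{\rm S'}^{\text{max}}$ and $\sum_s p_{\rm S}(s)f_s=F(\rho_{\rm S})$.

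Second, and more seriously, your resolution of the "free-energy reference" problem is a genuine gap. You correctly observe that Theorem \ref{theorem1} as stated only delivers $\frac1\beta\log Z_{\rm S'}$, which is weaker than $-F(\rho'_{\rm S'})$. But the paper does not need data processing to fix this: its proof of Theorem \ref{theorem1} derives, en route, the sharper conditional inequality $\langle e^{\beta(f_{s'}-f_s+w_{kk'})}\rangle_k\leq 1+\eta_{\rm S}\,e^{-\beta\delta_k}$ (Eq.~(\ref{eq:exp_form})), which keeps the \emph{fine-grained final} free energy $f_{s'}$ inside the exponential; Jensen applied to this directly produces $\langle f_{s'}\rangle=F(\rho'_{\rm S'})$ and hence the nonequilibrium $-\Delta F_{\rm S}$. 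Your substitute — contractivity of relative entropy plus subadditivity giving $\langle w\rangle\leq-\Delta F_{\rm S}+\frac1\beta[S(\rho'_{\rm W})-S(\rho_{\rm W})]$, followed by the claim that the battery entropy production is bounded by $A_\beta+B_\beta$ — leaves that last bound entirely unproven, and it is not obviously true: it is a statement of a different character (an entropic bound on $\Delta S_{\rm W}$ for all ETI thermal operations) that nothing in your argument or in Theorem \ref{theorem1} supplies. As it stands, the proof closes only to the weaker bound with $\frac1\beta\log Z_{\rm S'}$ in place of $-F(\rho'_{\rm S'})$; to reach the stated theorem you would need to go back inside the proof of Theorem \ref{theorem1} and extract the refined inequality rather than its corollary.
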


Let us analyze the terms appearing in (\ref{mt:2nd_law_corr}). The upper bound for average work has two correction terms which both depend on the initial state of the battery. First of them is proportional to the occupation below the threshold energy $\epsilon_{\text{min}}$ and describes the contribution to the battery's average energy related to changing its entropy (using the battery in its vacuum regime spoils the battery). The second term decreases exponentially fast with the distance to the threshold $\epsilon_{\text{min}}$ and effectively vanishes when battery starts far away from it. In particular, if the battery operates sufficiently far from the threshold energy, that is if $\sum_{k<k^*} p_{\rm W}(k) \approx 0$ for some $k^* \gg k_{\text{min}}$, then both correction terms vanish and (\ref{mt:2nd_law_corr}) reduces to the ordinary form of the second law.

We stress here that the possible violation of the second law inequality, which can occur when the battery is initialized in the proximity of the vacuum regime, is an indication that in such case the average change of the energy of the battery can no longer be considered as a valid description of thermodynamic work. In this way we can interpret Theorem \ref{mt:theorem2} as a quantitative tool for determining the part of the average energy change on the battery that cannot be associated with thermodynamic work for arbitrary thermodynamic processes. In other words, Theorem \ref{mt:theorem2} allows to determine how far the battery must be initialized in order to interpret the energy change on the battery as a genuine thermodynamic work.

\begin{proof}[Proof of Theorem \ref{mt:theorem2} (sketch)]
\noindent The standard approach in deriving the second law of thermodynamics (\ref{eq:mt1}) is to start with the fluctuation theorem (\ref{eq:mtje}) and upper bound the average work using convexity of the exponential function. However, as we saw in Example \ref{ex:2}, if the battery has an energy subspace for which the map is not translationally invariant, the r.h.s of (\ref{eq:mtje}) can be made arbitrarily large for some battery states. Hence we need to modify the method to obtain informative bounds. We start by decomposing the average work $\langle w \rangle$ into two terms, each related to a different regime of the initial state of the battery:
\begin{align}
	\label{eq:mt7}
    \langle w \rangle = \sum_w p(w) \, w = \langle w \rangle_{\text{vac}} + \langle w \rangle_{\text{inv}},
\end{align}
where we labelled: 
\begin{align}
    \langle w \rangle_{\text{vac}} &:= \sum_w \sum_{k < k_{\text{min}}} p_{\rm W}(k)\, p(w|k) \, w\\ 
    \langle w \rangle_{\text{inv}} &:= \sum_w \sum_{k \geq k_{\text{min}}} p_{\rm W}(k)\, p(w|k)\,   w.
\end{align}
Our strategy is to independently bound both terms appearing in (\ref{eq:mt7}). Regarding the first term note that all we know is that the transition probabilities $\{r(s'k'|sk)\}$ come from a stochastic map (\ref{eq:mt_stoch}) which preserves the associated Gibbs state (\ref{eq:mt_gpcond}). In particular, this means  that they are all upper bounded by respective Gibbs factors. This means that for all input and output pairs $(s, k)$ and $(s', k')$ we can write $r(s'k'|sk) \leq e^{-\beta(\widetilde{E}_{s'} - E_s + w_{kk'}})$, which after some manipulation leads to:
\begin{align}
	\label{mt:v_vac}
    \langle w \rangle_{\text{vac}}  &\leq - \eta_{\rm S} \sum_{k < k_{\text{min}}}  p_{\rm W}(k) \frac{\partial \eta_k}{\partial \beta}
\end{align}
Consider now the second term from (\ref{eq:mt7}). Notice that now the sum runs over $k \geq k_{\text{min}}$ and so the assumptions of Theorem \ref{theorem1} are satisfied. Hence, using Theorem \ref{theorem1} and the convexity of the exponential function it can be shown that:
\begin{align}
    \label{mt:w_inv}
    \langle w \rangle_{\text{inv}} \leq & -\Delta F_S +B_{\beta}(\rho_W) \\ \nonumber &+  \left(\sum_{k= 0}^{k_{\text{min}} - 1}  p_{\rm W}(k)\right) \cdot (E_S^{\text{max}} - F(\rho_{\rm S})). 
\end{align}
The theorem follows by combining bounds (\ref{mt:v_vac}) and (\ref{mt:w_inv}). 
\end{proof}
 
\begin{figure}[h!]
\centering
\includegraphics[width=\linewidth]{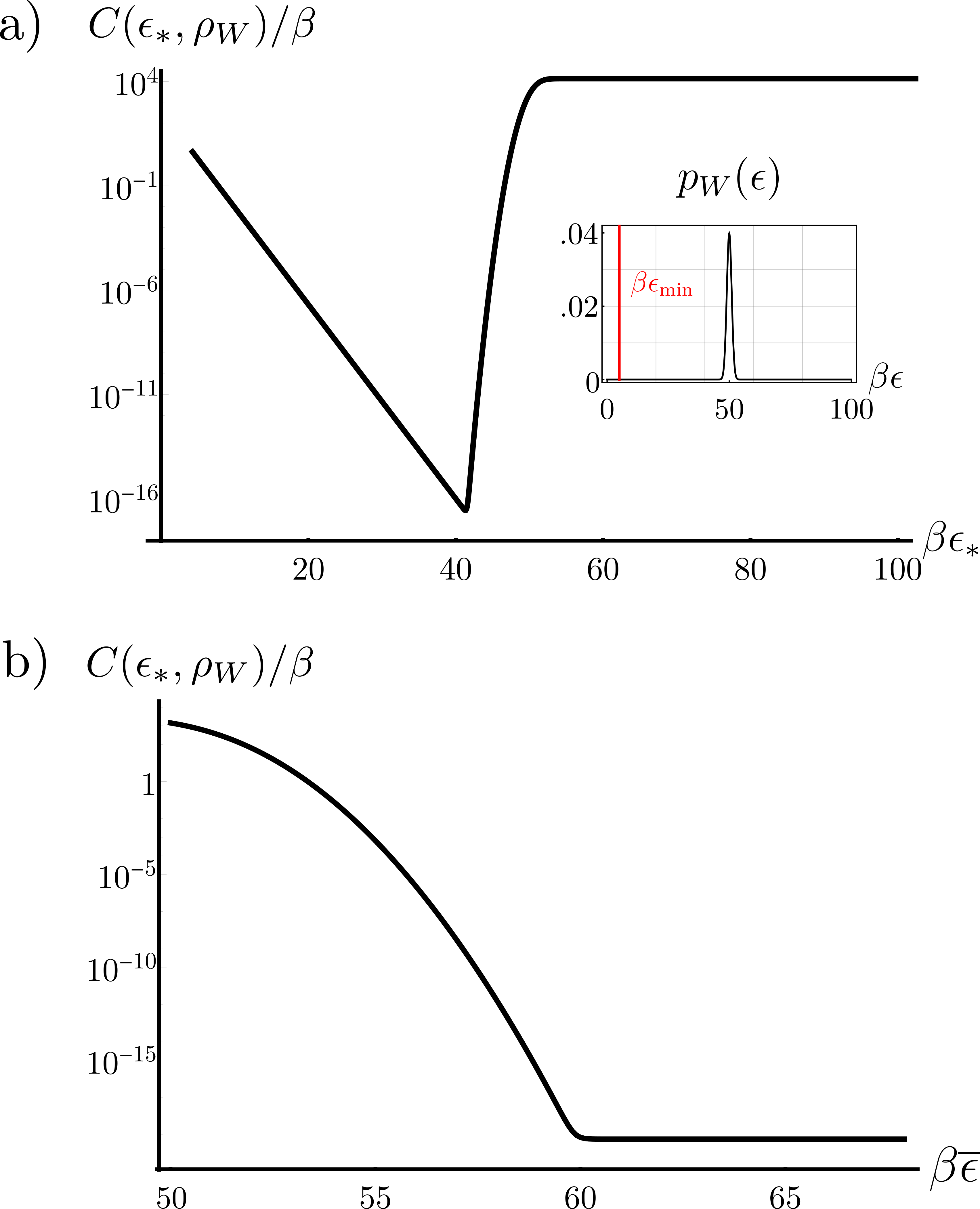}
\caption{ (a) Correction $C(\epsilon_*, \rho_{\rm W})$ for the bound on average work from Corollary \ref{cor1}. The initial state of the battery around the average energy $\beta \bar{\epsilon} = 50$ has a Gaussian profile $p_{\rm W}(\epsilon) \propto e^{\beta^2( \epsilon - \bar{\epsilon})^2 / 2}$. The energy spacing was set to $\beta \delta = 0.1$ and vacuum regime to $\beta \epsilon_{\text{min}}= 5$. System $\rm S$ was chosen to be a qubit with Hamiltonian $H_{\rm S} = 0$. Notice that in this case the battery's state is fixed. Hence by increasing $\epsilon_*$ we reach a point where $p(\epsilon < \epsilon_*)$ is no longer small (which happens around $\beta \epsilon_* = 40$) and so the correction term blows-up. (b) Correction $C(\epsilon_*, \rho_{\rm W})$ for the same setting but with fixed $\epsilon_* = 10 \times \epsilon_{\text{min}}$ plotted as a function of the average energy $\beta \bar{\epsilon}$ of the Gaussian profile. The kink appears as a consequence of fixing $\epsilon_*$ in Eq. (\ref{eq:c_corr}), while the first term vanishes exponentially with $\bar{\epsilon}$.}
\label{fig0}
\end{figure}

To illustrate Theorem \ref{mt:theorem2} more clearly consider an infinite-dimensional battery $(N\rightarrow\infty)$ with initial energy population $p(\epsilon<\epsilon_*)$ below some energetic cut-off $\epsilon_{*} > \epsilon_{\rm min}$, that is:
\begin{align}
    \label{eq:cut_off}
    p(\epsilon < \epsilon_*) = \sum_{k: \, \epsilon_k \leq \epsilon_*} \bra{\epsilon_k}  \rho_{\rm W} \ket{\epsilon_k}_{\rm W}.
\end{align}
The parameter $\epsilon_*$ will serve us to describe the initial state of the battery. In the following corollary we present a simplified (though slightly looser) bound derived from Theorem \ref{mt:theorem2}.  As we shall see, in this bound the dependence of the correction term on initial state of the battery is much simpler and in fact can be described by using just the function $p(\epsilon < \epsilon^*)$.
\begin{corollary} \label{cor1}
For any thermal operation $\Gamma_{\rm SW}$ acting on a harmonic oscillator battery $\rm W$ with threshold energy $\epsilon_{\emph{min}}$ and initial state $\rho_{\rm W}$, satisfying (\ref{eq:cut_off}), we have:
 \begin{equation}
 \langle w \rangle \leq - \Delta F_{\rm S} + \frac{1}{\beta} C(\epsilon_{*}, \rho_{\rm W}),
 \end{equation}
with
\begin{flalign}
 C(\epsilon_{*}, \rho_{\rm S}) =& \nonumber \,  p(\epsilon<\epsilon_{*})\Big[ c_{\rm S}\, h(\beta,\delta, \epsilon_{\rm min}) +\!\log c_{\rm S}\Big] \\ &+ c_{\rm S}  e^{-\beta (\epsilon_{*}-\epsilon_{\emph{min}})},
 \label{eq:c_corr}
 \end{flalign} 
where $h(\beta,\delta,\epsilon_{\rm min}) \!:=\! e^{-\beta\delta}[1\!+\! {\beta \delta e^{\beta \epsilon_{\emph{min}}}}{(1\!-\!e^{-\beta\delta})^{-2}}]\!$ and the constant $c_{\rm S} := d_{\rm S} e^{\beta E_{\rm S'}^{\emph{max}}}$ depends only on the properties of system $\rm S$.
\end{corollary}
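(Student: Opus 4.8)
The plan is to start directly from Theorem~\ref{mt:theorem2} and re-bound its two correction terms $A_\beta(\rho_{\rm W},\rho_{\rm S})$ and $B_\beta(\rho_{\rm W})$ in terms of the single quantity $p(\epsilon<\epsilon_*)$, exploiting that $\epsilon_*>\epsilon_{\rm min}$. Since every vacuum level $k<k_{\rm min}$ satisfies $\epsilon_k<\epsilon_{\rm min}<\epsilon_*$, I would first note $\sum_{k<k_{\rm min}}p_{\rm W}(k)\le p(\epsilon<\epsilon_*)$, so each term of $A_\beta$ can be pulled in front of this population. The target is $\beta\,(A_\beta+B_\beta)\le C(\epsilon_*,\rho_{\rm W})$; I will then freely insert factors $c_{\rm S}=d_{\rm S}e^{\beta E_{\rm S'}^{\rm max}}\ge 1$ wherever doing so loosens the bound but matches the stated form, which is the origin of the ``slightly looser'' remark.

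For $A_\beta$ I would split its bracket into an energetic and an entropic piece. The energetic piece $\beta\,(E_{\rm S'}^{\rm max}-F(\rho_{\rm S}))$ is controlled using $-\beta F(\rho_{\rm S})\le S(\rho_{\rm S})\le \log d_{\rm S}$ (which follows from $\tr[\rho_{\rm S}H_{\rm S}]\ge 0$), giving $\beta(E_{\rm S'}^{\rm max}-F(\rho_{\rm S}))\le \log(d_{\rm S}e^{\beta E_{\rm S'}^{\rm max}})=\log c_{\rm S}$ and producing the $p(\epsilon<\epsilon_*)\log c_{\rm S}$ summand of $C$. The entropic piece $-\eta_{\rm S}\,\partial_\beta\eta_k$ requires evaluating $\partial_\beta\eta_k=\partial_\beta(Z_{\rm W}e^{\beta\epsilon_k})$ in the $N\to\infty$ limit, where $Z_{\rm W}\to(1-e^{-\beta\delta})^{-1}$ and $\partial_\beta Z_{\rm W}\to-\delta e^{-\beta\delta}(1-e^{-\beta\delta})^{-2}$. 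Discarding the favourable (negative) $\epsilon_k Z_{\rm W}$ contribution and using $\epsilon_k\le\epsilon_{\rm min}$ together with $\eta_{\rm S}\le c_{\rm S}$ yields $-\beta\eta_{\rm S}\partial_\beta\eta_k\le c_{\rm S}\,\beta\delta\,e^{\beta\epsilon_{\rm min}}e^{-\beta\delta}(1-e^{-\beta\delta})^{-2}$, which is exactly the second summand inside $c_{\rm S}h(\beta,\delta,\epsilon_{\rm min})$.

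For $B_\beta$ I would use $\log(1+x)\le x$ to get $\beta B_\beta\le \sum_{k\ge k_{\rm min}}p_{\rm W}(k)e^{-\beta\delta_k}$ and then split the sum at the cut-off. For $\epsilon_{\rm min}\le\epsilon_k\le\epsilon_*$ one has $\delta_k\ge\delta$, so $e^{-\beta\delta_k}\le e^{-\beta\delta}$ and this part is at most $p(\epsilon<\epsilon_*)e^{-\beta\delta}$, which supplies the remaining $c_{\rm S}e^{-\beta\delta}$ piece of $c_{\rm S}h$ after inserting $c_{\rm S}\ge 1$. For $\epsilon_k>\epsilon_*$, monotonicity of $e^{-\beta\delta_k}$ and $\sum_k p_{\rm W}(k)\le 1$ give a tail at most $e^{-\beta(\epsilon_*-\epsilon_{\rm min})}$, the final term of $C$. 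Collecting the three contributions and recognising $c_{\rm S}h=c_{\rm S}e^{-\beta\delta}\bigl[1+\beta\delta e^{\beta\epsilon_{\rm min}}(1-e^{-\beta\delta})^{-2}\bigr]$ reproduces $C(\epsilon_*,\rho_{\rm W})$ verbatim.

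I expect the entropic piece to be the main obstacle: one must carry out the $N\to\infty$ geometric-series evaluation of $\partial_\beta\eta_k$ correctly and verify that discarding the $\epsilon_k Z_{\rm W}$ term preserves the inequality direction uniformly over the vacuum range $k<k_{\rm min}$. A secondary bookkeeping concern is the mismatch between $\eta_{\rm S}=Z_{\rm S}e^{\beta E_{\rm S}^{\rm max}}$ and $c_{\rm S}=d_{\rm S}e^{\beta E_{\rm S'}^{\rm max}}$; this is harmless because $Z_{\rm S}\le d_{\rm S}$ and one may take the $\max$ over the initial and final spectra, but it is exactly where the clean final form quietly relies on the looseness permitted by $c_{\rm S}\ge 1$.
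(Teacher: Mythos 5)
Your proposal is correct and follows essentially the same route as the paper's own proof in Appendix~\ref{app_proof_thm2}: bounding $\sum_{k<k_{\rm min}}p_{\rm W}(k)$ by $p(\epsilon<\epsilon_*)$, evaluating $-\eta_{\rm S}\partial_\beta\eta_k = \eta_{\rm S}Z_{\rm W}e^{\beta\epsilon_k}(\langle E\rangle_\beta-\epsilon_k)$ in the $N\to\infty$ limit while discarding the negative $\epsilon_k$ contribution, using $\beta(E_{\rm S'}^{\rm max}-F(\rho_{\rm S}))\le\log c_{\rm S}$, and splitting the $B_\beta$ sum at the cut-off $\epsilon_*$ with $\log(1+x)\le x$. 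The bookkeeping issues you flag ($\eta_{\rm S}\le c_{\rm S}$ and the harmless looseness from $c_{\rm S}\ge 1$) are exactly how the paper closes the argument.
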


Notice that for battery states concentrated far from the vacuum regime the term $C(\epsilon_*, \rho_W)$ vanishes exponentially fast in the low energy regime (see Fig. \ref{fig0}). Proof of the corollary can be found in Appendix \ref{app_proof_thm2}. 

\subsection{Recovering deterministic work}
All operations allowed in the framework involving the ideal weight can by carried out using the harmonic oscillator battery with a bounded spectrum from below, as long as we use the battery sufficiently high above the ground state (see e.g. \cite{Aberg2014,Aberg2018}). However, the converse statement is not true. In this section we show that there are thermal operations which can only be accomplished using batteries bounded from below. That means that the sets of operations generated by these two models are not equivalent, i.e. physical batteries are \emph{strictly} more powerful than ideal batteries. 

In this section we describe a method of extending arbitrary thermal operations defined on a wit to thermal operations acting on a harmonic oscillator battery. First of all, this construction shows that it is possible to recover the notion of deterministic (i.e. fluctuation-free) work for batteries with a ground state whenever they operate above the regime of broken translational symmetry. More intuitively this means that by properly breaking translational symmetry one can minimize to zero the fluctuations of work, while still satisfying the second law of thermodynamics.

Secondly, \emph{any} thermal operation acting on a wit can be extended using the construction provided below. In the Appendix we show that any thermal operation arising from this construction satisfies the second law in the sense of Theorem \ref{theorem1}. In this way, the maps which leave the wit in a mixed state can still be ``salvaged'' and lead to the average work which approximately obeys the second law of thermodynamics and for which the size of violations can be easily controlled.

We will now describe the construction. Let $\,\Gamma_{wit}$  be an arbitrary thermal operation acting on $\rm S$ and a two-level battery (wit) performing the transformation:
\begin{align}
    &\rho_{\rm S}\ot \dyad{0}_{\rm W} \rightarrow \\ \nonumber & \qquad \qquad\qquad \mathcal{R}_{00}(\rho_{\rm S}) \!\ot\! \dyad{0}_{\rm W}\! +\! \mathcal{R}_{01}(\rho_{\rm S})\! \ot\! \dyad{1}_{\rm W}\! \\
    &\rho_{\rm S}\ot \dyad{1}_{\rm W} \rightarrow \\ \nonumber & \qquad \qquad\qquad \mathcal{R}_{10}(\rho_{\rm S}) \!\ot\! \dyad{0}_{\rm W}\! +\! \mathcal{R}_{11}(\rho_{\rm S})\! \ot\! \dyad{1}_{\rm W}
\end{align}
Note that full information about $\Gamma_{wit}$ is contained in the set of subchannels $\{\mathcal{R}_{kk'}\}$ which we will refer to as \emph{battery subchannels}. Let $\{ \mathcal{R}_{kk'} \} = \{\mathcal{R}_{00}, \mathcal{R}_{01}, \mathcal{R}_{10}, \mathcal{R}_{11}\}$ be (arbitrary) battery subchannels associated with the transformation on the wit. The transition probabilities $\{r(s'k'|sk)\}$ can be extracted from $\{\mathcal{R}_{kk'}\}$ via:
\begin{align}
    \label{mt:rRcorr}
    r(s'k'|sk) = \tr \left[\dyad{s'} \mathcal{R}_{kk'} [\dyad{s}]\right]
\end{align}
We will use the map $\Gamma_{wit}$ as a primitive in constructing a family of thermal operations acting on a harmonic oscillator battery. We define a transformation $\Gamma_{osc}$ acting on $\rm S$ and a harmonic oscillator battery $\rm W$ in the following way:
\begin{construction}
\label{con1}
The action of  $\Gamma_{osc}$ is given by:
\begin{align}
    &\rho_{\rm S} \ot \dyad{\epsilon_0}_{\rm W} \rightarrow \sum_{i = 0}^{\infty} \mathcal{R}_{00} \mathcal{R}_{01}^{i}(\rho_{\rm S}) \ot \dyad{\epsilon_i}_{\rm W},  \\  
    &\rho_{\rm S} \ot \dyad{\epsilon_k}_{\rm W} \rightarrow \mathcal{R}_{10}(\rho_{\rm S}) \ot \dyad{\epsilon_{k-1}}_{\rm W} + \\ \nonumber 
    & \qquad \qquad  \,  \sum_{i = 0}^{\infty} \mathcal{R}_{00} \mathcal{R}_{01}^i \mathcal{R}_{11}(\rho_{\rm S}) \ot \dyad{\epsilon_{k+i}}_{\rm W}, 
\end{align}
for all $k > 0$.
\end{construction}
\noindent
\begin{figure}[h]
\centering
\includegraphics[width=\linewidth]{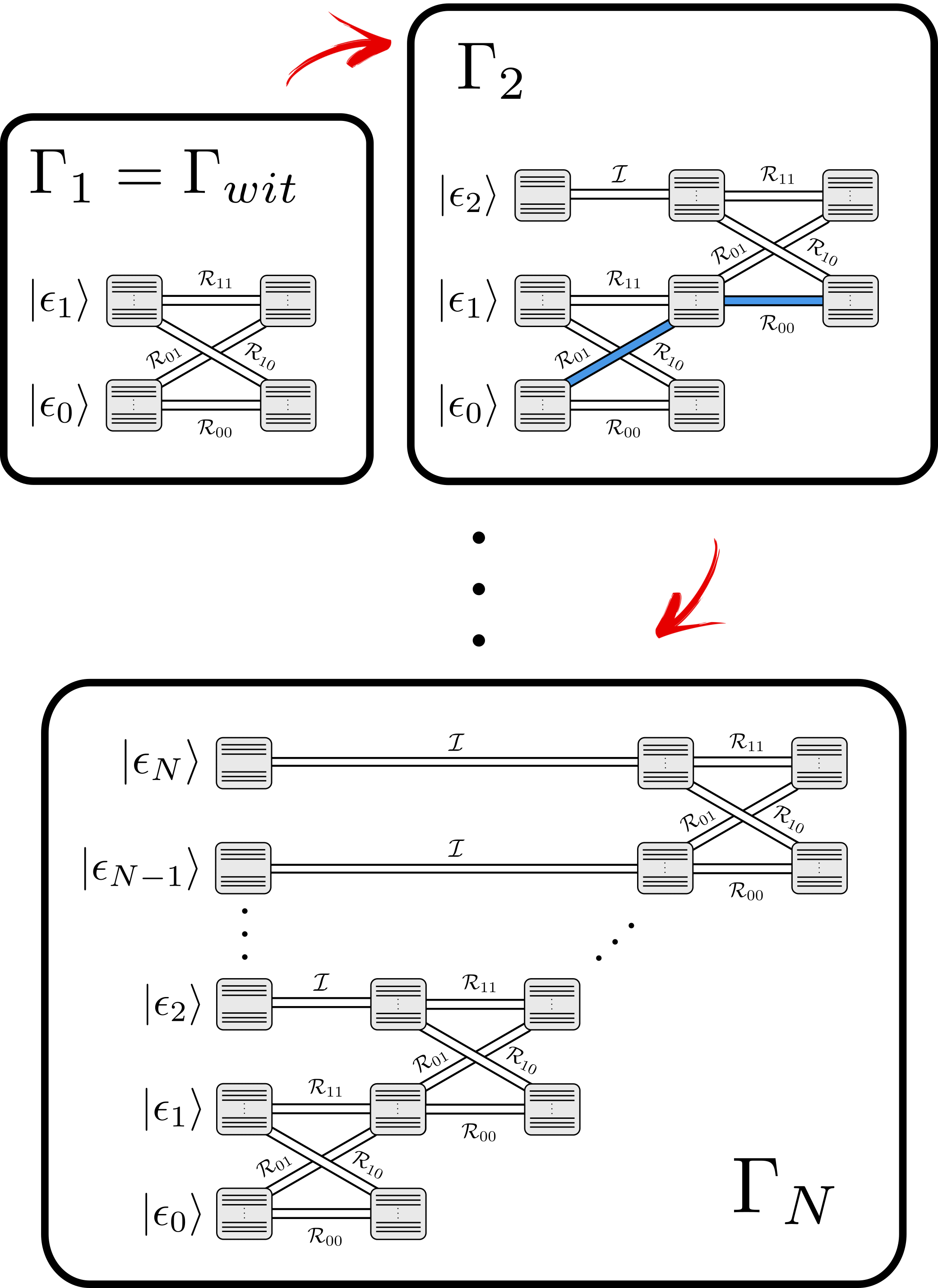}
\caption{Graphical explanation of the construction extending thermal operation $\Gamma_{wit}$ defined on a qubit battery $W_{wit}$ into a map $\Gamma_N$ acting on an $(N+1)$-level harmonic oscillator. When the battery has an infinite spectrum the construction yields a map $\Gamma_{osc}[\cdot] := \lim_{N \rightarrow \infty } \Gamma_{N} [\cdot]$. In the Appendix we show that for every $N \in \mathbb{N}$ channel $\Gamma_N$ is a thermal operation and in the limit $N \rightarrow \infty$ becomes effectively translationally invariant with threshold energy $\epsilon_{\text{min}} = \delta$. Blue color corresponds to an exemplary battery subchannel for a $3-$level battery, that is $\mathcal{R}_{01}^{(2)}\left[\cdot \right] = \mathcal{R}_{00} \mathcal{R}_{01} \left[\cdot \right]$. Other battery sub-channels can be determined in an analogous way.} 
\label{mfig2}
\end{figure}
In Appendix \ref{app_prop_ext_maps} we show that any map arising from this construction is a valid thermal operation. Moreover, the map additionally satisfies the ETI property (\ref{mt:eti}) with $\epsilon_{\text{min}} = \epsilon_1$. This implies that the assumptions of Theorem \ref{mt:theorem2} are met and $\Gamma_{osc}$ satisfies the second law in the form (\ref{mt:2nd_law_corr}).  In fact, with a small modification we can define the maps $\Gamma_{osc}$ also for a battery with a finite number of energy levels. To simplify presentation we postpone the details of this construction to the Appendix.

Finally, we apply the above construction to a  primitive map which uses work stored in the battery to form an arbitrary (energy-incoherent) quantum state out of a thermal state. As a result we obtain a method of recovering deterministic work for transformations acting on a harmonic oscillator battery with a spectrum bounded from below. In this sense we can think about the harmonic oscillator battery as owning both benefits of the previous two battery models: it allows for studying deterministic work which is an important concept in the resource-theoretic approach (property of the wit) while satisfying the second law of thermodynamics (property of the ideal weight).    
\begin{theorem}
\label{mt:theorem3a}
Let $\,\Gamma_{\text{wit}}$  be a thermal operation acting on system $\rm S$ and a two-level battery with Hamiltonian $H_{\rm W} = \delta \, \dyad{1}_{\rm W}$ with $\delta \geq 0$ and performing the transformation:
\begin{align}
\label{eq:tran_form1}
\Gamma_{wit}\left[\rho_{\rm S} \otimes \dyad{1}_{\rm W}\right] = \sigma_{\rm S} \otimes \dyad{0}_{\rm W},
\end{align}
for some {energy-incoherent} states $\rho_{\rm S}$ and $\sigma_{\rm S}$. Then there is a thermal operation $\,\Gamma_{osc}$ acting on $S$ and an infinite harmonic oscillator battery which performs the map (\ref{eq:mt5}), i.e.:
\begin{align}
\label{eq:tran_form2}
\Gamma_{osc} \left[\rho_{\rm S} \otimes \dyad{\epsilon_k}_{\rm W} \right] = \sigma_{\rm S} \otimes \dyad{\epsilon_k-\delta}_{\rm W},
\end{align} 
for all $k > 0$. The average work associated with this transformation is given by:
\begin{align}
    \langle w \rangle = -\delta = D_{\emph{max}}(\rho_{\rm S}||\tau_{\rm S}) - D_{\emph{max}}(\sigma_{\rm S}||\tau_{\rm S}),
\end{align}
where $D_{\emph{max}}(\rho||\sigma) = \log \min \{\lambda: \rho \leq \lambda \sigma \}$.
\end{theorem}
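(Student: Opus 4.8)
The plan is to produce $\Gamma_{osc}$ directly from the given wit operation via Construction \ref{con1}, and then to read off the transformation (\ref{eq:tran_form2}) and the work value from the resulting battery subchannels. First I would translate the hypothesis (\ref{eq:tran_form1}) into the subchannel language defined by (\ref{mt:rRcorr}): the decomposition of $\Gamma_{wit}$ on the input $\rho_{\rm S}\ot\dyad{1}_{\rm W}$ reads $\mathcal{R}_{10}(\rho_{\rm S})\ot\dyad{0}_{\rm W}+\mathcal{R}_{11}(\rho_{\rm S})\ot\dyad{1}_{\rm W}$, so comparing with (\ref{eq:tran_form1}) forces $\mathcal{R}_{10}(\rho_{\rm S})=\sigma_{\rm S}$ and, since no population survives in $\dyad{1}_{\rm W}$, $\mathcal{R}_{11}(\rho_{\rm S})=0$. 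The subchannels $\mathcal{R}_{00},\mathcal{R}_{01}$ are fixed by whatever valid completion $\Gamma_{wit}$ has on the $\dyad{0}_{\rm W}$ input, but they will turn out to be irrelevant for the $k>0$ branch.

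Next I would substitute these subchannels into the second line of Construction \ref{con1}. Because $\mathcal{R}_{11}(\rho_{\rm S})=0$, every term $\mathcal{R}_{00}\mathcal{R}_{01}^{i}\mathcal{R}_{11}(\rho_{\rm S})$ in the infinite sum vanishes, and the output collapses to $\mathcal{R}_{10}(\rho_{\rm S})\ot\dyad{\epsilon_{k-1}}_{\rm W}=\sigma_{\rm S}\ot\dyad{\epsilon_k-\delta}_{\rm W}$, using $\epsilon_{k-1}=\epsilon_k-\delta$; this is precisely (\ref{eq:tran_form2}). That the resulting $\Gamma_{osc}$ is a genuine thermal operation obeying the ETI property (\ref{mt:eti}) with $\epsilon_{\text{min}}=\epsilon_1$ I would import wholesale from Appendix \ref{app_prop_ext_maps}, which also guarantees the hypotheses of Theorem \ref{mt:theorem2}.

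The work statement then separates into an elementary part and an imported part. For each $k>0$ the battery is carried deterministically from $\ket{\epsilon_k}_{\rm W}$ to $\ket{\epsilon_k-\delta}_{\rm W}$, so the projective energy measurements appearing in (\ref{eq:mt_work_def}) return $w=-\delta$ with certainty; hence $\langle w\rangle=-\delta$ with no fluctuations, which is the deterministic work we set out to recover. To identify $-\delta$ with the $D_{\max}$ difference I would use that $D_{\max}(\cdot||\tau)$ is additive over tensor factors and monotone under thermal operations: applied to (\ref{eq:tran_form1}) this gives $D_{\max}(\rho_{\rm S}||\tau_{\rm S})+D_{\max}(\dyad{1}_{\rm W}||\tau_{\rm W})\geq D_{\max}(\sigma_{\rm S}||\tau_{\rm S})+D_{\max}(\dyad{0}_{\rm W}||\tau_{\rm W})$, while a direct evaluation of the two-level battery terms gives $D_{\max}(\dyad{1}_{\rm W}||\tau_{\rm W})-D_{\max}(\dyad{0}_{\rm W}||\tau_{\rm W})=\beta\delta$. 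Together these yield $\beta\delta\geq D_{\max}(\sigma_{\rm S}||\tau_{\rm S})-D_{\max}(\rho_{\rm S}||\tau_{\rm S})$.

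I expect the one delicate point to be upgrading this bound to the equality claimed in the theorem. Equality holds exactly when $\Gamma_{wit}$ is the optimal (minimal-gap) single-shot formation protocol, whose achievability is the qubit-battery result of \cite{Horodecki2013} that I am free to assume. One must then check that the extension does not spoil optimality, which it does not: the $k>0$ action of Construction \ref{con1} merely relabels the energy ladder by one rung and leaves the transition fluctuation-free, so no work beyond $\delta$ is ever spent or dissipated relative to the wit protocol. With that in hand the chain $\langle w\rangle=-\delta=D_{\max}(\rho_{\rm S}||\tau_{\rm S})-D_{\max}(\sigma_{\rm S}||\tau_{\rm S})$ closes, and the remainder is bookkeeping on energy eigenstates.
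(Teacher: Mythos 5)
Your proposal is correct and follows essentially the same route as the paper's own (very short) proof: read off $\mathcal{R}_{10}(\rho_{\rm S})=\sigma_{\rm S}$ and $\mathcal{R}_{11}(\rho_{\rm S})=0$ from (\ref{eq:tran_form1}), feed these into Construction \ref{con1}, and observe that the infinite sum collapses so that the $k>0$ branch yields (\ref{eq:tran_form2}); the thermality and ETI of $\Gamma_{osc}$ are imported from Appendix \ref{app_prop_ext_maps} in both cases. You actually go further than the paper on the last claim: the paper's proof stops after establishing (\ref{eq:tran_form2}) and never argues the identity $-\delta = D_{\max}(\rho_{\rm S}||\tau_{\rm S})-D_{\max}(\sigma_{\rm S}||\tau_{\rm S})$, whereas your monotonicity-plus-additivity argument correctly delivers the inequality $\beta\delta\geq D_{\max}(\sigma_{\rm S}||\tau_{\rm S})-D_{\max}(\rho_{\rm S}||\tau_{\rm S})$ and correctly flags that equality requires the minimal-gap protocol. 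One caveat worth keeping in mind: for general incoherent $\rho_{\rm S}$ the minimal $\delta$ permitted by thermomajorization can exceed this $D_{\max}$ difference (the max-slope condition is necessary but not sufficient), so the achievability of equality is guaranteed in the work-of-formation case $\rho_{\rm S}=\tau_{\rm S}$ that the surrounding text has in mind rather than for arbitrary pairs $(\rho_{\rm S},\sigma_{\rm S})$ — a looseness inherited from the theorem statement itself, not introduced by you.
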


\begin{proof}
Notice that (\ref{eq:tran_form1}) corresponds to a set of battery subchannels with $\mathcal{R}_{11}[\rho_{\rm S}] = 0$ and $\mathcal{R}_{10}[\rho_{\rm S}] = \sigma_{\rm S}$. Hence using Construction \ref{con1} leads to a map $\Gamma_{osc}$ which performs (\ref{eq:tran_form2}) for all $k > 0$. 
\end{proof}

If the map (\ref{eq:tran_form1}) is not exact but holds up to some $\epsilon$ error on the wit then deterministic work will have an additional contribution surpassing the free energy change, as specified by the correction terms in Theorem \ref{mt:theorem2}. However, this contribution can be suppressed by increasing the distance of the initial state of the battery to the vacuum regime. Thus we can start in an arbitrary (e.g. mixed) state of the battery and due to Theorem \ref{mt:theorem2} the approximate version of the second law of thermodynamics will still be satisfied. In this way the notion of deterministic work can be recovered when the energy spectrum of the battery is bounded from below.

Before we finish this section let us briefly note that the construction we present here allows one to recover proper thermodynamic work (i.e. satisfying the second law of thermodynamics) for all primitive maps on the wit. In particular, this means that also the thermalizing map from Example \ref{ex1}, when extended to a harmonic oscillator battery, no longer leads to violations of the second law whenever the battery is initialized above the vacuum state. 

\section{Landauer erasure and measures of work fluctuations}\label{Sec:Fluctuations}
In this section we examine the main differences between the ideal weight and the harmonic oscillator battery in the low-energy regime, that is when population of the vacuum state cannot be ignored. Naturally we expect that this would lead to certain limitations which do not occur for ideal weight battery. A question arises: what are the consequences of these limitations, and most importantly, what are their implications for general thermodynamic protocols? Here we provide a partial answer to this problem. Our main goal is to explore how the presence of the ground state is reflected in the fluctuations of thermodynamic work.

In what follows we focus our attention on the paradigmatic process of Landauer erasure and implement it using two different battery models: the ideal weight and the harmonic oscillator. We will compare the minimal size of fluctuations that are necessary for these two battery models to perform Landauer erasure with certain fidelity.

Before going into the details, a remark has to be made about possible measures of fluctuations. The very notion of a process being deterministic is always related to the fact that, for this process, a selected measure of fluctuations of work distribution takes zero value. For the sake of comparison with earlier results for translationally invariant batteries \cite{Richens2016}, we start with a measure of fluctuations which is based on the distance to the average work: 
\begin{align}
F_1[w]=\max_{w:p(w)\neq 0} |w - \langle w \rangle|    
\end{align}
Thanks to Example $2$ we see that, when using an ideal weight battery, it is not possible to obtain a non-trivial transformation via a deterministic process with respect to this measure. However, in Section \ref{sec_4A} we show that this can be achieved when the same transformation is performed using a harmonic oscillator battery with unoccupied vacuum state. 
Later in Section \ref{sec_4B} we compare the differences between the two battery models when fluctuations are quantified by the variance of the work distribution:
\begin{align}
F_{2}[p(w)]=\int \text{d} w\, p(w) (w - \langle w \rangle)^2.    
\end{align}
Finally in Section \ref{sec_4C} we investigate a general quantifier of fluctuations:
\begin{align}
    F[p(w)]=\int \text{d} w\, p(w) f(w - \langle w \rangle),
\end{align}
where $f(x)$ is an arbitrary real function satisfying $f(0) = 0$. 

\subsection{Landauer erasure: fluctuations measured via $F_{1}[p(w)]$.}
\label{sec_4A}
Consider a qubit $\rm S$ with a constant Hamiltonian $H_{\rm S} = H_{\rm S'} = 0$ prepared in the Gibbs state $\tau_{\rm S} = \frac{1}{2} \mathbb{1}_{\rm S}$. Suppose that we also have an imperfect machine which is able to erase qubits with some small failure probability $\varepsilon$ and which can tolerate fluctuations of work up to a certain value $c$. We therefore demand that a bound on the distance to the average energy is satisfied for all registered values of $w$:
\begin{align}
    \label{eq:bnd_fluc}
    |w - \langle w \rangle| \leq c.
\end{align}
Our goal is to check if putting constraints on the allowed fluctuations of work of the form (\ref{eq:bnd_fluc}) can limit the ultimate precision of erasure that can be achieved in the process. Let us assume that our machine maps the qubit with probability $1-\varepsilon$ to the state $\ket{0}_{\rm{S}}$ and with probability $\varepsilon$ fails and outputs the orthogonal state $\ket{1}_{\rm S}$. The action of the machine on $\rm{S}$ can be described by the effective transformation:
\begin{align}
\label{m_eq:3}
\Gamma_{\rm{S}} \left[\tau_{\rm S}\right] = \rho_{\rm S}(\varepsilon),
\end{align}
where $\rho_{\rm S}(\varepsilon) = (1-\varepsilon) \dyad{0}_{\rm S} + \varepsilon \dyad{1}_{\rm S}$ and $\Gamma_{\rm S} = \tr_{\rm W} \Gamma_{\rm SW}$ is a thermal operation reduced to system $\rm S$. We will implement this effective transformation using two different battery models: the ideal weight and the harmonic oscillator. For the ideal weight we will look for an optimal transformation whereas for the harmonic oscillator we will apply Construction \ref{con1} to extend the Landauer erasure map on the wit to the harmonic oscillator battery. Our goal here is to find the minimal $\varepsilon$ which can be achieved when work fluctuations are bounded by $c$ according to (\ref{eq:bnd_fluc}).

\textit{\textbf{Ideal weight}.}
Let $\Gamma_{\rm SW} = \Gamma_{weight}$ be the thermal operation that performs (\ref{m_eq:3}) using the ideal weight battery. Since any such operation commutes with the shift operator on the weight (\ref{eq:com_rel}) the transformation will be independent on the initial state of the battery. Hence, without loss of generality, we can choose the initial battery state to be $\rho_{\rm W} = \dyad{0}_{\rm W}$. The necessary and sufficient conditions for the existence of a thermal operation realizing (\ref{m_eq:3}) in this case are given by (\ref{mt:7}). In our particular example (fully-degenerate Hamiltonian) they reduce to:
\begin{align}
    \label{eq:ifonlyif}
 \forall \, s' \qquad \sum_{s, w} p(s', w|s) e^{\beta w} &= 1,
\end{align}
Given a process $\Gamma_{weight}$ described by $p(s', w|s)$ and satisfying (\ref{eq:ifonlyif}), the work generated in a particular state transition $\ket{s}_{\rm S} \rightarrow \ket{s'}_{\rm S}$ has probability distribution given by:
\begin{align}
    \label{work:flucts}
    p(w|s, s') = \frac{p(s', w|s)}{p(s'|s)}, 
\end{align}
where $p(s'|s) = \sum_w p(s', w|s)$ is the probability of a given transition on the system. In general work as described by (\ref{work:flucts}) fluctuates around the mean value $\langle w \rangle$. The authors of \cite{Richens2016} (Theorem 2, Supplementary notes) showed that the process which minimizes work fluctuations (\ref{eq:bnd_fluc}) without increasing its average value $\langle w \rangle$ satisfies:
\begin{align}
    p(w|s,s') = \delta(w - w_{ss'}).
\end{align}
In other words, the optimal process which minimizes work fluctuations has a fixed value of work $w_{ss'}$ for any given state transition $\ket{s}_{\rm S} \rightarrow \ket{s'}_{\rm S}$. 
Sometimes in literature such processes are called ``deterministic'' to reflect the fact that for each transition there is a well-defined work cost. In this work we will use the term ''deterministic'' only with respect to the distribution $p(w)$ and only when a specific measure of fluctuations becomes zero.  For simplicity of presentation we will assume that $w_{00} = w_{10} = w_0$ and $w_{01} = w_{11} = w_1$, i.e. the random variable $w$ takes the value $w_{0}$ when we erase the state and $w_{1}$ when we fail and output an orthogonal state. Applying transformation $\Gamma_{weight}$ to the initial state of the qubit and the ideal weight battery leads to the joint state:
\begin{align}\label{erasure}
\Gamma_{weight} \Big[\tau_{\rm S} \otimes  &\dyad{0}_{\rm W} \Big] = 
\\ &=\sum_{s,s', w} p(s', w|s) \dyad{s'}_{\rm S} \otimes \dyad{w}_{\rm W} \nonumber
\\ &=\sum_{s,s'} p(s'|s) \dyad{s'}_{\rm S} \otimes \dyad{w_{s'}}_{\rm W}, \nonumber
\end{align}
where the probabilities $p(s'|s)$ are chosen such that the transformation reduced to $\rm S$ correctly  reproduces (\ref{m_eq:3}).
The action of $\Gamma_{weight}$ is summarized in Tab. \ref{mt:tab}.
\begin{table}[!htbp]
  \begin{tabularx}{0.5\textwidth}{@{}lllcl@{}}
    \hline\hline
    & $ \ket{\mathbf{s}} \rightarrow \ket{\mathbf{s'}}$ \qquad \qquad & $\mathbf{p(s'|s)}$ \qquad \qquad & \textbf{work} & \\
    & $\ket{0} \rightarrow \ket{0}$ & $1-\varepsilon$  & $w_{0}$ &  \\
    & $\ket{1} \rightarrow \ket{0}$ & $1-\varepsilon$  & $w_{0}$ &  \\
    & $\ket{0} \rightarrow \ket{1}$ & $\varepsilon$  & $w_{1}$ &  \\
    & $\ket{1} \rightarrow \ket{1}$ & $\varepsilon$  & $w_{1}$ &  \\
    \hline\hline
  \end{tabularx}
  \caption{The action of map $\Gamma_{weight}$ on system $\rm S$ with the associated work costs.}
  \label{mt:tab}
\end{table}
Notice that the conditions (\ref{eq:ifonlyif}) imply that the shifts $\{w_{s'}\}$ must necessarily satisfy:
\begin{align}
\label{app_eq20}
(s' = 0) \qquad \qquad  e^{\beta w_{0}}  &= \frac{1}{2(1-\varepsilon)},
\\
\label{app_eq21}
(s' = 1) \qquad \qquad e^{\beta w_{1}} &= \frac{1}{2 \varepsilon}.
\end{align} 
Using these we find that the random variable $w$ can take on one of the following values:
\begin{align}
\label{meq:10}
w_{s'}  =\left\{
                \begin{array}{ll}
                  -k_{\text{B}} T \log 2 - k_{\text{B}} T \log (1-\varepsilon) \, \quad  &{s'} = 0, \\ \\
                 - k_{\text{B}} T \log 2 - k_{\text{B}} T \log \varepsilon & {s'} = 1.
                \end{array}
              \right.
\end{align}
The average work during this process can be calculated as:
\begin{align}\label{eq:averagework}
    \langle w \rangle_{weight} &= \sum_w p(w) \, w \\
    &= \sum_{s,s'} p(s)\, p(s'|s) w_{ss'} \nonumber \\
    &= (1-\varepsilon) w_0 + \varepsilon \, w_1 \nonumber \\
    &= - k_{\text{B}} T \log 2 + k_{\text{B}} T  h(\varepsilon) \nonumber,
\end{align}
where $h(x) := -(1-x) \log(1-x) - x \log x$ is the binary entropy function. In the case when fluctuations of work are bounded by a constant $c$ as in (\ref{eq:bnd_fluc}) we have:
\begin{align}
    c &\geq \max_{w: \, p(w) > 0} |w - \langle w \rangle_{weight}| \nonumber \\
    &= \max_{s'} |w_{s'} - \langle w \rangle_{weight}| \nonumber \\
    &= k_{\text{B}} T \, |h(\varepsilon) + \log \varepsilon|\nonumber \\
    &\geq - k_{\text{B}} T (\log \varepsilon +\log 2).
\end{align}
This limits the range of $\varepsilon$ which can be achieved when the fluctuations are constrained by $c$, i.e.:
\begin{align}
    \label{eq:eps_bnd1}
    \varepsilon \geq \frac{1}{2} e^{-c/k_{\text{B}} T}.
\end{align}
If we now take the error $\varepsilon \rightarrow 0$ the average work $\langle w \rangle$ will approach a finite value of $\langle w \rangle = - k_{\text{B}} T \log 2$. However, this can be accomplished only when access to high energies is provided, so that fluctuations of work can be unconstrained, and $c\rightarrow \infty$. 

\textit{\textbf{Harmonic oscillator battery}.}
Let us consider again the process from (\ref{m_eq:3}), but now implemented using a harmonic oscillator battery. 
 In order to construct the desired thermal operation we start with a primary process acting on the wit battery with energy separation $\delta$, i.e.:
\begin{align}
\label{eq_eq36}
\Gamma_{wit} \left[ \tau_{\rm S} \otimes \dyad{1}_{\rm W} \right]  = \rho_{\rm S}(\varepsilon) \otimes \dyad{0}_{\rm W}.
\end{align}
Using standard methods (e.g. thermo-majorization curves) we can determine the minimal value of $\delta$ for which (\ref{eq_eq36}) is a valid thermal operation, i.e. $\delta = - k_{\rm B} T \cdot D_{\text{max}}(\rho(\epsilon)||\tau) = k_{\text{B}} T \log 2 + k_{\text{B}} T \log (1-\varepsilon)$. Recall that the action of any thermal operation $\Gamma_{wit}$ can be written as:
\begin{align}
\Gamma_{wit} \Big[ \rho_{\rm S}  \otimes \rho_{\rm W} \Big] = &\Big(\mathcal{R}_{00}\left[\rho_{\rm S}\right]\! +\! \nonumber \mathcal{R}_{10}\left[\rho_{\rm S}\right] \Big) \otimes \dyad{0}_{W}+ \\  &\Big(\mathcal{R}_{01}\left[\rho_{\rm S}\right]\! +\! \mathcal{R}_{11}\left[\rho_{\rm S}\right]\Big) \otimes \dyad{1}_{W}.
\end{align}
For diagonal input states we can encode the action of subchannels $\{\mathcal{R}_{kk'}\}$ using a set of substochastic matrices $\{R_{kk'}\}$ acting on the diagonals of respective states. Let us denote the vector of initial probabilities of system $S$ with $\mathbf{x}= \text{diag}(\rho_{\rm S})$. A simple analysis shows that the action of $\Gamma_{wit}$ on diagonal states can be expressed using the set of matrices:
\begin{align}
R_{00} &= \begin{bmatrix}\label{mat1}
0 & 0 \\
\frac{1-2 \varepsilon}{2(1-\varepsilon)} & \frac{1-2 \varepsilon}{2(1-\varepsilon)}
\end{bmatrix}, \quad 
R_{10} = \begin{bmatrix}
1-\varepsilon & 1-\varepsilon \\
\varepsilon & \varepsilon
\end{bmatrix}, \\ 
R_{01} &= \begin{bmatrix}
\frac{1}{2(1-\varepsilon)} & 0 \\
0 & \frac{1}{2(1-\varepsilon)}
\end{bmatrix},\quad
R_{11} \label{mat2}= \begin{bmatrix}
0 & 0 \\
0 & 0
\end{bmatrix}.
\end{align}
In the above we chose $R_{00}$ and $R_{01}$ such that $\Gamma_{wit}$ preserves the Gibbs-state and hence is a valid thermal operation.

Using Construction \ref{con1} we can now extend $\Gamma_{wit}$ to a thermal operation $\Gamma_{osc}$ acting on a harmonic oscillator battery. The matrices $\{\widetilde{R}_{kk'}\}$ describing this new process are given by:
\newline \newline
\textbf{For $k = 0: $} 
\begin{align}
 {\widetilde{R}}_{kk'} &=  {R}_{00}\, {R}_{01}^{k'} \\ \nonumber
    &= \begin{bmatrix}
        0 & 0 \\
        \frac{1-2 \varepsilon}{[2(1-\varepsilon)]^{k'+1}} & \frac{1-2 \varepsilon}{[2(1-\varepsilon)]^{k'+1}}
        \end{bmatrix}   
\end{align}
\textbf{For $k > 0 : $}
\begin{align}
 {\widetilde{R}}_{kk'} = 
    \begin{cases}
    {R}_{10}, & \text{if $k' = k -1$}, \\
    0, & \text{otherwise}.
   \end{cases}  
\end{align}
We will now consider two different cases: an ideal one in which the vacuum state of the battery is not occupied and the more physical one in which we assume a small (but nonzero) population in the vacuum state.

\noindent \textit{\textbf{Harmonic oscillator battery: vacuum not occupied}.} 
The transformation constructed using matrices $\{\widetilde{R}_{kk'}\}$ in Construction \ref{con1} has the same action for all battery states above the ground state, that is:
\begin{align}
\label{eq:oscmap}
\Gamma_{osc} \left[ \tau_{\rm S} \otimes \dyad{\epsilon_k}_{\rm W} \right] = \rho_{\rm S}(\varepsilon) \otimes \dyad{\epsilon_{k}-\delta}_{\rm W}, 
\end{align} 
This holds for all initial states of the battery above the ground state, i.e. for all $\epsilon_k > \epsilon_{\text{min}} = 0$. This means that any initial state $\rho_{\rm W}$ of the battery (with $\langle \epsilon_0 |\rho_{\rm W} |\epsilon_0 \rangle = 0)$ will lead to the same transformation. Furthermore, it can be shown (see Appendix \ref{app_prop_ext_maps} for the details) that for any primitive map $\Gamma_{wit}$ and any initial state of the battery above the ground state the average work associated with the extended map $\Gamma_{osc}$ can always be expressed as:
\begin{align}
    \langle w \rangle_{osc} &= \nonumber \delta \cdot \left[\mathbf{1}^T (\mathbb{1} - R_{01})^{-1} R_{11}\, \mathbf{x} - 1\right] \\
    & = -\delta,
\end{align}
where $\mathbf{1}^T = (1, 1, \ldots, 1)$ is the (horizontal) identity vector and $\mathbb{1}$ is the identity matrix.  From (\ref{eq:oscmap}) it is clear that whenever the battery starts above its ground state the shifts $\{w_{s'}\}$ are the same whenever erasure succeeds or fails, i.e. for all $s'$:
\begin{align}
    w_{s'} = -\delta = - k_{\text{B}} T \log 2 - k_{\text{B}} T \log(1-\epsilon).
\end{align}
This also implies that the random variable $w$ does not fluctuate. Indeed we have:
\begin{align}\label{fluct}
    c &\geq \max_{w: \, p(w) > 0} |w - \langle w \rangle_{\text{osc}}| \nonumber \\\nonumber
     &=  |\delta - \langle w \rangle_{\text{osc}}| \\
    &= 0.
\end{align}
In particular this means that $\varepsilon$ does not depend on $c$ and hence there is no fundamental limit on $\epsilon$ allowed by the transformation, i.e.:
\begin{align}
    \label{eq:eps_bnd2}
    \varepsilon \geq 0.
\end{align}
In this way a harmonic oscillator battery faithfully reproduces the amount of deterministic work needed to perform erasure for an arbitrary low error rate $\varepsilon$, a feature which the ideal weight could not perform as described in Example \ref{dwexample}.

\noindent \textit{\textbf{Harmonic oscillator battery: vacuum occupied}.}
While the above is a valid mathematical construction, in a physical situation the vacuum state will be inevitably occupied, and therefore one should expect the harmonic oscillator battery to show fluctuations of work. To quantify them, we assume that the vaccum state is occupied with probability $\gamma$. In this case the situation is different to the ideal ($\gamma = 0$) case. This is because applying the thermal map $\Gamma_{osc}$ when the battery is in its ground state thermalizes the battery, i.e. returns a state with all levels of the harmonic oscillator battery occupied. However, although the map brings the battery to a full-rank state, the probability of occupying higher energy levels decays exponentially with the energy. Hence the entropy change in the battery is always finite and, in particular, does not diverge in the limit of infinite dimension of the battery.

(though with probability decaying exponentially fast with the energy). Furthermore, the occupation of the vacuum induces a proportional error on the system. In order to find the action on the system and the total distribution of work $p(w)$ let us note that our construction assures translational invariance for all energy levels above energy ${\epsilon_{\rm min}} = \epsilon_0$. This means that without loss of generality we can prepare the battery in the initial state:
\begin{align}
    \rho_{\rm W} = (1-\gamma) \dyad{\epsilon_1}_{\rm W} + \gamma \dyad{\epsilon_0}_{\rm W},
\end{align}
where $\gamma \in [0, 1]$. Applying our Landauer erasure map leads to the following state on the system and the battery:
\begin{align}
    \!\!\!\Gamma_{osc}[\tau_{\rm S} \!\ot\! \rho_{\rm W}] \!&=\! (1-\gamma) \Gamma_{osc} [\tau_{\rm S}\ot  \dyad{\epsilon_1}_{\rm W}]  \\
    &\quad+ \gamma\, \Gamma_{osc}[\tau_{\rm S} \ot \dyad{\epsilon_0}_{\rm W}] \\
    &= (1-\gamma)\rho_{\rm S}(\epsilon) \ot \dyad{\epsilon_0}_{\rm W} \\ 
    &\quad +\sum_{i = 0}^{\infty} \mathcal{R}_{01}^i \mathcal{R}_{00}[\tau_{\rm S}] \ot \dyad{\epsilon_i}_{\rm W}.\!
\end{align}
The map $\Gamma_{osc}$ implements Landauer erasure on the system with the total error $\varepsilon_{\rm tot} = \varepsilon(1-\gamma) + \gamma$, where $\varepsilon$ is the parameter inherited from $\Gamma_{wit}$, based on which $\Gamma_{osc}$ is constructed. Therefore, even for $\varepsilon=0$ perfect erasure is not possible and ultimately depends also on the occupation of the vacuum state. 

In this case the protocol can only be implemented if we allow $c \rightarrow \infty$, i.e. work $w$  can take all possible values between $0$ and $\infty$ (although this happens with exponentially small probabilities). In this way, even though in the ideal case we could in principle always choose $\varepsilon = 0$ for any value of $c \geq 0$, when the vacuum state is populated we must allow for unbounded fluctuations of work in order to carry out the erasure perfectly.

To summarize, we see that using a battery with broken translational symmetry can lead to an arbitrarily good precision of Landauer erasure just as in the case of the ideal weight battery, but with work fluctuations reduced to zero. However, once we consider a more physical situation in which the battery's ground state is occupied, both models behave essentially in the same way. It is an interesting open question whether the different behavior certified for $\gamma = 0$ can lead to some physical advantages over the ideal weight (e.g. resulting from lifting the translational invariance constraint), or it is just a mathematical idealization, similar as in the case of wit. We leave this case open for future research. 

\subsection{Landauer erasure: fluctuations measured via $F_{2}[p(w)]$.} \label{sec_4B}
We saw above that for a non-zero occupation of the vacuum state, $\Gamma_{osc}$ returns a state with all levels of the harmonic oscillator battery occupied (though with probability decaying exponentially with the energy), while the average energy remains finite. Therefore, the previously used measure of fluctuations, $F_{1}[w]$, takes infinite values. In order to give a quantitative description of the functioning of physical batteries, we therefore switch to considerations of statistical moments of energy changes, as they take into account not only values of registered work, but also probabilities of obtaining it. 

Immediately we can draw some conclusions about the functioning of physical batteries (i.e. with ground states), even without invoking their translational invariance properties in some regime above the cut-off energy. Whenever the vacuum state (or the lowest energy eigenstate on which the process acts non-trivially) is occupied, stochasticity of the map implies that the work distribution has positive contributions resulting from the vacuum being populated (as population on this level cannot be mapped to lower levels). On the other hand, Landauer erasure is a type of transformation which requires work (hence in our notation it is associated with a negative average work). Therefore higher energy levels of the battery have to be initially occupied, in order to assure the dominant negative contribution to the average. As a consequence, the variance can only be reduced to zero when the vacuum is not populated. Moreover, as we show below in Theorem \ref{theorem4}, for a fixed occupation of the vacuum, the variance of Landauer erasure cannot even be brought down arbitrarily close to zero. This can be treated as another argument suggesting that the ideal weight may not be able to encapsulate important physical effects in the regimes where the occupation of the vacuum cannot be ignored. In other words, it allows for processes which cannot be realised in practise, i.e. when one only has access to physical batteries.

Before delving into quantitative analysis let us start with a simple fact. The below theorem provides a bound on the fluctuations of work which is valid for any battery with the ground state and any stochastic transformation on the system and the battery (i.e. it is not necessarily translationally-invariant or Gibbs-preserving).
\begin{theorem}
\label{theorem4}
Let $\rho_{\rm W}$ be an arbitrary state of the battery with a non-zero occupation $\gamma$ of the ground state $\ket{\epsilon_0}_{\rm W}$, i.e.:
\begin{align}
    \gamma = \langle \epsilon_0 |\rho_{\rm W} |\epsilon_0 \rangle.
\end{align}
The variance of the work distribution $p(w)$ arising from any thermodynamic protocol with $\langle w \rangle \leq 0$ is bounded by:
\begin{align}
    \emph{Var}[w] \geq \gamma \langle w \rangle^2.
\end{align}
\end{theorem}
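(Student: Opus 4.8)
The plan is to exploit the single structural feature that distinguishes a physical battery from the ideal weight: the ground state $\ket{\epsilon_0}_{\rm W}$ has nothing below it, so any process that starts there can only raise the battery's energy. Since work is read off from projective energy measurements of the battery performed before and after the protocol (cf. (\ref{eq:mt_work_def})), the initial coherences of $\rho_{\rm W}$ play no role and the battery may be treated as a classical mixture over energy eigenstates, occupying level $k$ with probability $p_{\rm W}(k)$, and in particular the ground state with probability $p_{\rm W}(0)=\gamma$.

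First I would condition the work distribution on the outcome of the initial battery measurement, writing $p(w)=\gamma\,p(w\,|\,\epsilon_0)+(1-\gamma)\,p(w\,|\,\overline{\epsilon_0})$, where $p(w\,|\,\epsilon_0)$ is the distribution of work given that the battery started in the ground state. The crucial observation is that for every outcome in the support of $p(w\,|\,\epsilon_0)$ one has $w\geq 0$: a transition out of $\ket{\epsilon_0}_{\rm W}$ produces work $w=\epsilon_{k'}-\epsilon_0\geq 0$, because $\epsilon_0$ is the minimal energy of the spectrum. This uses only that the process is a stochastic transformation on a battery bounded from below, not translational invariance or Gibbs-preservation, which is exactly what makes the bound hold for \emph{arbitrary} thermodynamic protocols.

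Next I would lower-bound the variance by discarding the non-negative contribution coming from $p(w\,|\,\overline{\epsilon_0})$ and then using the sign hypothesis on $\langle w\rangle$:
\begin{align}
\text{Var}[w]=\int \text{d}w\, p(w)\,(w-\langle w\rangle)^2 \geq \gamma \int \text{d}w\, p(w\,|\,\epsilon_0)\,(w-\langle w\rangle)^2 \geq \gamma\,\langle w\rangle^2 .
\end{align}
The last inequality follows because, on the support of $p(w\,|\,\epsilon_0)$, combining $w\geq 0$ with the assumption $\langle w\rangle\leq 0$ gives $w-\langle w\rangle\geq -\langle w\rangle=|\langle w\rangle|\geq 0$, hence $(w-\langle w\rangle)^2\geq \langle w\rangle^2$ pointwise; the constant $\langle w\rangle^2$ then factors out of the normalised conditional distribution, yielding the claim.

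The main obstacle is conceptual rather than computational: one must justify that the ground-state branch forces $w\geq 0$, and recognise that the hypothesis $\langle w\rangle\leq 0$ is precisely what allows replacing $(w-\langle w\rangle)^2$ by the constant $\langle w\rangle^2$ on that branch. Once these two facts are in place the estimate is immediate and requires no knowledge of the detailed transition probabilities $r(s'k'|sk)$, which is why the statement applies to any stochastic map and not only to thermal operations.
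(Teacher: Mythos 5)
Your proof is correct and follows essentially the same route as the paper's: both arguments rest on the observation that the ground-state branch (which carries probability $\gamma$ after the initial energy measurement) can only yield $w\geq 0$, and that for $w\geq 0$ together with $\langle w\rangle\leq 0$ one has $(w-\langle w\rangle)^2\geq\langle w\rangle^2$ pointwise. Your conditioning on the initial battery level is in fact a slightly cleaner packaging of the paper's split of the sum into $w=0$ and $w>0$ contributions weighted by an auxiliary probability $s$.
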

\begin{proof}
The variance $\text{Var}[w]$ of a random variable $w$ distributed according to $p(w)$ is given by:
\begin{align}
    \text{Var}[w] = \sum_{w} p(w) (w - \langle w \rangle),
\end{align}
Let $s$ be the probability of performing work $w = 0$ when the battery is in the ground state. Using the assumption $\langle w \rangle \leq 0$ we can write:
\begin{align}
     \nonumber
     \text{Var}[w] &\geq p(w = 0) (0 - \langle w \rangle)^2 + \sum_{w > 0}p(w) (w-\langle w \rangle)^2 \\ \nonumber
     &\geq p(w = 0) \langle w \rangle^2 + \langle w \rangle^2 \sum_{w > 0} p(w) \\
     &\geq \gamma s \langle w \rangle^2 + \gamma (1-s) \langle w \rangle^2 
\end{align}
This proves the claim.
\end{proof}

\noindent \textit{\textbf{Harmonic oscillator battery}.}
In this section the average energetic cost of performing the transformation is measured in terms of the average work $\langle w\rangle_{osc}(\epsilon,\gamma)$, while the variance of the work distribution:
\begin{align}
    F_{2}[p(w)] &= \text{Var}_{osc}(\epsilon,\gamma) \nonumber \\
    &= \int \text{d} w\, p(w) (w - \langle w \rangle)^2
\end{align} 
will be used as a measure of fluctuations. For the map from Construction \ref{con1} based on (\ref{mat1}) and (\ref{mat2}), we can calculate the corresponding measures: 
\begin{equation}
\langle w\rangle_{osc}(\varepsilon,\gamma)/kT= - \delta \left(1-\frac{2 \gamma  (1-\varepsilon )}{1-2 \varepsilon }\right)
\end{equation}
and 
\begin{align}\label{var}
\frac{\text{Var}_{osc}(\varepsilon,\gamma)}{(kT)^2} = \gamma \delta^2 \, \frac{2 (1-\varepsilon ) (-2 \gamma  (1-\varepsilon )-2 \varepsilon +3) }{(1-2 \varepsilon )^2}.
\end{align}
where $\delta := \log 2(1-\varepsilon)$ is defined as before. It can be easily shown that, 
for a given value of $\varepsilon_{\text{tot}}=\varepsilon(1-\gamma)+\gamma$ both $\langle w\rangle_{osc}(\varepsilon,\gamma)$ and $\text{Var}_{osc}(\varepsilon,\gamma)$ are maximized for $\varepsilon=0$, $\gamma=\epsilon_{tot}$ (see also Fig. \ref{comparison}). As $\langle w\rangle_{osc}(\epsilon,\gamma)$ is always non-positive, its maximization corresponds to minimization of work spent on the erasure. However, the less work we spend on the erasure, the higher are its fluctuations, as measured by $F_2[p(w)]$. 

\noindent \textit{\textbf{Ideal weight battery}.}
These fluctuations on the harmonic oscillator battery should be compared with fluctuations of work registered on the ideal weight battery for an analogous process, i.e. a process implementing Landauer erasure with the same failure probability $\varepsilon_{\text{tot}}$. As a promising candidate for minimizing fluctuations in this regime, we take the process (\ref{erasure}) with work values specified in Table \ref{mt:tab} (which minimizes the measure $F_{1}[w]$).
Therefore, we denote the corresponding values of the average work and variance with subscript "$1$", i.e.:
\begin{equation}
\langle w\rangle_{1}(\varepsilon_{\rm tot})/kT=- \log 2 +h(\varepsilon_{\rm tot}),
\end{equation}
and 
\begin{align}
&\frac{\text{Var}_{1}(\varepsilon_{\rm tot})}{(kT)^2}  &\nonumber\\
=&\varepsilon_{\rm tot}\big(-\log (\varepsilon_{\rm tot})-h(\varepsilon_{tot})\big)^2&\nonumber\\
+&(1-\varepsilon_{\rm tot})\big(-\log (1-\varepsilon_{\rm tot})-h(\varepsilon_{\rm tot})\big)^2,
\end{align}
where $h(x)=-(1-x) \log(1-x) - x \log x$.
In the regime of small errors $\varepsilon_{\rm tot}$, fluctuations on the harmonic oscillator battery are smaller compared to the ones on the weight (see Fig. \ref{comparison}). As indicated above, this range grows with decreasing occupation of the vacuum state, in agreement with the description of the no-fluctuation case of the idealised harmonic oscillator battery presented in the previous section. We also observe that working with an oscillator battery leads to higher expenditures of work, as $\langle w\rangle_{osc}(0,\varepsilon_{\text{tot}})\leq \langle w\rangle_{1}(\epsilon_{\text{tot}})$, with equality for $\epsilon_{\text{tot}}=0$ and $\epsilon_{\text{tot}}=1$. 

Note however that it is not clear that the process (\ref{erasure}) minimizes the measure $F_2[w]$ for the ideal weight. In fact, the following probabilities define a process on the ideal weight which 
in the limit  $\lambda\rightarrow 0$ is deterministic with respect to $F_{2}[w]$:
\begin{align}\label{niedet}
    p(w|s' = 0) &= \delta(w-w_0), \\ p(w|s' = 1) &= (1 - \lambda) \delta(w-w_0) + \nonumber \\ &\hspace{14pt}\lambda \delta(w-w_1),   
\end{align}
where $\lambda \in [0, 1]$. Since in our case we have $p(s' = 0) = 1-\varepsilon$ and $p(s' = 1) = \varepsilon$ we can write the total distribution of work $p(w)$ as:
\begin{align}
    p(w) = (1-\lambda \varepsilon) \delta(w-w_0) + \lambda \varepsilon \delta(w-w_1).
\end{align}
Such a thermal operation exists if the Gibbs-stochasticity conditions are satisfied. This means that we must have:
\begin{align}
    \label{new_gsconds}
    e^{\beta w_0} &= \frac{1}{2(1-\varepsilon)}, \\
    (1-\lambda)e^{\beta w_0} + \lambda e^{\beta w_1} &= \frac{1}{2\varepsilon}.
\end{align}
Notice now that for an arbitrarily small $\varepsilon$ and an arbitrarily small $\lambda$ the work value $w_1$ can be always chosen large enough so that these two constraints are satisfied. In particular for any $\lambda,\varepsilon > 0$ the choice
\begin{align}
    w_1 = \frac{1}{\beta} \log\left[ \frac{1}{2 \lambda \epsilon} - \frac{1-\lambda}{2(1-\varepsilon)} \right]
\end{align}
leads to a legitimate thermal operation. Direct calculations in the limit $\lambda\rightarrow 0$ show:
\begin{align}
\langle w \rangle_2(\varepsilon) &= w_{0} = - k_{\rm B}T [\log 2 + \log(1-\varepsilon)], \\ 
\text{Var}_2(\varepsilon) &= 0. 
\end{align}
In this way by allowing for larger amounts of work which  occur with respectively smaller probabilities, we can recover deterministic work (with respect to $F_{2}[w]$). In fact, in the limit $\lambda\rightarrow 0$, process (\ref{niedet}) leads to the same average and variance of work as the process from Construction 1, with no vacuum state occupied. Note that, due to the convexity of the exponential function in (\ref{new_gsconds}) we have $\langle w\rangle_{2}\leq \langle w\rangle_{1}$, so here also a reduction of fluctuations is obtained at the cost of performing additional work.  

\begin{figure}[h!]
\centering
\includegraphics[width=\linewidth]{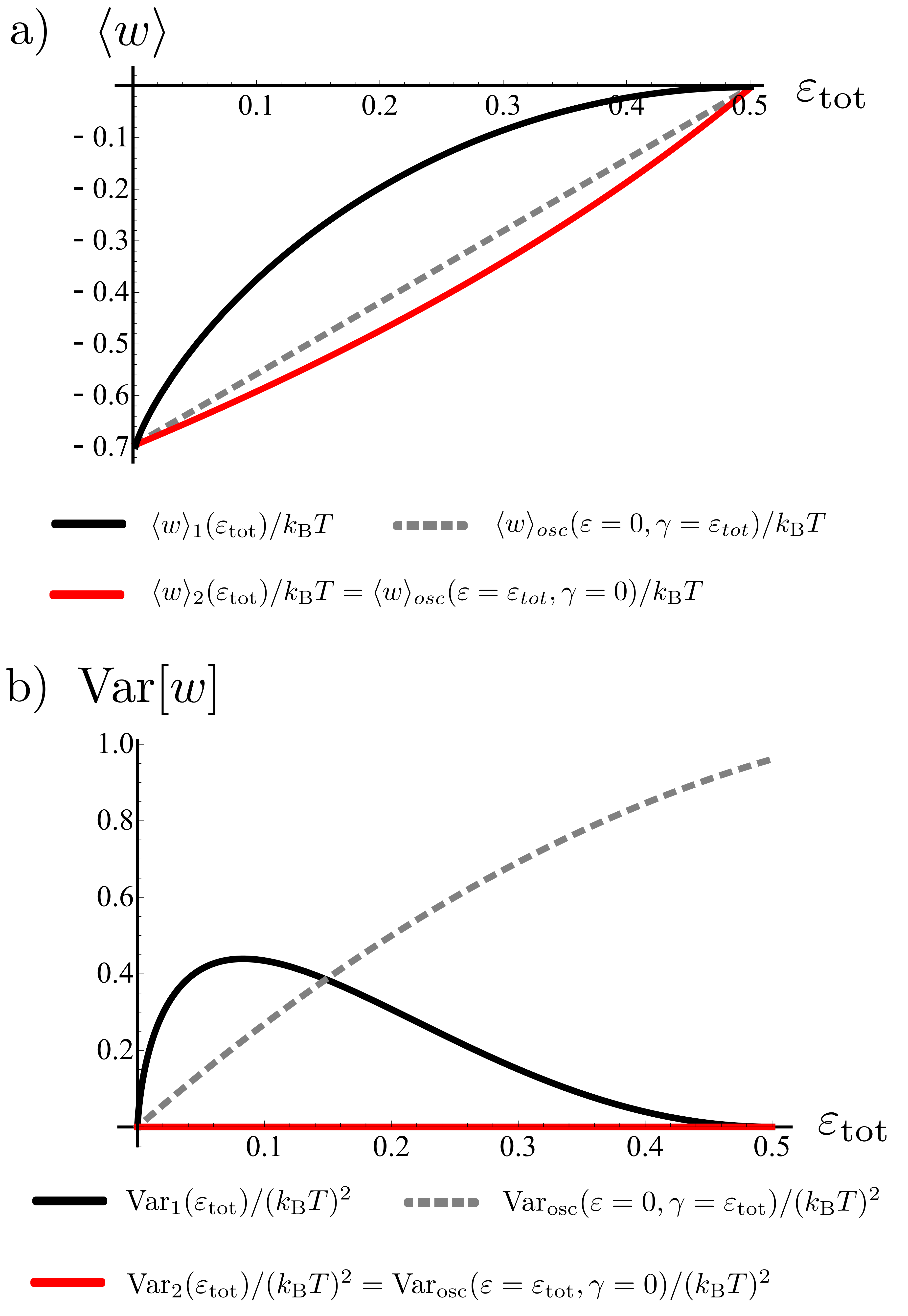}
\caption{\label{comparison}
Comparison between different processes performing Landauer erasure on ideal weight and physical battery, with error $\epsilon_{tot}$ on a system, stated in terms of (a) average work $\langle w\rangle$ and (b) variance of work distribution Var[$w$]. On physical battery, a process based on Construction \ref{con1} with (\ref{mat1}) and (\ref{mat2}), whose error $\varepsilon_{tot}$ stems entirely from vacuum occupation $\gamma$, is characterized by minimal work cost and maximal fluctuations indicated by the grey dashed lines. For the nonphysical case of the vacuum being not occupied, fluctuations can be reduced to zero, at the expense of increasing work cost (red lines). On the other hand, the same statistics can be achieved on the ideal weight for the process (\ref{niedet}) in the limit $\lambda\rightarrow 0$. For $\lambda=1$, solid black lines depict minimal possible work expense and corresponding variance of work distribution on the ideal weight.}
\end{figure}

The above shows that one can perform Landauer erasure using the ideal weight battery with deterministic work (with respect to variance as the measure of fluctuations). This, however, contrasts with the behaviour we observed for physical batteries which have a nonzero occupation of the ground state (see Theorem \ref{theorem4}). In the next section, focusing on processes which we chose here for illustrative purposes, we show the evidence suggesting that this discrepancy between battery models may be extended to a broader class of fluctuation measures.

\subsection{Landauer erasure: generalized model of fluctuations $F[w]$.}
\label{sec_4C}
Let us now consider a simple generalization of the quantifiers we studied in the previous sections, i.e. we will look at the following quantity:
\begin{align}
    \label{eq:f_new}
    F[w] := \int \text{d} w\, p(w) f(w - \langle w \rangle).
\end{align}
In the above $f(x)$ can be any function which satisfies $f(0) = 0$.  Intuitively, the above describes the (weighted) average work fluctuations that one is willing to tolerate in their protocol. In this way bounding (\ref{eq:f_new}) with some constant value $c$ provides a more general description of our willingness to tolerate fluctuations of work depending on their size. To gain some more intuition consider the following choice of the cost function $f(x)$:
\begin{align}
    f(x) = \begin{cases}
    0 &\text{if}\quad |x| \leq c, \\
    \infty &\text{else}.
    \end{cases} 
\end{align}
This corresponds to the situation of \emph{c-bounded work} which we considered in Section \ref{sec_4A}, when setting $c = 0$ leads to a deterministic work extraction, with respect to the measure $F_{1}[p(w)]$. On the other hand, choosing $f(x) = x^2$ leads to the variance of the work distribution (see Section \ref{sec_4B}).

We will now again consider the example of approximate Landauer erasure, but now expressing constraints in terms of the function $F$ defined in (\ref{eq:f_new}). Our goal here is similar as before: we would like to check if putting constraints on the allowed fluctuations of work can lead to any bounds on the ultimate precision of the Landauer erasure process. We will again look at the problem first from the perspective of using the ideal weight as the battery and second using the harmonic oscillator with the ground state.

\textit{\textbf{The ideal weight battery}.} 
Let us consider the approximate Landauer erasure process performed on the weight described by the process (\ref{niedet}).
As mentioned before, by allowing larger amounts of work $W_{1}$ occurring with respectively smaller probabilities $\varepsilon\lambda$, we can recover deterministic work (with respect to $F_{2}[p(w)]$). This analysis, however, does not take into account the size of fluctuations. We should expect that small fluctuations around the average $\langle w \rangle$ should not be as adversarial as the large ones, and similarly large fluctuations should be more costly to tolerate. Hence we instead put a constraint on the average cost of fluctuations:
\begin{align}
    \label{eq:const_c}
    F[w] \leq c.
\end{align}
To focus our attention let us choose an exponential cost function $f(x) = e^{|x|}-1$, i.e. we are going to treat small fluctuations as almost free and larger ones as significantly more costly. Any function with exponential or better decay in $x$ would lead to the same qualitative conclusions. In this case our constraint (\ref{eq:const_c}) becomes:  
\begin{align}
    (1-\lambda \varepsilon) e^{\beta(w_0 - \langle w \rangle)} + \lambda \varepsilon e^{\beta(w_1 - \langle w \rangle)} \leq c.
\end{align}
Since both $\lambda,\varepsilon > 0$ this also implies:
\begin{align}
     \label{eq:xyz1}
     \lambda \varepsilon e^{\beta w_1} \leq c \, e^{\beta \langle w \rangle} \leq c,
\end{align}
since $\langle w \rangle \leq 0$ which follows by applying Jensen's inequality to (\ref{new_gsconds}). Let us now look at the Gibbs-stochasticity conditions (\ref{new_gsconds}). By solving for $w_0$ we can express the second condition as:
\begin{align}
    \label{eq:xyz2}
    \lambda \varepsilon e^{\beta w_1} = \frac{1-2\varepsilon + \varepsilon \lambda} {2(1-\varepsilon)}\geq \frac{1 - 2 \varepsilon}{2(1-\varepsilon)}.
\end{align}
Combining (\ref{eq:xyz1}) with (\ref{eq:xyz2}) leads to the following bound on $\varepsilon$:
\begin{align}
    \varepsilon \geq \frac{1}{2} - \frac{c}{2(1-c)}.
\end{align}
Hence we have a non-trivial bound for all $c \in [0, \frac{1}{2})$. Importantly, this bound holds for all initial states of the weight battery and states that its impossible to obtain lower $\varepsilon$ for any possible thermal operation.

\textit{\textbf{Harmonic oscillator battery}.}
Let us now consider the same transformation but performed using the harmonic oscillator battery. In the ideal case when the vacuum state is not occupied the whole analysis can be performed just as in the previous section. However, the situation changes if we start with a non-zero population of the vacuum state. In that case $\Gamma_{osc}$ returns a state with all levels of the harmonic oscillator battery occupied (though with probability decaying exponentially fast with the energy). Furthermore, occupation of the vacuum induces a proportional error on the system. In order to find the action on the system and the total distribution of work $p(w)$ let us note that our construction assures translational invariance for all energy levels above $k = 0$. This means that without loss of generality we can prepare the battery in the following initial state:
\begin{align}
    \rho_{\rm W} = (1-\gamma) \dyad{\epsilon_1}_{\rm W} + \gamma \dyad{\epsilon_0}_{\rm W},
\end{align}
where $\gamma \in [0, 1/2]$. Applying our Landauer erasure map leads to the following final state on the system and the battery:
\begin{align}
    \Gamma_{osc}[\tau_{\rm S} \ot \rho_{\rm W}] &= \nonumber (1-\gamma) \Gamma_{osc} [\tau_{\rm S}\ot \dyad{\epsilon_1}_{\rm W}] + \\ &\nonumber \hspace{37pt} \gamma\, \Gamma_{osc}[\tau_{\rm S} \ot \dyad{\epsilon_0}_{\rm W}] \\ \nonumber
    &= (1-\gamma)\rho_{\rm S}(\varepsilon)_{\rm S} \ot \dyad{\epsilon_0}_{\rm W} + \\  & \qquad \gamma \sum_{i = 0}^{\infty} \mathcal{R}_{01}^i \mathcal{R}_{00}[\tau_{\rm S}] \ot \dyad{\epsilon_i}_{\rm W}.
\end{align}
Notice that we also have:
\begin{align}
    \mathcal{R}_{01}^i\mathcal{R}_{00}[\tau_{\rm S}] =
     \frac{1-2\varepsilon}{2^{i+1}(1-\varepsilon)^{i+1}} \dyad{1}_{\rm S}.
\end{align}
To simplify notation let us denote $\alpha_i := \tr\mathcal{R}_{01}^i\mathcal{R}_{00}[\tau_{\rm S}]$. The final state on the system becomes:
\begin{align}
    \rho_{\rm S}' &= \tr_{\rm W}  \Gamma_{osc}[\tau_{\rm S} \ot \rho_{\rm W}]  \nonumber \\ &= (1-\varepsilon_{\rm tot}) \dyad{0}_{\rm S}\! +\! \varepsilon_{\rm tot} \dyad{1}_{\rm S},
\end{align}
where $\varepsilon_{\rm tot} = \varepsilon + \gamma(1-\varepsilon)$ is the effective total error on the system. Hence it is now not only the channel imperfection but also occupation of the vacuum which bounds the precision of the Landauer erasure. The work distribution can be now computed as: 
\begin{align}
    p(w) &=  \left(1-\gamma\right) \delta(w+\delta) + \gamma \sum_{i=0}^{\infty} \alpha_{i} \delta(w-i\delta) 
\end{align}
The average work $\langle w \rangle_{osc}$ then becomes:
\begin{align}
    \langle w \rangle_{osc} = 
    -\delta \left(1 - \frac{2 \gamma(1-\varepsilon)}{1-2\varepsilon}\right)
\end{align}
In this case our average cost function (\ref{eq:const_c}) with exponential cost $f(x) = e^{|x|}$ becomes:  
\begin{align}
    F[p(w)] = (e^{2\beta \delta \gamma}-1) - \frac{\gamma}{2} e^{-2 \beta \delta \gamma}\left(1-\frac{e^{\beta \delta}}{2-e^{\beta \delta}}\right)
\end{align}
Notice that if we now choose the smallest possible $\delta = k_{\rm B}T \log [2(1-\varepsilon)]$ we obtain an expression diverging to infinity with $\varepsilon\rightarrow 0$. This also holds if we choose a bigger work value, i.e. $\delta = k_{\rm B}T \log 2$. 

We therefore see that, for a broad class of fluctuation measures $F[p(w)]$ (\ref{eq:f_new}) with $f(x)$ increasing at least exponentially, physical batteries with occupied vacuum lead to infinite deviations from deterministic work. This is in line with our observation that occupation of the ground state leads to fluctuations of work which can never be brought down arbitrarily close to zero. In this respect the two battery models (the ideal weight and the harmonic oscillator battery) lead to different thermodynamic predictions in the low energy regime, i.e. when the occupation of the ground state cannot be ignored.

\section{Summary and future work}
In classical thermodynamics the existence of a perfect “work receiver” is a standard and very often implicitly accepted assumption. This is also the case for many contemporary results in quantum and stochastic thermodynamics (e.g. fluctuation theorems) where work is described by energy changes on the system. In this work we argued towards breaking this paradigmatic assumption and studying a truly physical (i.e. \emph{explicit} and \emph{bounded}) work storage device, a harmonic oscillator battery.

We have observed that the effects related to the ground state of the work-storage device manifest in the Jarzynski fluctuation theorem. We determined the estimates on fluctuations of energy changes of the battery which replace the standard fluctuation theorem for the case when the battery is bounded from below.

Furthermore, using these estimates we derived corrections to the second law of thermodynamics and showed that they vanish exponentially fast with the distance of the initial state of the battery to the bottom of its spectrum. These corrections allow to determine when contributions to the average energy of the battery can be treated as average work, hence providing a quantitative tool to differentiate when the battery operates accurately and when it acts as an entropy sink. 

Notably, mathematical forms of our bounds on work fluctuactions and average work remain valid even when coherence is present in the system, though terms which in the incoherent case refer to free energy loose this association. As a consequence, the bounds can no longer be used as tools for assesment of physicality of work. It would be interesting to see if there is a positive interplay between quantum effects resulting from the existance of a ground state of the battery and coherence between its energy states, when it comes to amount of extractable average work.

The bounds presented here are a step forward towards understanding to what extent the energy changes on physical batteries can be treated as a genuine thermodynamic work. We emphasize that the deviations from the second law which we described concern the average work rather than the work itself, hence they are of completely different character than deviations reported by standard fluctuation theorems. In the regime of translational symmetry these corrections are negligible, hence the shifts on the battery can be treated as thermodynamic work. This naturally leads to an important open question: what is work if we are not using the battery in the translationally invariant regime? Another interesting investigation in this direction would be to asses the corrections to the second law in terms of the entropy decrease of the system and the bath, rather than in terms of the average work, as we have done in this paper. Furthermore, one can look for saturable upper bounds on the LHS of (\ref{mt:expkb}), so that in the limit $k \rightarrow \infty$ the standard Jarzynski identity can be recovered.

In this work we also laid the first step in answering the fundamental question: how thermodynamic laws and fluctuation relations modify if we take into account the limitations imposed by the presence of the lowest energy state of the battery? This question is relevant both in classical and quantum thermodynamics and is not exclusively related to the resource-theoretic framework which we use here. In this sense we believe that the results we present here open the door for deriving more universal corrections that would apply to arbitrary models of work reservoirs, as well as other frameworks for thermodynamics.

We also showed that the harmonic oscillator battery model correctly reproduces the amount of single-shot work of formation, originally introduced using qubit as the battery system in \cite{Horodecki2013}. Not only this allows to study the two seemingly contradictory notions of \emph{single-shot} and \emph{fluctuating} work using a single battery model, but also, due to Theorem 2, allows to determine when we can treat deterministic work as a genuine thermodynamic work satisfying the second law of thermodynamics. A natural open question is whether this result can be extended to the case of work of distillation, i.e. when work is distilled from a quantum state and stored in the battery.

Furthermore, we demonstrated that in the regime where the population of the battery ground state cannot be neglected, the ideal weight battery does not provide a proper description of physical processes. In the presented example of Landauer erasure we showed that there is a fundamental lower bound on the minimal variance of work distribution that any physical model of a battery has to satisfy. This bound is violated for the ideal weight and shows that fluctuations of thermodynamic work in the low-energy regime behave significantly different than in the regime of high energies.

Finally, we showed that physical batteries will suffer from the existence of the ground state. This influences our understanding of the notion of work as well as asks for reformulation of the whole domain of fluctuation relations whenever the battery is used in the "close to the vacuum" regime. However, one may also ask an orthogonal question of whether physical batteries can offer any kind of advantage over the ideal weight. This must necessary happen in a way that this advantage be "larger" than the violation of the second law as quantified by our correction terms. The arising advantage could not be then attributed only to the second law violation but rather would describe a genuine improvement. This is indeed not excluded since in our model, by accepting a weaker notion of translational invariance, we effectively allow for a larger class of operations than in the case of the ideal weight model. 

\begin{acknowledgments}
We would like to thank Andrzej Grudka, Edgar Aguilar, Mischa Woods, Rafał Demkowicz-Dobrzański, Paul Skrzypczyk, Tony Short, Llouis Masanes, Marcin Łobejko and Jonathan Oppenheim for helpful and inspiring discussions. MH, PLB and PM are supported by National Science Centre, Poland, grant OPUS 9 2015/17/B/ST2/01945. PLB acknowledges support from the UK EPSRC (grant no. EP/R00644X/1). PLB is also supported by National Science Centre, Poland, grant PRELUDIUM 14 2017/27/N/ST2/01227. MH acknowledges support by the Foundation for Polish Science through IRAP project co-financed by EU within Smart Growth Operational Programme (contract no. 2018/MAB/5). This work was also supported by John
Templeton Foundation. 
\end{acknowledgments}

\bibliographystyle{unsrtnat}
\bibliography{citations}

\onecolumn\newpage
\appendix

\section*{Appendices}
In this section we characterize in details the thermodynamic formalism used in the main text (Appendix \ref{appA}) and describe three different battery models used in the literature to define work in the quantum regime (Appendix \ref{appB}). In Appendix \ref{appD} we prove Theorem \ref{theorem1} and in Appendix \ref{app_proof_thm2} we prove Theorem \ref{mt:theorem2}. In Appendix \ref{app_prop_ext_maps} we formally derive the properties of the map $\Gamma_{osc}$.

\section{Thermodynamic framework}
\label{appA}
\subsection{Thermal operations}
The setting of thermal operations consists of a system $\rm S$ with Hamiltonian $H_{\rm S}$ that we would like to apply transformations on, an infinite heat bath $\rm B$ with Hamiltonian $H_{\rm B}$, initially in a Gibbs state $\tau_{\rm B} = \frac{1}{Z_{\rm B}} e^{-\beta H_{\rm B}}$ where $Z_{\rm B}$ is the partition function $Z_{\rm B} = \tr e^{-\beta H_{\rm B}}$, and a battery system $\rm W$ with Hamiltonian $H_{\rm W}$ which we do not define yet. Any joint transformation of the system $\rm S$, bath $\rm B$ and weight $\rm W$ in this framework can be represented by a completely positive trace-preserving (CPTP) channel $\Gamma_{\rm SBW}$ satisfying the following conditions (see \cite{Alhambra2016,Masanes2017,Richens2016,Richens2017} for a more detailed discussion):
\begin{description}
\item[(A) Postulate I]\textit{(strict energy conservation)} 
\begin{align}
\label{app_eq1A}
\left[U, H_{\rm S} + H_{\rm W} + H_{\rm B} \right] = 0.
\end{align} \noindent
This implies that the energy of the joint system $\rm SBW$ is conserved at each time of the action of $\Gamma_{\rm SBW}$. 
\item[(B) Postulate II]\textit{(microscopic reversibility)} \\ \\
\noindent The joint transformation of the system, bath and battery is unitary. Thus, there exists a unitary operator
\begin{align*}
U: \qquad \mathcal{H}_{\rm S} \otimes \mathcal{H}_{\rm W} \otimes \mathcal{H}_{\rm B} \rightarrow \mathcal{H}_{\rm S} \otimes \mathcal{H}_{\rm W} \otimes \mathcal{H}_{\rm B},
\end{align*}
such that:
\begin{align}
\label{app_eq1B}
\Gamma_{\rm SWB}\left[ \rho_{\rm SWB}\right] = U \rho_{\rm SWB} U^{\dagger},
\end{align}
where $U U^{\dagger} = \mathbb{1}_{\rm SWB}$ and $\mathcal{H}_{\rm A}$ denotes the Hilbert space associated with system $\rm A$. In other words, $\Gamma_{\rm SWB}$ has control over all microscopic degrees of freedom of the joint system $\rm SWB$ and no information is dumped into the environment.
\item[(C) Postulate III]\textit{(definition of work)} \\ \\
Before and after applying the global map $\Gamma_{\rm SWB}$ the energy of the battery is measured obtaining outcomes $\ket{\epsilon_k}_{\rm W}$ and $\ket{\epsilon_k+w}_{\rm W}$ respectively. The thermodynamic work $w$ is a random variable with probability distribution given by:
\begin{align}
\label{app_eq1C}
\quad p(w) = \sum_k \tr\left[\Pi_{\epsilon_k}\Gamma_{\rm SWB}\left[ \Pi_{\epsilon_k} \rho_{\rm SW} \Pi_{\epsilon_k}\otimes \tau_{\rm B} \right]\right],
\end{align} 
where $\Pi_{\epsilon_k} = \mathbb{1}_{\rm S} \otimes \dyad{\epsilon_k}_{\rm W} \ot \mathbb{1}_{\rm B}$ are projectors onto the energy eigenstates of the battery $\rm W$. 
\end{description}
\noindent
Thermal operations can be extended to the case when the Hamiltonian changes. This is done by adding an ancillary qubit system $C$ with a trivial Hamiltonian $H_{\rm C} = 0$ called the \emph{switch}. The total Hamiltonian then reads:
\begin{align}
    H = H_{\rm S} \otimes \dyad{0}_{\rm C} + H_{\rm S}' \otimes \dyad{1}_{\rm C} + H_{\rm W} + H_{\rm B}.
\end{align}
Let us now see what the first postulate implies for this model. Let $V$ be the global unitary which acts on $\rm SBW$ and also on the system $C$, i.e.:
\begin{align}
    V = U \ot \dyad{1}{0}_{\rm C} + U' \ot \dyad{0}{1}_{\rm C} 
\end{align}
We also assume that the switch always starts in the state $\dyad{0}_{\rm C}$ and ends up in the state $\dyad{1}_{\rm C}$. The new unitary $V$ then performs the switching as:
\begin{align}
    V(\rho_{\rm SWB} \ot \dyad{0}_{\rm C})V^{\dagger} = \rho_{\rm SWB}' \ot \dyad{1}_{\rm C}, 
\end{align}
for any $\rho_{\rm SWB}$. The first postulate then implies:
\begin{align}
    \label{changing-ham}
    U(H_{\rm S} + H_{\rm W} + H_{\rm B}) = (H_{\rm S} + H_{\rm W} + H_{\rm B}) U 
\end{align}
Effectively this means that the reduced state on the system, bath and the battery is as in (\ref{app_eq1B}), but the unitary $U$ does not necessarily commute with the initial nor final Hamiltonians but satisfies (\ref{changing-ham}).

\noindent
In what follows we will be interested in the joint dynamics of the system and the battery and thus the transformations we consider here are CPTP channels of the form:
\begin{align}
\label{app_eq2}
\Gamma_{\rm SW}\left[\rho_{\rm SW} \right] = \tr_{\rm B} \Gamma_{\rm SWB}\left[\rho_{\rm SW}  \otimes \tau_{\rm B}\right].
\end{align}
We will refer to any $\Gamma_{\rm SW}$ of the above form as a \emph{thermal operation} (TP). It is important to note that the channel $\Gamma_{\rm SW}$ provides only partial information about the action of $\Gamma_{\rm SWB}$, but for our purposes this is enough since we are ultimately interested in the work distribution $p(w)$. In this paper we consider processes where the input and output states of $\rm SW$ are both diagonal in the energy eigenbasis of the joint Hamiltonian $H_{\rm SBW}$. For such states the action of $\Gamma_{\rm SW}$ can be fully encoded in a stochastic matrix $R$ with elements $0 \leq r(s'k'|sk) \leq 1$ defined via: 
\begin{align}
\label{app_eq3}
\Gamma_{\rm SW}  \Big[ \dyad{s}_{\rm S} \otimes  \dyad{\epsilon_k}_{\rm W} \Big] = \sum_{s',k'} r(s'k'|sk) \dyad{s'}_{\rm S} \otimes \dyad{\epsilon_{k'}}_{\rm W}.
\end{align} 
Due to postulates (\textbf{I}-\textbf{III}) matrix elements $r(s'k'|sk)$ must satisfy certain conditions in order to describe a valid thermal operation which takes a diagonal state $\rho_{\rm SW} = \sum_{s,k} p(s,k) \, \dyad{s}_{\rm S} \otimes \dyad{\epsilon_k}_{\rm W}$ to some other diagonal state $\sigma_{\rm SW} = \sum_{s,k} q(s,k)\, \dyad{s}_{\rm S} \otimes \dyad{\epsilon_k}_{\rm W}$. These constraints are: \\
\begin{tcolorbox}[title=Thermal operations]
\small
\begin{align}
\label{app_eq4}
&\forall_{s',k'} \qquad  \qquad  \sum_{s,k} p(s, k)\, r(s'k'|sk) = q(s',k'), \\
\label{app_eq5}
&\forall_{s',k'}  \qquad  \qquad  \sum_{s,k} r(s'k'|sk) e^{-\beta (\epsilon_k - \epsilon_{k'} +E_s - \widetilde{E}_{s'})} = 1 \\
\label{app_eq6}
&\forall_{s,k}  \qquad  \qquad  \, \, \, \sum_{s',k'} r(s'k'|sk) = 1, \\
\label{app_eq7}
&\forall_{s',k',s,k} \qquad  \qquad  \, \, \, r(s'k'|sk) \geq 0,
\end{align}
\normalsize
\end{tcolorbox} \vspace{10pt}
\noindent where $\epsilon_i$ are the energies of the battery $\rm W$ associated with states $\dyad{i}_{\rm W}$ and $E_s$ and $\widetilde{E}_{s'}$ are the energies associated with the system states for the  initial and final Hamiltonian on $S$ respectively. Condition (\ref{app_eq4}) implies that the stochastic map $\Gamma_{\rm SW}$ is able to create state $\sigma_{\rm S}$ from $\rho_{\rm S}$ and condition (\ref{app_eq5}) ensures that the fixed point of the map is the joint Gibbs state $\tau_{\rm S} \otimes \tau_{\rm W} = \left(Z_{\rm S} Z_{\rm W}\right)^{-1} \cdot \sum_{s,k} e^{-\beta (\epsilon_k +E_s)}$. Conditions (\ref{app_eq6}) and (\ref{app_eq7}) ensure that $R$ is a stochastic matrix and thus $\Gamma_{\rm SW}$ is a CPTP channel. It turns out (see \cite{Marshall2011}) that the constraints from (\ref{app_eq4}$-$\ref{app_eq7}) can be satisfied by a suitably chosen stochastic matrix $R$ if and only if the state $\rho_{\rm SW}$ thermomajorizes $\sigma_{\rm SW}$, denoted by $\rho_{\rm SW} \succ_{T} \sigma_{\rm SW}$. Thermomajorization allows to determine which state transformations are possible in terms of thermal operations. The following  definition was originally introduced in \cite{Ruch1976} and presented as thermomajorization in \cite{Horodecki2013}:
\begin{definition}[thermomajorization]
Let $\rho$ and $\sigma$ be two quantum states block-diagonal in the energy eigenbasis and let $\dim{\rho} = \dim{\sigma} = d$. The order relation of thermomajorization, denoted by $\succ_T$, is defined as:
\begin{align}
\rho \succ_T \sigma \iff \, \forall_{k} \quad \sum_{i = 0}^k \frac{p_i}{e^{-\beta E_i}}  \geq \sum_{i = 0}^k \frac{q_i}{e^{-\beta E_i}},
\end{align}
where $p_i$ and $q_i$ are eigenvalues of $\rho$ and $\sigma$, respectively, reordered such that $\frac{p_1}{e^{-\beta E_1}} \geq \frac{p_2}{e^{-\beta E_2}} \ldots \frac{p_d}{e^{-\beta E_d}}$ and $\frac{q_1}{e^{-\beta E_1}} \geq \frac{q_2}{e^{-\beta E_2}} \ldots \frac{q_d}{e^{-\beta E_d}}$, and $E_i$ denotes the $i$-th energy eigenvalue.
\end{definition} \noindent 
Thus, if the spectrum of the state $\sigma_{\rm SW}$ is thermomajorized by the spectrum of $\rho_{\rm SW}$ then there always exists a stochastic matrix $R$ with elements $r(s'k'|sk)$ that satisfy (\ref{app_eq4}$-$\ref{app_eq7}) and it follows, due to \cite{Horodecki2013}, that for diagonal states there always exists a thermal operation with work $\Gamma_{\rm SW}$ that performs this transformation. 

\subsection{The average work and work variance}\label{AppVar}
Let us now study the work required to perform a given process $\Gamma_{\rm SW}$, transforming state $\rho_{\rm SW}$ into another state $\sigma_{\rm SW}$. Consider the input state of the system and battery to be $\rho_{\rm SW} = \sum_{s,k} p_{\rm SW}(s, k) \dyad{s}_{\rm S} \otimes \dyad{\epsilon_k}_{\rm SW}$. The average work $\langle w \rangle$ associated with process $\Gamma_{\rm SW}$ can be calculated explicitly as:
\small
\begin{align}
\label{app_eq8}
\langle w \rangle &= \tr \left[ H_{\rm W} \left(\Gamma_{\rm SW}\left[\rho_{\rm SW}\right] - \rho_{\rm SW}\right)\right] \nonumber \\ \nonumber &= \sum_{s, k} p_{\rm SW}(s,k) \tr \Big[H_{\rm W} \Big(\Gamma_{\rm SW}\big[\dyad{s}_{\rm S} \otimes \dyad{\epsilon_k}_{\rm W}\big] - \dyad{s}_{\rm S} \otimes \dyad{\epsilon_k}_{\rm W} \Big) \Big] \\ \nonumber &\stackrel{(*)}{=} \sum_{s, k} p_{\rm SW}(s,k) \Big[ \sum_{s', k'} r(s'k'|sk) \epsilon_{k'} - \epsilon_k \Big]  \\ 
&\stackrel{(**)}{=}
 \sum_{s', k'} \sum_{s, k} p_{\rm SW}(s, k)\, r(s'k'|sk) w_{kk'}.
\end{align}
\normalsize
where in line $(*)$ we used $H_{\rm W} \ket{\epsilon_k}_{\rm W} = \epsilon_k \ket{\epsilon_k}_{\rm W}$ and in line $(**)$ we labeled the energy difference associated with battery states $\ket{\epsilon_k}_{\rm W}$ and $\ket{\epsilon_{k'}}_{\rm W}$ by $w_{kk'} = \epsilon_{k'} - \epsilon_{k}$. A negative value of this quantity corresponds to a work cost (energy of the battery decreases), while positive to a work gain (energy of the battery increases).

Analogously, for a given work distribution we define its variance as
\small
\begin{align}
\label{app_VAR}
\text{Var}(w) =
 \sum_{s', k'} \sum_w p_{\rm SW}(s, k)\, r(s'k'|sk) (w_{kk'}-\langle w\rangle)^2.
\end{align}
\normalsize

\section{Thermodynamic batteries}
\label{appB}
\subsection{Wit as the battery system}
\noindent
The first model of a battery in the framework of thermal operations was introduced by Horodecki and Oppenheim in \cite{Horodecki2013} who considered the battery to be a two-level system (called by them a \emph{wit}) with Hamiltonian $H_{\rm W} = \delta \cdot\dyad{1}_{\rm W}$. This allowed them to define a notion of deterministic work (see also \cite{Aberg2011}) as the energy difference between the ground state and the excited state of the wit. In this way TO's assisted with a wit battery take the form:
\begin{align}
\Gamma_{wit}\left[ \rho_{\rm S} \otimes \dyad{i}_{\rm W} \right] = \sigma_{\rm S} \otimes \dyad{j}_{\rm W},
\end{align}
where $(i, j) = (0, 1)$ when the transformation stores work in the battery (distillation) or $(i, j) = (1, 0)$ when the transformation consumes work (formation). The deterministic work of transition, denoted here by $w_{\text{det}}$, is defined to be the maximal (distillation) or minimal (formation) value of energy separation $\delta$ for which the input state thermomajorizes the output state, i.e.:
\begin{align}
\label{app_eq11}
\rho_{\rm S} \otimes \dyad{i}_{\rm W} \succ_T \sigma_{\rm S} \otimes \dyad{j}_{\rm W}.
\end{align}
For distillation $w_{\text{det}}$ yields the maximal amount of work that we are guaranteed to extract from the state $\rho$ by converting it into another state $\sigma$. For formation $w_{\text{det}}$ gives the least amount of work that has to be supplied to guarantee the transition from $\rho$ to $\sigma$. This can be further generalized to cases where one allows the transformation to fail with some error probability $\varepsilon$. This is equivalent to transforming the input into a state $\sigma_{\varepsilon}$ which is at most $\varepsilon$-close to the desired output state $\sigma$ according to the trace distance metric, i.e.: $\norm{\sigma-\sigma_{\varepsilon}}_1\leq \varepsilon$.
Let us now denote $\mathcal{B}_{\varepsilon}(\rho) = \{\rho': \, \norm{\rho' - \rho}_1 \leq \varepsilon \}$. When we take the input state to be a thermal state $\tau$ then the associated deterministic work is called \emph{work of formation} and is equal to \cite{Horodecki2013}:
\begin{align}
\label{app_eq13}
w_F^{\varepsilon}(\rho) = \min_{\rho' \in \mathcal{B}^{\varepsilon}(\rho)} \left[ F_{\text{max}}(\rho') - F(\tau) \right],
\end{align}
where $F_{\text{max}}(\rho) = kT  \log \min \left\{ \lambda: \, \rho \leq \lambda \tau \right\}$ and the minimum is taken over all states $\rho'$ that are $\varepsilon$-close to $\rho$. For the case when the output state is a thermal state one defines the \emph{work of distillation} as:
\begin{align}
\label{app_eq14}
w_D^{\varepsilon}(\rho) = \max_{\rho' \in \mathcal{B}(\rho)} \left[ F(\tau) - F_{\text{min}}(\rho') \right]
\end{align}
where $F_{\text{min}}(\rho) =  - k_{\text{B}} T \log \sum_{i} h(\varepsilon,E_i)\, e^{-\beta E_i}$ and $E_i$ is the system's energy with $h(\varepsilon, E_i) \in \{0, 1\}$ is a smoothed binary indicator function, see \cite{Horodecki2013} for details. In the classical limit (obtained by considering infinitely many identical copies of $\rho$ and $\varepsilon$ going to zero) both $w_{F}^{\varepsilon}(\rho)$ and $w_{D}^{\varepsilon}(\rho)$ converge to the difference in standard free energies $F(\rho) = \tr(H\rho) - TS(\rho)$, where $S(\rho)$ is the ordinary von Neuman entropy of state $\rho$.

\subsection{Ideal weight as the battery system}
\label{secA22}
Recent works \cite{Skrzypczyk2014,Richens2016,Alhambra2016} have proposed a different model of a thermodynamic battery: an ideal weight with Hamiltonian $H_{\rm W} = \int_{\mathbb{R}} \text{d} x\, x \dyad{x}_{\rm W}$, where the basis is formed from orthonormal states $\{\ket{x}_{\rm W}, x \in \mathbb{R} \}$ representing position on the weight. Thermal operations acting on system $\rm S$ and the ideal weight are given by a CPTP map defined as in (\ref{app_eq1B}) but with the additional assumption that the global unitary $U$ commutes with translations on the weight as in (\ref{eq:com_rel}). This ensures that the weight cannot be used as an entropy dump, i.e. the entropy of the system $\rm S$ and heat bath $\rm B$ can never decrease by applying this type of transformation. Formally, translational invariance implies that any map $\Gamma_{\rm SBW}$ reduced to the system and bath $\Gamma_{\rm SB}\left[\cdot\right] = \tr_{\rm W} \Gamma_{\rm SBW}\left[(\cdot) \otimes \rho_{\rm W} \right]$ for any initial state of the weight $\rho_{\rm W} = \int_{-\infty}^{\infty} p_{\rm W}(x) \dyad{x}_{\rm W} \, \text{d}\,x$, can be written as a mixture of unitaries $\{u_x\}$ \cite{Masanes2017}:
\begin{align}
\label{app_eq17}
\Gamma_{\rm SB}\left[\cdot\right] = \sum_{x} p_{\rm W}(x) u_x (\cdot) u_x^{\dagger},
\end{align}
where the unitaries depend only on the global unitary $U$ and not on the state $\rho_{\rm W}$. Such a mixture cannot decrease the entropy of the system and the bath (but \emph{can} increase them). In this way the energy difference of the battery may be associated solely with the work exerted by (or extracted from) system $\rm S$.

\subsection{Harmonic oscillator as the battery system}
\label{secA23}
Let us now describe a battery model which will be studied further in the next appendices. The battery's Hamiltonian is taken to be $H_{\rm W} = \sum_{k=0}^{N} \epsilon_k \dyad{\epsilon_k}_{\rm W}$, where $\epsilon_k = k \, \delta$ with a fixed energy gap $\delta$ and $N$ is the battery dimension. The basis is formed from orthonormal states $\{\ket{\epsilon_k}_{\rm W}|\; 0 \leq k \leq N,\, \epsilon_k = k \,\delta \}$ representing the number of fundamental quanta of energy stored in the battery. Thermal operations acting on this battery again have the form from  (\ref{app_eq1B}), however now we put an additional assumption that the effective (i.e. transformation reduced to the system and the battery) transformation arising from the unitary $U$ \emph{almost} commutes with translations on the battery, i.e. commutes with shifts on the battery for all states of the battery above a certain threshold energy $\epsilon_{\text{min}}$.
Formally this means that for all possible transitions, that is for all values of $s, s'$ and all values of $k, k'$ such that $k \geq k_{\text{min}}$ the map $\Gamma_{\rm SW}$ satisfies:
\begin{align}
     \label{st:eti}
     \qquad \Big[\Gamma_{\rm SW}, \text{id}_{\rm S} \ot \Delta_n^{\dagger} [\cdot]\Delta_n\Big] \left[\rho_{\text{SW}}\right] = 0,
\end{align}
for all states of the battery above the threshold energy $\epsilon_{\text{min}}$, i.e. $\forall \, \rho_{\rm SW}$ s.t. $\tr[(\mathbb{1}_{\rm S} \ot \Pi_{\epsilon}) \rho_{SW}] = 0$ for $\epsilon \geq \epsilon_{\text{min}}$. We will refer to (\ref{st:eti}) as the \emph{effective translational invariance} (ETI) property. The following lemma relates this property with transition probabilities $\{r(s'k'|sk)\}$:
\begin{lemma}
The ETI (\ref{st:eti}) implies that the channel probabilites $\{r(s'k'|sk)\}$ for all $s,s',k'$ and all $k$ such that $k_{\text{min}} \leq k \leq N$ satisfy:
\begin{align}
\label{st:eti_cp}
 r(s' k'|s k) = r(s', n+k'|s, n+k),
\end{align} 
for all $n$ such that $0 \leq n+k \leq N$ and $k_{\text{min}} \leq n+k' \leq N$. 
\end{lemma}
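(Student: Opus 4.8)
The plan is to feed the operator identity (\ref{st:eti}) the simplest possible inputs, namely the diagonal basis states $\dyad{s}_{\rm S}\ot\dyad{\epsilon_k}_{\rm W}$, and then extract the constraint on the transition probabilities by comparing coefficients in the energy eigenbasis. A rank-one state $\dyad{s}\ot\dyad{\epsilon_k}$ is supported entirely on level $k$, so it is an admissible ``above-threshold'' input precisely when $k\ge k_{\rm min}$; for such $k$ the ETI relation (\ref{st:eti}) applies to it directly, and no density-matrix or linearity argument beyond these basis states is required.

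First I would record the elementary action of the translation superoperator on a diagonal element,
\begin{align}
\Delta_n^{\dagger}\,\dyad{\epsilon_m}_{\rm W}\,\Delta_n =
\begin{cases}
\dyad{\epsilon_{m+n}}_{\rm W}, & m+n\le N,\\
0, & \text{otherwise},
\end{cases}
\end{align}
since $\Delta_n^{\dagger}$ raises a level by $n$ but annihilates anything it would push above the top level $N$. Using the defining relation (\ref{app_eq3}) for the action of $\Gamma_{\rm SW}$ on basis states together with this rule, I would compute the two orderings of $\Gamma_{\rm SW}$ and $\text{id}_{\rm S}\ot\Delta_n^{\dagger}(\cdot)\Delta_n$ on $\dyad{s}\ot\dyad{\epsilon_k}$. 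When the translation acts first the initial level is sent $k\mapsto k+n$ and $\Gamma_{\rm SW}$ then produces $\sum_{s',k'} r(s'k'|s,k{+}n)\,\dyad{s'}\ot\dyad{\epsilon_{k'}}$; when $\Gamma_{\rm SW}$ acts first it produces $\sum_{s',m} r(s'm|sk)\,\dyad{s'}\ot\dyad{\epsilon_m}$, after which the translation shifts each surviving term $m\mapsto m+n$.

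Setting the two expressions equal, which is exactly what (\ref{st:eti}) asserts, and reading off the coefficient of $\dyad{s'}\ot\dyad{\epsilon_{k'}}$ gives $r(s'k'|s,k{+}n)=r(s',k'{-}n|s,k)$; the relabelling $k'\mapsto k'+n$ then yields the claimed identity (\ref{st:eti_cp}). The index windows stated in the lemma are nothing but the conditions under which no term is silently killed at a spectral boundary: invoking (\ref{st:eti}) forces the reference state into the invariant region (hence the requirement $k_{\rm min}\le k$ in the preamble), while demanding that $\Delta_n^{\dagger}$ map the relevant levels to genuine oscillator levels yields the bounds $0\le n+k\le N$ and $n+k'\le N$.

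The one place demanding care, and the natural pitfall, is the bookkeeping of these boundary effects, because $\Delta_n$ is not invertible: it annihilates the lowest $n$ levels, and $\Delta_n^{\dagger}$ annihilates everything mapped above $N$. Consequently the two orderings agree only inside the stated index ranges, and outside them the commutator may be nonzero without violating ETI. I would therefore accompany the coefficient comparison with an explicit check that, within the allowed ranges, every matched term $\dyad{\epsilon_{m+n}}$ genuinely survives the translation (i.e. no term with $m+n>N$ is dropped on the side where it ought to contribute), which is what pins the endpoints of the index windows and completes the argument.
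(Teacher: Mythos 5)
Your proposal is correct and follows essentially the same route as the paper's proof: both evaluate the ETI commutation relation on the diagonal basis states $\dyad{s}_{\rm S}\ot\dyad{\epsilon_k}_{\rm W}$ with $k\ge k_{\text{min}}$ and extract the transition probabilities by comparing matrix elements in the energy eigenbasis, with the index windows arising precisely from the non-invertibility of $\Delta_n$ at the spectral boundaries. The paper merely packages the coefficient comparison slightly differently (it conjugates once more by $\text{id}_{\rm S}\ot\Delta_n$, uses that $\Delta_n\Delta_n^{\dagger}$ acts as the identity on the admissible levels, and then projects onto $\ket{s'}_{\rm S}\ot\ket{\epsilon_{k'}}_{\rm W}$), which is a cosmetic difference.
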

\begin{proof}
Consider taking in (\ref{st:eti}) $\rho_{\rm SW} = \dyad{s}_{\rm S} \ot \dyad{\epsilon_k}_{\rm W}$ where $s$ can be any eigenstate of the system $\rm S$ and $k \geq k_{\text{min}}$. We have:
\begin{align}
    \label{sm:lastsimp}
    0 &= \Gamma_{\rm SW}[\dyad{s}_{\rm S} \ot \Delta_n^{\dagger} \dyad{\epsilon_k}_{\rm W}\Delta_n] - \left(\text{id}_{\rm S} \ot \Delta_n^{\dagger} \right)\Gamma_{\rm SW} \left[\dyad{s}_{\rm S} \ot \dyad{\epsilon_k}_{\rm W}\right] \left(\text{id}_{\rm S} \ot \Delta_n\right) \\ \nonumber
    &= \Gamma_{\rm SW}[\dyad{s}_{\rm S} \ot \dyad{\epsilon_{k+n}}_{\rm W}] - \left(\text{id}_{\rm S} \ot \Delta_n^{\dagger} \right)\Gamma_{\rm SW} \left[\dyad{s}_{\rm S} \ot \dyad{\epsilon_k}_{\rm W}\right] \left(\text{id}_{\rm S} \ot \Delta_n\right) \\ \nonumber
    &= \left(\text{id}_{\rm S} \ot \Delta_n \right) \Gamma_{\rm SW}[\dyad{s}_{\rm S} \ot \dyad{\epsilon_{k+n}}_{\rm W}] \left(\text{id}_{\rm S} \ot \Delta_n^{\dagger} \right) - \\ \nonumber
    &\quad \left(\text{id}_{\rm S} \ot \Delta_n\Delta_n^{\dagger} \right)\Gamma_{\rm SW} \left[\dyad{s}_{\rm S} \ot \dyad{\epsilon_k}_{\rm W}\right] \left(\text{id}_{\rm S} \ot \Delta_n\Delta_n^{\dagger}\right),
\end{align}
where in the last two lines we applied the shift operator $\text{id}_{\rm S} \ot \Delta_n$ on both sides of the equation. Consider now the following relation:
\begin{align}
    \label{sm:4}
     \Delta_n\Delta_n^{\dagger} \ket{\epsilon_{k'}}_{\rm W} = \Delta_n \ket{\epsilon_{k'+n}} = \ket{\epsilon_{k'}},
\end{align}
which holds since $k_{\text{min}} \leq n+k' \leq N$. Let us now project the last line in (\ref{sm:lastsimp}) onto $\ket{s'}_{\rm S} \ot \ket{\epsilon_{k'}}_{\rm W}$ for $k_{\text{min}} \leq n+k' \leq N$ and arbitrary $s'$. We have:
\begin{align}
    \left(\bra{s'}_{\rm S} \ot \bra{\epsilon_{k'+n}}_{\rm W}\right) \Gamma_{\rm SW} & \left[\dyad{s}_{\rm S} \ot \dyad{\epsilon_{k+n}}_{\rm W}\right]  \left(\ket{s'}_{\rm S} \ot \ket{\epsilon_{k'+n}}_{\rm W}\right) = \\ \nonumber &\left(\bra{s'}_{\rm S} \ot \bra{\epsilon_{k'}}_{\rm W}\right) \Gamma_{\rm SW}\left[\dyad{s}_{\rm S} \ot \dyad{\epsilon_{k}}_{\rm W}\right] \left(\ket{s'}_{\rm S} \ot \ket{\epsilon_{k'}}_{\rm W}\right),
\end{align}
where we used (\ref{sm:4}). Finally, using the relation between the transition probabilities $\{r(s'k'|sk)\}$ and the map $\Gamma_{\rm SW}$ defined in ($\ref{mt:rcomp}$) we arrive at:
\begin{align}
    r(s'k'|sk) = r(s',k'+n|s,k+n)
\end{align}
which holds for all $s,s'$ and $k,k',n$ satisfying $0 \leq k+n \leq N$ and $k_{\text{min}} \leq k'+n \leq N$.
\end{proof}

\section{Proof of Theorem \ref{theorem1}}
\label{appD}
\noindent  Let us start by defining a random variable: 
\begin{align}
    f_s := E_s + k_{\text{B}} T \log p_{\rm S}(s)
\end{align}
occuring with probability $p_{\rm S}(s)$ and whose average is the free energy of a system in state $\rho_{\rm S} = \sum_s p_{\rm S}(s) \dyad{s}_{\rm S}$, i.e.  
\begin{align}
    \langle f_s \rangle = \sum_{s} p_{\rm S}(s) \, [E_s + k_{\text{B}} T \log p_{\rm S}(s)] = F(\rho_{\rm S})
\end{align}
Now consider the following average:
\begin{align}
    \label{eq:app11}
    \langle e^{\beta(f_{s'} - f_s + w)} \rangle\! &=\! \sum_{s, s', w} p(s, s', w)\, e^{\beta(f_{s'} - f_s + w)}.
\end{align}
We can compute the distribution $p(s, s', w)$ explicitly from the map $\Gamma_{\rm SW}$ (see Postulate III in Appendix \ref{appA}) and write it using transition probabilities $\{r(s'k'|sk)\}$ in the following way:
\begin{align}
\label{eq:probdist}
p(s,s',w) = p_{\rm S}(s)\, p(s'w|s) = \sum_{k,k'} p_{\rm W}(k) \, p_{\rm S}(s)\, r(s'k'|sk)\, \delta_{w,w_{kk'}},
\end{align}
where we defined $w_{kk'} := \epsilon_{k'} - \epsilon_k$. We also define an analogous average quantity conditioned on the battery eigenstate $\ket{\epsilon_k}_{\rm W}$, i.e.: 
\begin{align}
   \langle e^{\beta(f_{s'} - f_s + w_{kk'})} \rangle_k\! :=\! \sum_{s, s', k'} &p_{\rm S}(s)\, r(s'k'|sk) e^{\beta(f_{s'} - f_s + w_{kk'})}. \label{eq:expk}
\end{align}
This allows us to write (\ref{eq:app11}) as:
\begin{align}
     \langle e^{\beta(f_{s'} - f_s + w)} \rangle\! =\! \sum_{k} p_{\rm W}(k)   \langle e^{\beta(f_{s'} - f_s + w_{kk'})} \rangle_k \, \delta_{w,w_{kk'}} \nonumber
\end{align}
We are now ready to prove Theorem \ref{theorem1}.
\begin{proof}
We start by rewritting:
\begin{align}
    \langle e^{\beta(f_{s'} - f_s + w_{kk'})} \rangle_k &= \sum_{s'} p_{\rm S}(s') \, h(s', k),  \label{eq:fluctmain}  
\end{align}
where we labeled $h(s',k) \! =\!  \sum_{s, k'} r(s' k'| s k) e^{\beta(\widetilde{E}_{s'}\! -\! E_s + w_{kk'})}$. It will be also convenient to introduce:
\begin{align}
    \label{gs_def}
    g_{s'}(k'|k) := \sum_s e^{\beta(\widetilde{E}_{s'} - E_s)} \, r(s'k'|sk).
\end{align}
Notice that for each $k \geq k_{\text{min}}$ we can rewrite $h(s', k)$ as:
\begin{align}
\label{eq:maineq} 
h(s', k) &= \sum_{k' = 0}^{N-k_{\text{min}}} g_{s'}(k'|k) e^{\beta(\epsilon_{k'} - \epsilon_k)} + \sum_{k'=N-k_{\text{min}}+1}^N g_{s'}(k'|k)e^{\beta(\epsilon_{k'}-\epsilon_k)} \nonumber \\
&= \sum_{k' = 0}^{N-k_{\text{min}}} g_{s'}(k'|k) e^{\beta(\epsilon_{k'} - \epsilon_k)} + \nonumber \sum_{i = 1}^{k_{\text{min}}} g_{s'}(N - k + i|k_{\text{min}})e^{\beta(\epsilon_{N-k+i}-\epsilon_{\text{min}})}, \nonumber
\end{align}
where in the last equation we changed the summation index to $i = k' - N + k_{\text{min}}$ and used the ETI property (\ref{st:eti_cp}) with $n=k_{\text{min}} - k$. Using again (\ref{st:eti_cp}) and the Gibbs-preserving conditions (\ref{app_eq5}) we can rewrite the first sum in the above equality as:
\begin{align}
\nonumber
\sum_{k' = 0}^{N-k_{\text{min}}} g_{s'}(k'|k)  e^{\beta(\epsilon_{k'} - \epsilon_k)} &= \sum_{k' = 0}^{N-k_{\text{min}}} g_{s'}(N-k|N-k') e^{\beta(\epsilon_{k'} - \epsilon_k)} \nonumber \\
&= \sum_{l' = k_{\text{min}}}^{N} g_{s'}(N-k|l') e^{\beta(\epsilon_{N-l'} - \epsilon_k)}  \nonumber\\
&= 1 - \sum_{l = 0}^{k_{\text{min}}-1} g_{s'}(N-k|l) e^{\beta(\epsilon_{N-k} - \epsilon_l)}  \nonumber\\
&= 1 - A_{s'}(N-k),
\end{align}
where we labelled $A_{s'}(x) := \sum_{l=0}^{k_{\text{min}}-1} g_{s'}(x|l)\, e^{\beta(\epsilon_{x}-\epsilon_l)}$. Using the condition (\ref{app_eq5}) with $k' = N-k$ we get:
\begin{align}
\label{eq:gpc1}
A_{s'}(N-k) + \sum_{l = k_{\text{min}}}^{N-1} & g_{s'}(N-k|l)e^{\beta(\epsilon_{N-k}-\epsilon_l)} + g_{s'}(N-k|N) e^{\beta(\epsilon_{N-k} - \epsilon_N)} = 1.
\end{align}
On the other hand for $k' = N - k + 1$ we get:
\begin{align}
\label{eq:gpc2}
A_{s'}&(N-k+1) +  g_{s'}(N-k+1|k_{\text{min}}) \, e^{\beta(\epsilon_{N-k+1}-\epsilon_{\text{min}})} + \sum_{l = k_{\text{min}} + 1}^{N} g_{s'}(N-k+1|l)e^{\beta(\epsilon_{N-k+1}-\epsilon_l)} = 1.
\end{align} 
Notice that due to the ETI property we also have: 
\begin{align}
\sum_{l = k_{\text{min}} + 1}^{N} & g_{s'}(N-k+1|l)e^{\beta(\epsilon_{N-k+1}-\epsilon_l)} = \sum_{l = k_{\text{min}}}^{N-1} g_{s'}(N-k|l)e^{\beta(\epsilon_{N-k}-\epsilon_l)},
\end{align} 
Combining (\ref{eq:gpc1}) with (\ref{eq:gpc2}) and the last equation yields:
\begin{align}
\nonumber
A_{s'}(N-k) =& A_{s'}(N-k+1) + g_{s'}(N-k+1|k_{\text{min}}) e^{\beta(\epsilon_{N-k+1}-\epsilon_{\text{min}})} - g_{s'}(N-k|N)e^{\beta(\epsilon_{N-k}-\epsilon_N)}.
\end{align}
We now repeat this step $k_{\text{min}}$ times to express $A_{s'}(N-k)$ using $A_{s'}(N-k+k_{\text{min}})$, i.e.
\begin{align}
\label{eq:eq1}
A_{s'}(N-k) =& A_{s'}(N-k+k_{\text{min}}) + \sum_{i=1}^{k_{\text{min}}}  g_{s'}(N-k+i|k_{\text{min}}) e^{\beta(\epsilon_{N-k+i}-\epsilon_{\text{min}})} - \\\nonumber
&\sum_{i=0}^{k_{\text{min}}-1}  g_{s'}(N-k+i|N) e^{\beta(\epsilon_{N-k+i}-\epsilon_N)}.
\end{align}
Notice now that the first sum on the RHS of (\ref{eq:eq1}) appears exactly in (\ref{eq:maineq}). This allows to write for all $k \geq k_{\text{min}}$:
\begin{align}
h(s', k) &= 1 + \sum_{i=0}^{k_{\text{min}}-1}  g_{s'}(N-k+i|N) e^{\beta(\epsilon_{N-k+i}-\epsilon_N)} - A_{s'}(N-k+k_{\text{min}}) \\
&= 1 + \sum_{i=0}^{k_{\text{min}}-1}  g_{s'}(N-k+i|N) e^{\beta(\epsilon_{N-k+i}-\epsilon_N)} - \sum_{i=0}^{k_{\text{min}}-1} g_{s'}(N-k+k_{\text{min}}|i)e^{\beta(\epsilon_{N-k+k_{\text{min}}} - \epsilon_i)} \\
&= 1 + \sum_{i=0}^{k_{\text{min}}-1} \left[ g_{s'}(N-k+i|N) e^{\beta(\epsilon_{i}-\epsilon_k)} - g_{s'}(N-k+k_{\text{min}}|i)e^{\beta(\epsilon_{N-k+k_{\text{min}}} - \epsilon_i)}\right] \\
\label{eq:boundingh}
&\leq 1 + e^{-\beta \epsilon_k} \cdot \sum_{i=0}^{k_{\text{min}}-1} \left[ g_{s'}(N-k+i|N) e^{\beta\epsilon_{i}}\right] \\
&\leq 1 + e^{-\beta \epsilon_k} e^{\beta (\epsilon_{k_{\text{min}}}-\delta)} g_{s'}(N-(k-k_{\text{min}})|N)
\end{align}
Where the first inequality follows since $g_{s'}(k'|k)$ are positive. The last inequality follows since we are looking for a bound which holds for all Gibbs-preserving channels and hence we have to choose the worst-case set of transition probabilities $\{r(s'k'|sk)\}$.  Due to the stochasticity of the channel which implies that for all $s, k$ we have $\sum_{s',k'} r(s'k'|sk) = 1$ w.l.o.g we can choose a probability distribution which maximizes the sum in (\ref{eq:boundingh}). The maximal value of the sum is achieved when the total probability mass is placed in the transition corresponding to output level $i$ with the largest value of $e^{\beta \epsilon_i}$ (i.e. $i = k_{\text{min}}-1$). This also means that we have:
\begin{align}
\label{qnorm}
\forall\, {s,k} \quad \sum_{s',k'} r(s'k'|sk) = 1 \qquad \implies \qquad \forall\, s \quad \sum_{s'} r(s', N-(k-k_{\text{min}})|s, N) = 1.
\end{align}
Let us denote $q(s'|s) := r(s', N-(k-k_{\text{min}})|s, N)$ which due to (\ref{qnorm}) satisfies$\sum_{s'}q(s'|s) = 1$ for each $s$. Using  now (\ref{gs_def}) and defining $\delta_k := \epsilon_k - \epsilon_{\text{min}} + \delta = \delta (k-k_{\text{min}}+1)$ we can further rewrite the bound for $h(s', k)$ as:
\begin{align}
    \forall_{k \geq k_{\text{min}}} \qquad h(s', k) &\leq 1 + e^{-\beta \delta_k} \sum_{s}e^{\beta(\widetilde{E}_{s'} - E_s)} r(s', N-(k-k_{\text{min}})|s, N) \\
    &= 1 + e^{-\beta \delta_k} e^{\beta \widetilde{E}_{s'}} \sum_{s}e^{-\beta E_s} q(s'|s).
\end{align}
Writing above expression explicitly using the definition of $h(s',k)$ from below (\ref{eq:fluctmain}) we get:
\begin{align}
    \label{eq:add1}
    \forall\, {s'},\,\, \forall\, {k \geq k_{\text{min}}}  \qquad \sum_{s, k'} r(s' k'| s k) e^{\beta(\widetilde{E}_{s'}\! -\! E_s + w_{kk'})} \leq 1 + e^{ - \beta \delta_k} e^{\beta \widetilde{E}_{s'}} \sum_{s}e^{-\beta E_s}q(s'|s)
\end{align}
Let us now multiply both sides of (\ref{eq:add1}) by a factor $e^{-\beta \widetilde{E}_{s'}}$ and sum over $s'$. This leads to:
\begin{align}
    \forall\, {k \geq k_{\text{min}}}  \qquad \sum_{s, s', k'} p_{\rm S}(s)\, r(s'k'|sk) e^{\beta(w_{kk'}-f_s)} &\leq Z_{\rm S'} + e^{ - \beta \delta_k} \sum_{s} e^{-\beta E_s}\left(\sum_{s'} q(s'|s)\right) \\
    &= Z_{\rm S'}\left(1+ e^{-\beta\delta_k}\right),
\end{align}
where $Z_{\rm S'}$ is the partition function of the final Hamiltonian on $S$.  Let us now denote the conditional probability distribution $p(s,k'|k) := \sum_{s'} p_{\rm S}(s) \, r(s'k'|sk)$. This allows us to write:
\begin{align}
    \forall\, {k \geq k_{\text{min}}}  \qquad \sum_{s, k'} p(s,k'|k) e^{\beta(w_{kk'}-f_s)} = \langle e^{\beta(w_{kk'} - f_s)} \rangle_k \leq Z_{\rm S'}\left(1 + e^{ - \beta \delta_k} \right),
\end{align}
where averaging $\langle \cdot \rangle_k$ is over random variables $f_{s}$ and $w_{kk'}$ for a given $k$ and can be obtained by setting $f_{s'} = 0$ in (\ref{eq:expk}). This proves the Theorem. Notice further that we prove the inequality by choosing a particular subset of feasible transition probabilities $\{r(s'k'|sk)\}$ and hence the above bound can be saturated.
\end{proof}

It turns out that a more refined version of Theorem \ref{theorem1} will be useful when proving Theorem \ref{mt:theorem2}. Let us return to the expression (\ref{eq:fluctmain}). For all $k \geq k_{\text{min}}$ we have:
\begin{align}
\label{eq:exp_form}
\langle e^{\beta(f_{s'} - f_s + w_{kk'})} \rangle_k= \sum_{s'} p_{\rm S}(s') \, h(s', k)  & \leq 1 +  e^{- \beta \delta_k} \sum_{s,s'}  p_{\rm S}(s') q(s'|s) e^{\beta  (\widetilde{E}_{s'}-E_s)} \\ & \leq 1 +  e^{ - \beta \delta_k} e^{-\beta E_{\rm S'}^{\text{max}}} Z_{\rm S} \\
&= 1 + \eta_{\rm S} e^{ - \beta \delta_k} ,
\end{align}
where $\eta_{\rm S} := Z_{\rm S} e^{\beta E_{\rm S'}^{\text{max}}}$ and $E_{\rm S'}^{\text{max}} := \max_{s'} \widetilde{E}_{s'}$. We will use this expression when proving Theorem \ref{mt:theorem2}.

\section{Proof of Theorem \ref{mt:theorem2}}
\label{app_proof_thm2}
\noindent In this section we present the proof of Theorem \ref{mt:theorem2}.
\begin{proof}
\noindent The average work $\langle w \rangle$ can be written as:
\begin{align}
    \langle w \rangle &= \sum_{w} p(w) \cdot w = \sum_{w} \sum_{k = 0}^N p_{\rm W}(k) \, p(w|k) 
    \label{eq:work_two_terms}
    =\langle w \rangle_{\text{vac}} + \langle w \rangle_{\text{inv}},
\end{align}
where we labeled: 
\begin{align}
\langle w \rangle_{\text{vac}} &:= \sum_w \sum_{k < k_{\text{min}}} p_{\rm W}(k)\, p(w|k) \cdot w, \\
\langle w \rangle_{\text{inv}} &:= \sum_w \sum_{k \geq k_{\text{min}}} p_{\rm W}(k)\, p(w|k) \cdot w
\end{align}
We can further write this using transition probabilities $\{r(s'k'|sk)\}$ as:
\begin{align}
    \langle w \rangle_{\text{vac}} &=\! \sum_{k = 0}^{k_{\text{min}}-1}\! \sum_{s,s',k'} p_{\rm W}(k) \, p_{\rm S}(s) \, r(s'k'|sk)\, w_{kk'},
\end{align}
and similarily for $\langle w \rangle_{\text{inv}}$. The transition probabilities $\{r(s'k'|sk)\}$ satisfy Gibbs-preserving conditions (\ref{app_eq5}), i.e. $r(s'k'|sk) \leq e^{-\beta(\epsilon_{k'} - \epsilon_k)} \, e^{-\beta(\widetilde{E}_{s'} - E_s)}$. This implies:
\begin{align}
    \sum_{s, s'} p_{\rm S}(s) \, r(s'k'|sk) &\leq  \sum_{s, s'} p_{\rm S}(s) \, e^{-\beta w_{kk'}} \, e^{-\beta(\widetilde{E}_{s'} - E_s)} = e^{\beta E_{\rm S'}^{\text{max}}} e^{-\beta w_{kk'}} Z_{\rm S} = \eta_{\rm S} e^{-\beta w_{kk'}},
\end{align}
Plugging this into our expression for $\langle w \rangle_{\text{vac}}$ gives:
\begin{align}
    \langle w \rangle_{\text{vac}} &\leq \eta_{\rm S} \cdot \sum_{k = 0}^{k_{\text{min}}-1} \sum_{k' = 0}^N p_{\rm W}(k) \, e^{-\beta(\epsilon_{k'} - \epsilon_k)} \cdot (\epsilon_{k'} - \epsilon_{k})  \nonumber \\ 
    \nonumber
    &= - \eta_{\rm S} \cdot \sum_{k = 0}^{k_{\text{min}}-1}  p_{\rm W}(k) \cdot \frac{\partial}{\partial \beta} \, \left( \sum_{k' = 0}^N e^{-\beta(\epsilon_{k'} - \epsilon_k)}\right) \\
    &= - \eta_{\rm S} \, \frac{\partial}{\partial \beta} \left[ \sum_{k = 0}^{k_{\text{min}}-1} \frac{p_{\rm W}(k)}{g_{\rm W}(k)}\right] \label{eq:w_vac} 
\end{align}
Consider now the second term from (\ref{eq:work_two_terms}):
\begin{align}
   \langle w \rangle_{\text{inv}} = \sum_{k = k_{\text{min}}}^{N}\sum_{s,s',k'} p_{\rm W}(k) \, p_{\rm S}(s) \, r(s'k'|sk) \, w_{kk'}. \nonumber
\end{align}
 Notice that for $k \geq k_{\text{min}}$ we satisfy the assumptions of Theorem \ref{theorem1}. Thus, multiplying (\ref{eq:exp_form}) by $p_{\rm W}(k)$ and summing over $k \geq k_{\text{min}}$ we obtain the following bound:
\begin{align}
    \label{eq:eq:2}
    \sum_{k = k_{\text{min}}}^N \sum_{s,s',k'} p_{\rm W}(k) p_{\rm S}(s) r(s'k'|sk) e^{\beta (f_{s'} - f_s + w_{kk'}) } \leq  1 + \eta_{\rm S} \Bigg(\sum_{k=k_{\text{min}}}^N p_{\rm W}(k) e^{-\beta \delta_k} \Bigg).
\end{align}
Using the convexity of the exponential function, taking the logarithm of both sides of (\ref{eq:eq:2}) and using the definition of $B_{\beta}(\rho)$ from the theorem we find:
\begin{align}
    \label{eq:eq2}
    B_{\beta}(\rho_{\rm W}) &\geq \!\! \sum_{k= k_{\text{min}}}^N \sum_{\substack{s,s',k'}}\!\!  p_{\rm S}(s) p_{\rm W}(k) r(s'k'|sk) (f_{s'}\!-\!f_s\!+\!w_{kk'}) \nonumber \\ \nonumber
    &= \sum_{k= k_{\text{min}}}^N \! \sum_{\substack{s,s',k'}}\!\!  p_{\rm S}(s) p_{\rm W}(k) r(s'k'|sk) (f_{s'}\!-\!f_s) + \langle w \rangle_{\text{inv}} \\
    &= \Delta F_S - \sum_{k=0}^{k_{\text{min}}-1}\sum_{s,s',k'} p_{\rm S}(s)p_{\rm W}(k) r(s'k'|sk) (f_{s'}-f_s) \nonumber + \langle w \rangle_{\text{inv}}.
\end{align}
where we used:
\begin{align}
    \Delta F_{\rm S} = \sum_{s,k,s',k'} p_{\rm S}(s)p_{\rm W}(k) r(s'k'|sk) (f_{s'}-f_s).
\end{align}
Notice now that we have the following bound:
\begin{align}
    \sum_{k= 0}^{k_{\text{min}} - 1} \sum_{\substack{s,s',k'}}\!\!  p_{\rm S}(s) p_{\rm W}(k) r(s'k'|sk)  f_{s'} &\leq \sum_{k= 0}^{k_{\text{min}} - 1} \sum_{\substack{s'}}\! p_{\rm W}(k) r(s'|k) f_{s'}\\ &\leq \left(\sum_{k= 0}^{k_{\text{min}} - 1}  p_{\rm W}(k)\right) \max_{s'} f_{s'} \\ &= \left(\sum_{k= 0}^{k_{\text{min}} - 1}  p_{\rm W}(k)\right) E_{\rm S'}^{\text{max}}.\label{eq:eq3}
\end{align}
On the other hand we also have:
\begin{align}
    \label{eq:eq4}
    \sum_{k= 0}^{k_{\text{min}} - 1}  \sum_{\substack{s,s',k'}}\!\! p_{\rm S}(s) p_{\rm W}(k)& r(s'k'|sk)  f_{s} = F(\rho_{\rm S}) \cdot   \left(\sum_{k= 0}^{k_{\text{min}} - 1}  p_{\rm W}(k)\right),
\end{align}
where we used the fact that $\sum_{s',k'} r(s'k'|sk) = 1$ for all $s, k$ and $F(\rho_{\rm S}) = \sum_s p_{\rm S}(s)f_s$. Combining (\ref{eq:eq3}) and (\ref{eq:eq4}) gives:
\begin{align}
    \label{eq:w_inv}
    \langle w \rangle_{\text{inv}} \leq & -\Delta F_{\rm S} +B_{\beta}(\rho_{\rm W}) +  \left(\sum_{k= 0}^{k_{\text{min}} - 1}  p_{\rm W}(k)\right) \cdot (E_{\rm S'}^{\text{max}} - F(\rho_{\rm S})). 
\end{align}
The theorem is proven by combining bounds (\ref{eq:w_vac}) and (\ref{eq:w_inv}). 
\end{proof}

\emph{Proof of Corrolary \ref{cor1}}. 
Let us denote the populations by: $x_{<}=\sum_{k<k_{\text{min}}} p_{\rm W}(k)$,
$x_{>}=\sum_{k_{\text{min}}\leq k< k^{*}} p_{\rm W}(k)$, and $1-x=\sum_{k\geq k^{*}} p_{\rm W}(k)$, such that $x_{>}+x_{<}=x$ is the occupation of the battery below the energy cut-off $\epsilon_* = k_{*}\delta$. By rewriting (\ref{eq:w_vac}), we obtain 
\begin{align} 
\langle w \rangle_{\text{vac}} &\leq \eta_{\rm S} \sum_{k = 0}^{k_{\text{min}}-1} \sum_{k' = 0}^N p_{\rm W}(k) \, e^{-\beta(\epsilon_{k'} - \epsilon_k)} \cdot (\epsilon_{k'} - \epsilon_{k}) \label{eq:over1a}
= \eta_{\rm S} Z_{\rm W} \left( \;\,\,\sum_{\mathclap{k = 0}}^{\mathclap{k_{\text{min}}-1}} p_{\rm W}(k) e^{\beta\epsilon_{k}}(\langle E\rangle_{\beta}-\epsilon_{k})  \right),    
\end{align}
with $Z_{\rm W}=\sum_{k'=0}^{N}e^{-\beta \epsilon_{k'}}$ and $\langle E\rangle_{\beta}= {Z^{-1}_{\rm W}} \sum_{k'=0}^{N} e^{-\beta \epsilon_{k'}}\epsilon_{k'}$. We see that (\ref{eq:over1a}) is upper bounded by 
\begin{align}
\langle w \rangle_{\text{vac}} & \leq x_{<} \cdot \eta_{\rm S}\, Z_{\rm W}\,  e^{\beta (k_{\text{min}}-1) x} \langle E\rangle_{\beta} \leq \eta_{\rm S}\, Z_{\rm W} \, x \, e^{\beta \epsilon_{\text{min}}}\langle E\rangle_{\beta}.
\end{align}
On the other hand notice that the free energy difference $\langle f_{s'} - f_s \rangle_{\text{vac}}$ can be upper bounded as:
\begin{align}
    \langle f_{s'} - f_s \rangle_{\text{vac}} &\leq \left(\sum_{k=0}^{k_{\text{min}}} p_{\rm W}(k) \right) \cdot \left( E_{\rm S'}^{\text{max}} - F_{\rm S} \right) \leq x \left(E_{\rm S'}^{\text{max}}+\frac{1}{\beta}\log d_{\rm S}\right).
\end{align}
Combining the last two expressions yields an upper-bound for $A_{\beta}(\rho_W,\rho_{\rm S})$:
\begin{align} \label{eq:over1}
    A_{\beta}(\rho_{\rm W},\rho_{\rm S}) \leq x \Big(\eta_{\rm S}\, Z_{\rm W}  e^{\beta \epsilon_{\text{min}}}\langle E\rangle_{\beta} +  E_{\rm S'}^{\text{max}} + \frac{1}{\beta} \log d_{\rm S} \Big)
\end{align}
Furthermore, the correction stemming from the distance to the vacuum regime can be rewritten as 
\begin{align}
\beta B_{\beta}(\rho_{\rm W}) & \leq \log \Big[ 1+\eta_{\rm S}\sum_{k=k_{\text{min}}}^{k^{*}-1}p_{\rm W}(k) e^{-\beta\delta(k-k_{\text{min}}+1)}+  \eta_{\rm S}  \sum_{k=k_{*}}^{N}p_{\rm W}(k) e^{-\beta\delta(k-k_{\text{min}}+1)}\Big)\Big] \\ &\leq \eta_{\rm S} \, \Big( x_{>}\, e^{\beta\Delta}+(1-x)e^{-\beta\delta(k^{*}-k_{\text{min}}+1)}  \Big) \\ \label{eq:over3}
&\leq \eta_{\rm S}\big(x e^{-\beta\delta}+e^{-\beta D}\big),
\end{align}
with $D:=\delta(k^{*}-k_{\text{min}})$. In the limit $N\rightarrow\infty$, we have $Z_{\rm W}=(1-e^{-\beta\delta})^{-1}$ and $\langle E\rangle_{\beta}= \delta \cdot e^{-\beta\delta} (1-e^{-\beta\delta})^{-1}$. Therefore, combining bounds (\ref{eq:over1}) with (\ref{eq:over3}), together with $\eta_{\rm S} \leq d_{\rm S} e^{\beta E_{\rm S'}^{\text{max}}}$, allows us to find the following bound:
\begin{equation}
 \langle w \rangle \leq - \Delta F_{\rm S} + C(\epsilon_*),
 \end{equation}
 with:
 \begin{align}
     C(\epsilon_{*}) &= x(\epsilon_{*}) \left( c_{\rm S} e^{-\beta\delta}\left(\frac{1}{\beta}+\frac{\delta e^{\beta \epsilon_{\text{min}}}}{(1-e^{-\beta\delta})^{2} }\right) +\frac{1}{\beta} \log c_{\rm S} \right) + \frac{c_{\rm S}}{\beta} e^{-\beta D(\epsilon_*)}, 
 \end{align} 
 where $c_{\rm S} :=  d_{\rm S}e^{\beta E_{\rm S'}^{\text{max}}}$. We also explicitly marked the dependence of $D(\epsilon_{*})$ and $\delta(\epsilon_{*})$ on a selection of threshold energy $\epsilon_{*}$ for a given initial state of the battery. 
 
\section{Properties of the map
$\Gamma_{osc}$}
\label{app_prop_ext_maps}
\subsubsection{The extended map is a thermal operation}
In this section we show that the map defined via Construction \ref{con1} is a valid thermal operation. This is equivalent to showing that $\Gamma_{osc}$ preserves the Gibbs state, i.e. $\Gamma_{osc}\left[\tau_{\rm S} \ot \tau_{\rm W}\right] = \tau_{\rm S} \ot \tau_{\rm W}$. Let us begin by taking an arbitrary thermal operation $\Gamma_{wit}$ acting on the wit. Its action in terms of subchannels $\{\mathcal{R}_{kk'}\}$ can be written as:
\begin{align}
    \Gamma_{wit}[\rho_{\rm S} \ot \rho_{\rm W} ] &= \sum_{k,k'}^{N=1} p_{\rm W}(k) \, \mathcal{R}_{kk'}(\rho_{\rm S}) \ot \dyad{\epsilon_{k'}}_{\rm W}.
\end{align}
Since by the assumption $\Gamma_{wit}$ preserves the Gibbs state we can write:
\begin{align}
    \label{eq:x1}
    \mathcal{R}_{00}(\tau_{\rm S}) + e^{-\beta \delta} \mathcal{R}_{10}(\tau_{\rm S}) &= \tau_{\rm S}, \\
    \label{eq:x2}
    \mathcal{R}_{01}(\tau_{\rm S}) + e^{-\beta \delta} \mathcal{R}_{11}(\tau_{\rm S}) &= e^{-\beta \delta} \tau_{\rm S}.
\end{align}
To show that the map $\Gamma_{osc}$ is a valid thermal operation we will first consider a finite-dimensional harmonic oscillator battery (i.e. we additionally specify the action of the map on the highest energy level $N$) and show that this completed map, denoted $\widetilde{\Gamma}_{osc}$, is a thermal operation for all $N$. We will then take the limit $N \rightarrow \infty$ and show that in this limit the completed map is indistinguishable from the map described via Construction \ref{cor1} in the main text. 

Let us consider a finite dimensional harmonic oscillator battery with Hamiltonian:
\begin{align}
    \label{app_fham}
    H_{\rm W} = \sum_{k=0}^N \epsilon_k \dyad{\epsilon_k}_{\rm W},
\end{align}
where as before $\epsilon_k = k \delta$.
\newline \newline 
\noindent (a) \emph{Gibbs-preserving property.}
We define the completed map $\widetilde{\Gamma}_{osc}$ as:
\begin{align}
    &\rho_{\rm S} \ot \dyad{\epsilon_0}_{\rm W} \rightarrow \sum_{k' = 0}^{N-1} \mathcal{R}_{00} \mathcal{R}_{01}^{k'}(\rho_{\rm S}) \ot \dyad{\epsilon_{k'}}_{\rm W} + \mathcal{R}_{01}^{N}(\rho_{\rm S}) \ot \dyad{\epsilon_N}_{\rm W},  \\  
    &\rho_{\rm S} \ot \dyad{\epsilon_k}_{\rm W} \rightarrow \mathcal{R}_{10}(\rho_{\rm S}) \ot \dyad{\epsilon_{k-1}}_{\rm W} + \sum_{k' = k}^{N-1} \mathcal{R}_{00} \mathcal{R}_{01}^{k'-k} \mathcal{R}_{11}(\rho_{\rm S}) \ot \dyad{\epsilon_{k'}}_{\rm W} + \\ &\hspace{85pt}\mathcal{R}_{01}^{N-k} \mathcal{R}_{11}(\rho_{\rm S}) \ot \dyad{\epsilon_{N}}_{\rm W}, \\
    &\rho_{\rm S} \ot \dyad{\epsilon_N}_{\rm W} \rightarrow \mathcal{R}_{10}(\rho_{\rm S}) \ot \dyad{\epsilon_{N-1}}_{\rm W} + \mathcal{R}_{11}(\rho_{\rm S}) \ot \dyad{\epsilon_{N}}_{\rm W}
\end{align}
where $0 < k < N$. Let us now apply the extended map $\widetilde{\Gamma}_{osc}$ to the Gibbs state $\tau_{\rm S} \ot \tau_{\rm W}$ and look at a single energy level $\dyad{\epsilon_{k'}}_{\rm W}$ of the battery, i.e:
\newline \newline
\noindent For $k' = 0$ we have:
\begin{align}
    \tr_{\rm W} \Big[\left(\mathbb{1}_{\rm S} \ot \dyad{\epsilon_{0}}_{\rm W}\right)  \widetilde{\Gamma}_{osc}(\tau_{\rm S} \ot \tau_{\rm W}) \left(\mathbb{1}_{\rm S} \ot \dyad{\epsilon_{0}}_{\rm W}\right) \Big] &= g_{\rm W}(0) \, \mathcal{R}_{00} + g_{\rm W}(1) \mathcal{R}_{10}(\tau_{\rm S}) = g_{\rm W}(0) \tau_{\rm S}.
\end{align}
\newline
\noindent For $0 < k' < N$ we have:
\begin{align}
    &\tr_{\rm W} \Big[\left(\mathbb{1}_{\rm S}  \ot \dyad{\epsilon_{k'}}_{\rm W}\right) \widetilde{\Gamma}_{osc}(\tau_{\rm S} \ot \tau_{\rm W}) \left(\mathbb{1}_{\rm S} \ot \dyad{\epsilon_{k'}}_{\rm W}\right) \Big] = \\
    &= g_{\rm W}(0) \, \mathcal{R}_{00} \mathcal{R}_{01}^{k'}(\tau_{\rm S}) + \sum_{k=1}^{k'} g_{\rm W}(k)\, \mathcal{R}_{00} \mathcal{R}_{01}^{k'-k} \mathcal{R}_{11}(\tau_{\rm S}) + g_{\rm W}(k'+1) \mathcal{R}_{10}(\tau_{\rm S})  \\ 
    &= g_{\rm W}(0) \mathcal{R}_{00}\mathcal{R}_{01}^{k'-1}\left[\mathcal{R}_{01}(\tau_{\rm S}) + e^{-\beta \delta} \mathcal{R}_{11}(\tau_{\rm S})\right] + \sum_{k=2}^{k'} g_{\rm W}(k)\, \mathcal{R}_{00} \mathcal{R}_{01}^{k'-k} \mathcal{R}_{11}(\tau_{\rm S}) + g_{\rm W}(k'+1) \mathcal{R}_{10}(\tau_{\rm S}) \\
    &= \quad \ldots \text{repeat $(k'-1)$ times} \ldots  \\
    &= g_{\rm W}(k'-1) \mathcal{R}_{00} \left[\mathcal{R}_{01}(\tau_{\rm S}) + e^{-\beta \delta} \mathcal{R}_{11}(\tau_{\rm S})\right] + \sum_{k=k'}^{k'} g_{\rm W}(k)\, \mathcal{R}_{00} \mathcal{R}_{01}^{k'-k} \mathcal{R}_{11}(\tau_{\rm S}) + g_{\rm W}(k'+1)  \mathcal{R}_{10}(\tau_{\rm S}) \\ 
    &= g_{\rm W}(k') \mathcal{R}_{00} (\tau_{\rm S}) + g_{\rm W}(k'+1)  \mathcal{R}_{10}(\tau_{\rm S})  \\ 
    &=  g_{\rm W}(k') \left[\mathcal{R}_{00} (\tau_{\rm S}) + e^{-\beta \Delta}  \mathcal{R}_{10}(\tau_{\rm S})\right] \\
    &= g_{\rm W}(k') \tau_{\rm S}.
\end{align}
\newline
\noindent For $k' = N$ we have:
\begin{align}
    \tr_{\rm W} \Big[\left(\mathbb{1}_{\rm S} \ot \dyad{\epsilon_{N}}_{\rm W}\right) & \widetilde{\Gamma}_{osc}(\tau_{\rm S} \ot \tau_{\rm W}) \left(\mathbb{1}_{\rm S} \ot \dyad{\epsilon_{N}}_{\rm W}\right) \Big] \\
    &= g_{\rm W}(0) \mathcal{R}_{01}^N(\tau_{\rm S}) + \sum_{k=1}^N g_{\rm W}(k) \mathcal{R}_{01}^{N-k}\mathcal{R}_{11}(\tau_{\rm S}) \\ 
    &= g_{\rm W}(0) \mathcal{R}_{01}^{N-1} \left[\mathcal{R}_{01}(\tau_{\rm S}) + e^{-\beta \delta} \mathcal{R}_{11}(\tau_{\rm S}) \right] + \sum_{k=2}^N g_{\rm W}(k) \mathcal{R}_{01}^{N-k}\mathcal{R}_{11}(\tau_{\rm S})
    \\
    &= g_{\rm W}(1) \mathcal{R}_{01}^{N-1}(\tau_{\rm S}) + \sum_{k=2}^N g_{\rm W}(k) \mathcal{R}_{01}^{N-k}\mathcal{R}_{11}(\tau_{\rm S})
    \\     &= \quad \ldots \text{repeat $(N-1)$ times} \ldots \nonumber \\
    &= g_{\rm W}(N-1) \mathcal{R}_{01} (\tau_{\rm S}) + g_{\rm W}(N) \mathcal{R}_{11}(\tau_{\rm S}) \\
    &= g_{\rm W}(N) \tau_{\rm S}. \nonumber
\end{align}

\noindent The total state of the system and the battery is then given by:
\begin{align}
    \widetilde{\Gamma}_{osc}[\tau_{\rm S} \ot \tau_{\rm W}] = \sum_{k'=0}^N g_{\rm W}(k') \tau_{\rm S} \ot \tau_{\rm W} = \tau_{\rm S} \ot \tau_{\rm W}.
\end{align}
\newline \newline 
\noindent (b) \emph{Trace-preserving property}
It can be also checked that the map $\Gamma_{osc}$ is trace preserving since $\Gamma_{wit}$ is trace preserving, i.e. for any $\rho_{\rm S}$ we have:
\begin{align}
    \tr \mathcal{R}_{00}[\rho_{\rm S}] + \tr \mathcal{R}_{01}[\rho_{\rm S}] = \tr \rho_{\rm S}, \\
    \tr \mathcal{R}_{10}[\rho_{\rm S}] +  \tr \mathcal{R}_{11}[\rho_{\rm S}] = \tr \rho_{\rm S}.
\end{align}
\noindent Hence for $k = 0$ we have:
\begin{align}
    \tr \widetilde{\Gamma}_{osc}[\rho_{\rm S} \ot \dyad{\epsilon_{0}}] &= \sum_{k'=0}^{N-1} \tr \mathcal{R}_{00}\mathcal{R}_{01}^{k'}(\rho_{\rm S}) + \tr \mathcal{R}_{01}^N(\rho_{\rm S}) \\
    &=  \tr \mathcal{R}_{00}(\rho_{\rm S}) + \tr \mathcal{R}_{00}\mathcal{R}_{01}(\rho_{\rm S}) + \ldots + \tr \mathcal{R}_{00}\mathcal{R}_{01}^{N-1}(\rho_{\rm S}) + \tr \mathcal{R}_{01}^N(\rho_{\rm S}) \\
    &=  \tr \mathcal{R}_{00}(\rho_{\rm S}) + \tr \mathcal{R}_{00}\mathcal{R}_{01}(\rho_{\rm S}) + \ldots + \tr \mathcal{R}_{01}^{N-2}(\rho_{\rm S}) \\
    &= \quad \ldots \text{repeat $(N-1)$ times} \ldots \\
    &=  \tr \mathcal{R}_{00}(\rho_{\rm S}) + \tr \mathcal{R}_{01}(\rho_{\rm S}) \\ 
    &= \tr \rho_{\rm S}.
\end{align}
\noindent For $0 < k < N$ we have:
\begin{align}
    \tr \widetilde{\Gamma}_{osc}&[\rho_{\rm S} \ot \dyad{\epsilon_{k}}] = \tr \mathcal{R}_{10}(\rho_{\rm S}) + \sum_{k'=k}^{N-1} \tr \mathcal{R}_{00}\mathcal{R}_{01}^{k'-k}\mathcal{R}_{11}(\rho_{\rm S}) + \tr \mathcal{R}_{01}^{N-k}\mathcal{R}_{11}(\rho_{\rm S}) \\
    &=  \tr \mathcal{R}_{10}(\rho_{\rm S}) + \tr \mathcal{R}_{00}\mathcal{R}_{11}(\rho_{\rm S}) + \mathcal{R}_{00}\mathcal{R}_{01}\mathcal{R}_{11}(\rho_{\rm S}) + \ldots + \tr \mathcal{R}_{00} \mathcal{R}_{01}^{N-k-1}\mathcal{R}_{11}(\rho_{\rm S})  \tr \mathcal{R}_{01}^{N-k}\mathcal{R}_{11}(\rho_{\rm S}) \\
    &=  \tr \mathcal{R}_{10}(\rho_{\rm S}) + \tr \mathcal{R}_{00}\mathcal{R}_{11}(\rho_{\rm S}) + \mathcal{R}_{00}\mathcal{R}_{01}\mathcal{R}_{11}(\rho_{\rm S}) + \ldots + \tr \mathcal{R}_{01}^{N-k-1}\mathcal{R}_{11}(\rho_{\rm S})\\
    &= \quad \ldots \text{repeat $(N-k-1)$ times} \ldots \\
    &=  \tr \mathcal{R}_{10}(\rho_{\rm S}) + \tr \mathcal{R}_{11}(\rho_{\rm S}) \\ 
    &= \tr \rho_{\rm S}.
\end{align}
\noindent For $k = N$ we have:
\begin{align}
    \tr \widetilde{\Gamma}_{osc}[\rho_{\rm S} \ot \dyad{\epsilon_{N}}] &= \tr \mathcal{R}_{10}(\rho_{\rm S}) + \mathcal{R}_{11}(\rho_{\rm S}) = \tr \rho_{\rm S}.
\end{align}
Hence $\widetilde{\Gamma}_{osc}$ is trace-preserving which completes the proof that it is a valid thermal operation. 

Let us now take the limit $N\rightarrow \infty$ and study the action of the map $\widetilde{\Gamma}_{osc}$. Notice that due to (\ref{eq:x2}) for all positive $\delta$ we have that $\mathcal{R}_{01}^N(\rho_{\rm S}) \rightarrow 0$ for $N \rightarrow \infty$ for all $\rho_{\rm S}$. Hence in the limit the map $\widetilde{\Gamma}_{osc}$ becomes:
\begin{align}
    &\rho_{\rm S} \ot \dyad{\epsilon_0}_{\rm W} \rightarrow \sum_{i = 0}^{\infty} \mathcal{R}_{00} \mathcal{R}_{01}^{i}(\rho_{\rm S}) \ot \dyad{\epsilon_{i}}_{\rm W},  \\  
    &\rho_{\rm S} \ot \dyad{\epsilon_k}_{\rm W} \rightarrow \mathcal{R}_{10}(\rho_{\rm S}) \ot \dyad{\epsilon_{k-1}}_{\rm W} + \sum_{i = 0}^{\infty} \mathcal{R}_{00} \mathcal{R}_{01}^{i} \mathcal{R}_{11}(\rho_{\rm S}) \ot \dyad{\epsilon_{i+k}}_{\rm W},
\end{align}
where $0 < k < \infty$, and hence it is indistinguishable from $\Gamma_{osc}$.

\subsubsection{The average work under $\Gamma_{osc}$}
\noindent It is easy to see that the action of $\Gamma_{osc}$ does not depend on the initial state of the battery $\rho_{\rm W}$, provided that the state does not have support on the vacuum state $\dyad{\epsilon_0}_{\rm W}$. This allows to easily compute the average work $\langle w \rangle_N$ associated with $\Gamma_{osc}$ as well as its action on the system $S$.

Let us begin by first studying the corresponding map $\widetilde{\Gamma}^f_{osc}$ acting on a finite $N+1$ dimensional harmonic oscillator battery. The energy change of the battery after applying $\widetilde{\Gamma}_{osc}$ to an input state $\rho_{\rm SW} = \rho_{\rm S} \ot \dyad{\epsilon_k}_{\rm}$ will in general be different for different $k$. The energy change $\Delta E_W^{(k)}$ associated with battery starting in eigenstate $\dyad{\epsilon_k}_{\rm W}$ is then given by:
\begin{align}
\Delta E_{\rm W}^{(k)} = \tilde{E}_{\rm W}^{(k)} - E_{\rm W}^{(k)},
\end{align}
where $E_{\rm W}^{(k)} = \langle \epsilon_k|H_{\rm W}|\epsilon_k \rangle = \epsilon_k$, $\tilde{E}_{\rm W}^{(k)} = \tr \left[H_{\rm W} \rho'_{\rm W}\right]$, $H_{\rm W}$ is the finite harmonic oscillator Hamiltonian as in (\ref{app_fham}) and $\rho_{\rm W}' = \tr_{\rm S} \widetilde{\Gamma}^f_{osc} \left[\rho_{SW}\right]$. Denoting the number of energy levels above $\epsilon_k$ with $n$, i.e $n = N-k$, we can compute the final energy $\tilde{E}_{\rm W}^{(k)}$ as: 

\begin{align}
\tilde{E}_{\rm W}^{(k)} &= (\epsilon_k - \delta) \tr \mathcal{R}_{10} \left[\rho\right] + \sum_{j=0}^{n-1} (\epsilon_k + j \cdot \Delta) \, \tr \mathcal{R}_{00} \mathcal{R}_{01}^{j} \mathcal{R}_{11} \left[\rho_{\rm S}\right] + (\epsilon_k + n \cdot \delta)\, \tr \mathcal{R}_{01}^{n} \mathcal{R}_{11} \left[\rho_{\rm S} \right] \\ 
&= (\epsilon_k - \delta) \cdot \big( \tr \mathcal{R}_{10} \left[\rho_{\rm S}\right] +  \sum_{j=0}^{n-1} \tr \mathcal{R}_{00} \mathcal{R}_{01}^{j} \mathcal{R}_{11} \left[\rho_{\rm S}\right] + \tr \mathcal{R}_{10}^n  \mathcal{R}_{11} \left[\rho_{\rm S}\right] \big) + \\
&\qquad \delta \cdot \sum_{j=0}^{n-1} \left(j+1 \right) \, \tr \mathcal{R}_{00} \mathcal{R}_{01}^{j} \mathcal{R}_{11} \left[\rho_{\rm S}\right] + \delta\cdot (n+1) \cdot \, \tr \mathcal{R}_{01}^{n} \mathcal{R}_{11} \left[\rho_{\rm S} \right]  \\ 
&= (\epsilon_k - \delta) \cdot \left( \tr\mathcal{R}_{10} \left[\rho_{\rm S}\right] + \tr \mathcal{R}_{11} \left[\rho_{\rm S}\right] \right) + \delta \cdot  \sum_{j=0}^{n-2} (j+1) \tr \mathcal{R}_{00} \mathcal{R}_{01}^{j} \mathcal{R}_{11} [\rho_{\rm S}]  + \\ 
&\qquad n \cdot \tr \mathcal{R}_{00} \mathcal{R}_{01}^{n-1} \mathcal{R}_{11} [\rho_{\rm S}] +  n \cdot \tr \mathcal{R}_{01} \mathcal{R}_{01}^{n-1} \mathcal{R}_{11} [\rho_{\rm S}] + \tr \mathcal{R}_{01}^n \mathcal{R}_{11} [\rho_{\rm S}] \\
&\stackrel{(2)}{=} (\epsilon_k - \delta) + \delta \cdot \sum_{j=0}^{n-2} (j+1) \tr \mathcal{R}_{00} \mathcal{R}_{01}^{j} \mathcal{R}_{11} [\rho_{\rm S}] + \delta \cdot n \cdot \tr \mathcal{R}_{01}^{n-1} \mathcal{R}_{11} [\rho_{\rm S}] + \tr \mathcal{R}_{01}^n  \mathcal{R}_{11} [\rho_{\rm S}] \\
& = (\epsilon_k - \delta) + \tr \mathcal{R}_{11}[\rho_{\rm S}] +  \tr \mathcal{R}_{01}  \mathcal{R}_{11}[\rho_{\rm S}]  + \tr \mathcal{R}_{01}^2  \mathcal{R}_{11}[\rho_{\rm S}] + \ldots + \tr \mathcal{R}_{01}^n [\rho_{\rm S}]  \mathcal{R}_{11}[\rho_{\rm S}] \\ 
&= (\epsilon_k - \delta) + \sum_{j=0}^n \tr \mathcal{R}_{01}^j \mathcal{R}_{11}[\rho_{\rm S}] \\ 
&= \epsilon_k + \delta\cdot \left(\sum_{j=0}^n \tr \mathcal{R}_{01}^j \mathcal{R}_{11}[\rho_{\rm S}] - 1\right). \numberthis
\end{align}
In line ($2$) we repeatedly applied the fact that $\Gamma_{wit}$ is trace-preserving, i.e. $\sum_{k'} \tr \mathcal{R}_{kk'}[\rho_{\rm S}] = \tr \rho_{\rm S}$ for each $k$. Taking the limit of infinite battery $N \rightarrow \infty$ implies $n\rightarrow \infty$, hence the energy change associated with map ${\Gamma}^f_{osc}$ can be expressed as:
\begin{align}
\label{eq:dEW}
\Delta E_W = \langle w \rangle_{osc} = \delta \cdot \left(\sum_{i=0}^{\infty} \tr \mathcal{R}_{01}^i \circ \mathcal{R}_{11}[\rho_{\rm S}] - 1\right),
\end{align}
In Fig. \ref{fig4} we presented a graphical scheme representing the action of map $\Gamma^f_{osc}$ on $\dyad{\epsilon_k}$. Notice that this energy change (and hence work) does not depend on the choice of the initial state $\dyad{\epsilon_k}$. Due to the linearity of $\Gamma_{osc}$ this means that the map would yield the same amount of work if we instead chose any combination of energy levels, i.e. any initial state $\rho_W$. 

The expression for average work (\ref{eq:dEW}) can be further simplified if $\rho_{\rm S}$ is a diagonal state. In that case the action of subchannels $\{R_{kk'}\}$ can be fully described by a set of substochastic matrices $\{R_{kk'}\}$ acting on vectors composed of the diagonal parts of the states. In that case the infinite matrix geometric series in (\ref{eq:dEW}) can be explicitly computed, i.e. $\sum_{i=0}^{\infty} R_{01}^i = (\mathbb{1} - R_{01})^{-1}$. Denoting the diagonal of $\rho_{\rm S}$ with  a vector $\mathbf{x}$ the average work can be expressed as:
\begin{align}
    \langle w \rangle_{osc} = \delta \cdot \left(\mathbf{1}^{\rm T} \left(\mathbb{1} - R_{01}\right)^{-1} R_{11} \mathbf{x} - 1 \right),
\end{align}
where $\mathbf{1}^T = (1, 1, \ldots, 1)$ is the (horizontal) identity vector and $\mathbb{1}$ is the identity matrix. 

\begin{figure*}
 \centering
       \includegraphics[width=0.8\textwidth]{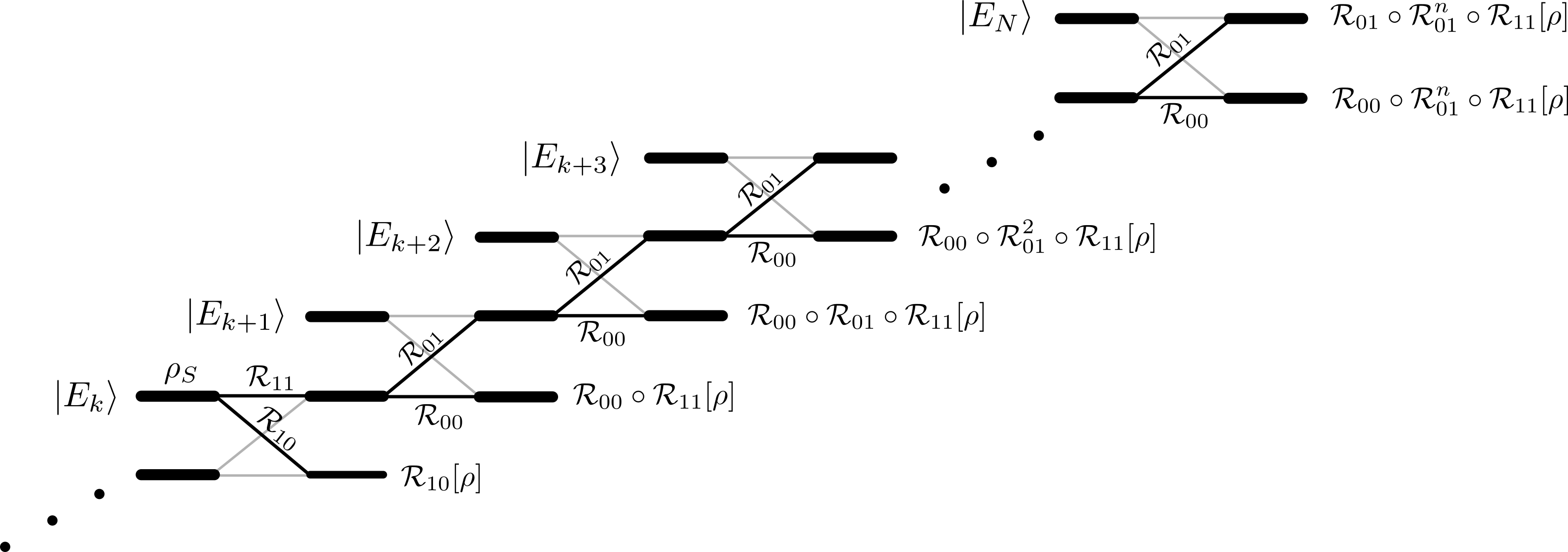}
  \caption{Diagram representing the action of map $\Gamma_N$ on one of the battery eigenstates $\dyad{\epsilon_k}_W$. Black thick lines correspond to the part of channel $\Gamma_N$ which is effectively used in the transformation. Note that the energy spacing between subsequent battery levels is constant and equal to $\Delta$, that is $\epsilon_{k+m} = \epsilon_k + m \cdot \Delta$.}
  \label{fig4}
\end{figure*} 

\subsubsection{The map $\Gamma_{osc}$ satisfies the ETI property}
Here we are going to show that for diagonal states the map $\Gamma_{osc}$ satisfies the ETI property. Hence we are going to show that transition probabilities $\{r(s'k'|sk)\}$ corresponding to $\Gamma_{osc}$ satisfy:
\begin{align}
    \label{eq:eti_2}
    r(s'k'|sk) = r(s',k'+n|s,k+n)
\end{align}
for all $s,s'$ and integers $n$ such that $k'+n \geq 0$ and $k+n  \geq k_{\text{min}}$ with $k_{\text{min} = 1}$. 

Notice first that due to (\ref{mt:rRcorr}) the condition (\ref{eq:eti_2}) can be equivalently written using battery subchannels $\{\mathcal{R}_{kk'}\}$ as:
\begin{align}
    \mathcal{R}_{kk'} = \mathcal{R}_{k+n, k'+n},
\end{align}
for all integers $n$ such that $k'+n \geq 0$ and $k+n  \geq 1$. Consider the battery subchannels associated with the map $\Gamma_{osc}$ for $k \geq k_{\text{min}} = 1$, i.e.:
\begin{align}
    \mathcal{R}_{kk'} &= \sum_{i = 0}^{\infty} \mathcal{R}_{00}\mathcal{R}_{01}^i\mathcal{R}_{11} \delta_{i+k,k'} + \mathcal{R}_{10} \delta_{k-1, k'} = \mathcal{R}_{00}\mathcal{R}_{01}^{k'-k}\mathcal{R}_{11} + \mathcal{R}_{10} \delta_{k-1, k'}
\end{align}
The shifted battery subchannels are given by:
\begin{align}
    \mathcal{R}_{k+n,k'+n} =  \sum_{i = 0}^{\infty} \mathcal{R}_{00}\mathcal{R}_{01}^i\mathcal{R}_{11} \delta_{i+k+n,k'+n} + \mathcal{R}_{10} \delta_{k+n-1, k'+n} = \mathcal{R}_{00}\mathcal{R}_{01}^{k'-k}\mathcal{R}_{11} + \mathcal{R}_{10} \delta_{k-1, k'} = \mathcal{R}_{kk'}.
\end{align}
Hence we can conclude that the map $\Gamma_{osc}$ satisfies the ETI property.

\end{document}